\newtheorem{remark}{Remark}
\newenvironment{proof1}{
    \noindent {\em Proof }}{\hfill$\Box$}
\begin{document}

\title{\bf Self-similar solutions for reversing interfaces in
the nonlinear diffusion equation with constant absorption}

\author{Jamie M. Foster$^{1}$ and Dmitry E. Pelinovsky$^{1,2}$ \\
    {\em $^1$ Department of Mathematics and Statistics, McMaster University, Hamilton ON, Canada, L8S 4K1}\\
    {\em $^2$ Department of Applied Mathematics, Nizhny Novgorod State Technical University,} \\
    {\em 24 Minin Street, Nizhny Novgorod, 603950, Russia}}

\maketitle

\begin{abstract}
We consider the slow nonlinear diffusion equation subject to a constant absorption rate and construct local self-similar solutions for reversing (and anti-reversing) interfaces, where an initially advancing (receding) interface gives way to a receding (advancing) one. We use an approach based on invariant manifolds, which allows us to determine the required asymptotic behaviour for small and large values of the concentration. We then `connect' the requisite asymptotic behaviours using a robust and accurate numerical scheme. By doing so, we are able to furnish a rich set of self-similar solutions for both reversing and anti-reversing interfaces.
\end{abstract}

\begin{keywords}
Nonlinear diffusion equation, slow diffusion, strong absorption, self-similar solutions,
invariant manifolds, reversing interface, anti-reversing interface
\end{keywords}

\begin{AMS}\end{AMS}

\pagestyle{myheadings}
\thispagestyle{plain}
\markboth{J. M. FOSTER AND D. E. PELINOVSKY}{SELF-SIMILAR REVERSING INTERFACES}

\section{Introduction}

We address reversing and anti-reversing properties of interfaces in the following one-dimensional
slow diffusion equation with strong absorption
\begin{equation}
\label{heat}
\frac{\partial h}{\partial t} = \frac{\partial}{\partial x} \left( h^m \frac{\partial h}{\partial x} \right) - h^n,
\end{equation}
where $h$ is a positive function, \emph{e.g.}, a concentration of some species, and
$x$ and $t$ denote space and time, respectively. Restricting the exponents to the ranges $m > 0$ and $n < 1$ corresponds to the slow diffusion and strong absorption cases respectively.

Interfaces --- sometimes termed `contact lines' by fluid dynamicists ---
correspond to the points on the $x$-axis, where regions for positive solutions for $h$
are connected with the regions where $h$ is identically zero. The initial data $h|_{t=0} = h_0$ is assumed to be compactly supported. The motion of the interfaces is determined from conditions that require the function $h$ be continuous and the flux of $h$ through the interface to be zero \cite{Galaktionov}.

In the presence of slow diffusion ($m > 0$), the interfaces of compactly supported
solutions have a finite propagation speed \cite{Herrero}.
In the presence of strong absorption ($n < 1$), the solution vanishes for all $x$ after some finite time, which is referred to as finite-time extinction \cite{Kalash,Kersner}.
Therefore, the interfaces for a compactly supported initial data coalesce in a finite time. Depending on the shape of $h_0$ and the values of $m$ and $n$, the interfaces may change their direction of propagation in a number of different ways.
It was proved by Chen {\em et al.} \cite{Chen} that bell-shaped initial data
remains bell-shaped for all time before the compact support shrinks to a point. However, the possible types of
dynamics of interfaces for this bell-shaped data were not identified in \cite{Chen}.

The slow diffusion equation with the strong absorption (\ref{heat})
describes a variety of different physical processes, including: (i) the slow spreading of
a slender viscous film over a horizontal plate subject to the action of gravity and a constant
evaporation rate \cite{Acton} (when $m = 3$ and $n = 0$); (ii) the dispersion of a biological population subject to a constant
death-rate \cite{popn} (when $m = 2$ and $n = 0$); (iii) non-linear heat conduction along a rod with a constant
rate of heat loss \cite{Herrero} (when $m = 4$ and $n = 0$), and; (iv)
fluid flows in porous media with a drainage rate driven by gravity or background flows \cite{Aronson,PWH} (when $m = 1$ and
either $n = 1$ or $n = 0$).

Let us denote the location of the left interface by $x=\ell(t)$ and the limit
$x \searrow \ell(t)$, where $h$ is nonzero, by $x = \ell(t)^+$. If $m+n \geq 1$,
it was proved in \cite{Galaktionov} that the position of the interface, $\ell(t)$, is a Lipschitz continuous function of time $t$. In the case $m + n = 1$, the function $\ell(t)$ is found from the boundary conditions $h|_{x=\ell(t)} =0$
and
\begin{eqnarray}
\label{zero-flux}
\dot{\ell} = - h^{m-1} \frac{\partial h}{\partial x} \Big{|}_{x=\ell(t)^+} +  \left( h^{m-1} \frac{\partial h}{\partial x} \Big{|}_{x=\ell(t)^+} \right)^{-1},
\end{eqnarray}
where a dot denotes differentiation with respect to time. In the case $m + n > 1$,
the spatial derivatives at $x = \ell(t)^+$ are not well defined \cite{Galaktionov}.
and the zero flux condition (\ref{zero-flux}) must be rewritten as
\begin{eqnarray}
\dot{\ell} = \left\{ \begin{array}{l} \label{zero-flux2} \displaystyle - h^{m-1} \frac{\partial h}{\partial x} \Big{|}_{x=\ell(t)^+}, \quad
\mbox{\rm if} \;\; \dot{\ell} \leq 0, \\*[4mm]
\displaystyle h^{n} \left( \frac{\partial h}{\partial x} \right)^{-1} \Big{|}_{x=\ell(t)^+},
\quad \mbox{\rm if} \;\; \dot{\ell} \geq 0. \end{array} \right.
\end{eqnarray}

One could choose to close the slow diffusion equation (\ref{heat}) in a variety of ways, \emph{e.g.},
by supplying analogous conditions at the right interface, or by supplying a Dirichlet or Neumann condition elsewhere.
For instance, if $h_0$ is even in $x$, then the solution $h$ remains even in $x$ for all times,
and therefore, the slow diffusion equation (\ref{heat}) can be closed on the compact interval $[\ell(t),0]$ by imposing $\partial h/\partial x |_{x=0} = 0$.
However, such details do not concern us here because we are interested in studying the behaviour of solutions to
(\ref{heat}) local to the left interface $x = \ell(t)$ only.

We reiterate here the main question on the possible types of dynamics in the slow diffusion equation
with the strong absorption (\ref{heat}). Working with bell-shaped, compactly supported initial data $h_0$, one can anticipate
\emph{a priori} that the compact support of the bell-shaped solution can either: (i) decrease monotonically in time, or;
(ii) first expand and then subsequently shrink, or; (iii) have more complicated behaviour
where multiple instances of expansion and contraction are observed. This phenomenon brings
about both `reversing' and `anti-reversing' dynamics of an interface. Here the term `reversing' describes
a scenario where the velocity of the left interface $x = \ell(t)$ satisfies $\dot{\ell}<0$ before the reversing time
and $\dot{\ell}>0$ after the reversing time, whereas the term `anti-reversing' refers to the opposite scenario with
$\dot{\ell}>0$ before and $\dot{\ell}<0$ after the reversing time.

The first analytical solution to (\ref{heat}) exhibiting a reversing interface was obtained by Kersner \cite{Kersner} for the case $m + n = 1$.
This explicit solution takes the form
\begin{equation} \label{ker1}
u^{m}(x,t) = \frac{m}{2(m+1)(m+2) t} \left[ C t^{\frac{2}{m+2}} - (m+2)^2 t^2 - x^2 \right]_+,
\end{equation}
where the plus subscript denotes the positive part of the function, and $C > 0$ is an arbitrary parameter.
The interfaces are located symmetrically at $x = \pm \ell(t)$ with
\begin{equation} \label{ker2}
\ell(t) = \sqrt{C t^{\frac{2}{m+2}} - (m+2)^2 t^2}.
\end{equation}

More recently, Foster {\em et al.} \cite{Foster} considered the case $m + n > 1$
and explored the asymptotic and numerical construction of self-similar solutions
for equation (\ref{heat}) --- some related, yet different, self-similar solutions to other
nonlinear diffusion equations have previously been constructed using a combination of 
asymptotic analysis and numerical shooting; see, \emph{e.g.}, \cite{Foster2,Zhang}.

The self-similar solutions capture the relevant dynamics of reversing interfaces near the corresponding points in the space
and time (which can be placed at the origin of $x$ and $t$, without the loss of generality). Based on
a classical point symmetry analysis of the porous medium equation (\ref{heat}) --- provided in \cite{Gandarias} ---
the authors of \cite{Foster} found that the reversing interfaces can be described via the self-similar reductions
\begin{equation}
\label{reduction}
h(x,t) = \left( \pm t \right)^{\frac{1}{1-n}} \; H_{\pm}(\xi), \quad \xi = x (\pm t)^{-\frac{m+1-n}{2(1-n)}}, \quad \pm t > 0,
\end{equation}
where the functions $H_{\pm}$ satisfy a pair of second-order differential equations.

In this paper, we explore the case $n = 0$ only, when the functions $H_{\pm}$ of
the self-similar reduction (\ref{reduction}) satisfy the second-order differential equations
\begin{equation}
\label{ode}
\frac{d}{d \xi} \left( H_{\pm}^m \frac{d H_{\pm}}{d \xi} \right)
\pm \frac{m+1}{2} \, \xi \frac{d H_{\pm}}{d \xi} = 1 \pm H_{\pm}.
\end{equation}
We seek positive solutions $H_{\pm}$ of the differential equations (\ref{ode})
on the semi-infinite line $[A_{\pm},\infty)$ that satisfy the following conditions:
\begin{eqnarray}
\label{(i)}
\mbox{(i):} & & \qquad H_{\pm}(\xi) \to 0 \quad \mbox{\rm as} \quad \xi \to A_{\pm},\\
\label{(ii)} \mbox{(ii):} & & \qquad H_{\pm}(\xi) \;\; \mbox{\rm is monotonically increasing for all} \;\; \xi > A_{\pm}, \\
\label{(iii)}
\mbox{(iii):} & & \qquad H_{\pm}(\xi) \to +\infty \quad \mbox{\rm as} \quad \xi \to +\infty.
\end{eqnarray}
These first of these conditions, (\ref{(i)}), are the self-similar counterparts of the condition $h|_{x=\ell(t)}=0$ for the equation (\ref{heat}). In addition,
the behaviour of $H_-(\xi)$ and $H_+(\xi)$ in the far-field must be matched from the condition
\begin{equation} \label{far-field-matching}
\lim_{\xi \to \infty} \frac{H_+(\xi)}{H_-(\xi)} = 1.
\end{equation}
The requirement (\ref{far-field-matching}) is tantamount to enforcing that the solution
$h$ to the slow diffusion equation (\ref{heat}) does not `jump' as $t$ passes through zero ---
this can be verified by taking both the limits $t \searrow 0$ and $t \nearrow 0$ in
the self-similar reduction (\ref{reduction}).

Existence of solutions to the differential equations (\ref{ode}) on $[A_{\pm},\infty)$
with the required behaviour (\ref{far-field-matching}) implies, via the self-similar reduction (\ref{reduction}),
the existence of a reversing (if $A_\pm>0$) or anti-reversing (if $A_\pm<0$) left interface at $x = \ell(t)$, which behaves like
\begin{equation}
\label{coordinates}
\ell(t) = A_{\pm} (\pm t)^{\frac{m+1}{2}}, \quad \pm t > 0,
\end{equation}
after placing the reversing point at the origin of the space--time plane.
If $m > 1$, the velocity of the interface, $\dot{\ell}(t)$, changes sign continuously as $t$
passes through zero.

By combining formal asymptotic constructions of the solutions
$H_{\pm}$ near the small and large values with a numerical shooting method,
the authors of \cite{Foster} claimed that for integer values of $m = 2,3,4$,
there exists a unique positive value of $A_-$, which leads to a monotonically growing function
$H_-$ on the entire semi-axis $[A_-,\infty)$. Using the far-field matching condition (\ref{far-field-matching}),
a unique, monotonically growing function $H_+$ is found on $[A_+,\infty)$
for a positive value of $A_+$.
These solutions correspond to a reversing left interface $x = \ell(t)$ for $m = 2,3,4$.
No solutions exhibiting an anti-reversing interface were found in \cite{Foster}.

In the present work, we address the same problem using a dynamical system
framework \cite{ode-text,Wiggins}. The dynamical system theory allows us
to justify the formal asymptotic approximations of $H_{\pm}$ for small
and large values, as well as to set up an accurate and robust numerical
procedure for furnishing appropriate solutions to the differential equations (\ref{ode}).
Qualitatively, we recover the results of \cite{Foster} for $m = 3,4$, but with
a better precision, and we generalize these results for all non-integer values of $m>1$. In addition,
we demonstrate that the result for $m=2$, reported in \cite{Foster}, is incorrect and
no self-similar reversing interface solutions exist for $m = 2$. In addition,
we discover new reversing and anti-reversing interface solutions of
the same differential equations (\ref{ode}) for other values of $m$.

Our approach explores invariant manifolds for the singular
differential equations after appropriate unfolding (which is sometimes
referred to as the blow-up technique \cite{Dimortier,Sandstede}).
The main analytical results of this work are given by the following theorems.\\

\begin{theorem}
\label{theorem-center-manifold-zero}
For every $m > 1$ and $A_{\pm} \neq 0$, there exists a unique solution of the
differential equation (\ref{ode}) such that
$H_{\pm}(\xi) \to 0$ as $\xi \to A_{\pm}$. If $\pm A_{\pm} > 0$,
this unique solution has the following asymptotic behaviour
\begin{equation}
\label{center-manifold-zero-asymptotics}
H_{\pm}(\xi) = \pm \frac{2}{(m+1) A_{\pm}} (\xi - A_{\pm}) + \mathcal{O}((\xi - A_{\pm})^{\min\{2,m\}}), \quad
\mbox{\rm as} \quad \xi \to A_{\pm},
\end{equation}
whereas if $\pm A_{\pm} < 0$, it has the following asymptotic behaviour
\begin{equation}
\label{stable-manifold-zero-asymptotics}
H_{\pm}(\xi) = \left( \mp \frac{m (m+1) A_{\pm}}{2} (\xi - A_{\pm}) \right)^{\frac{1}{m}} +
\mathcal{O}(\xi - A_{\pm}), \quad \mbox{\rm as} \quad \xi \to A_{\pm}.
\end{equation}
\end{theorem}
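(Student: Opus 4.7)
The plan is to unfold the singularity of (\ref{ode}) at $H = 0$ into a regular equilibrium of a three-dimensional autonomous system and then to read off the two claimed asymptotics from its invariant manifolds. Introducing the flux $Q := H^m\, dH/d\xi$, writing $P := H$, and rescaling time via $d\xi/d\tau = H^m$, equation (\ref{ode}) becomes
\[
P' = Q, \qquad Q' = (1 \pm P) P^m \mp \tfrac{m+1}{2}\, \xi Q, \qquad \xi' = P^m,
\]
with $' = d/d\tau$. The singular boundary point $(\xi, H) = (A_\pm, 0)$ is then a regular equilibrium $E_\pm := (0, 0, A_\pm)$ for every value of $A_\pm$, and the solutions demanded by the theorem correspond bijectively to the orbits of this system that enter $E_\pm$ as $\tau \to -\infty$ while remaining in $\{P > 0\}$.

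Linearising at $E_\pm$ produces the spectrum $\{0, 0, -\lambda\}$ with $\lambda := \pm\tfrac{m+1}{2}A_\pm$: the two-dimensional kernel is spanned by the $P$- and $\xi$-axes, and the hyperbolic eigenvector is $(1, -\lambda, 0)$. Hence the sign of $\lambda$, equivalently the sign of $\pm A_\pm$, controls whether the hyperbolic direction is stable or unstable, and this is exactly the dichotomy of the theorem.

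If $\pm A_\pm > 0$ (so $\lambda > 0$), the equilibrium carries no unstable direction, so every backward-entering orbit must lie in the two-dimensional local centre manifold, which I represent as a graph $Q = h_c(\xi, P)$ with $h_c(A_\pm, 0) = 0$ and $D h_c(A_\pm, 0) = 0$. Solving the invariance equation perturbatively yields $h_c(\xi, P) = \lambda^{-1} P^m + \mathcal{O}(P^{m+1} + (\xi - A_\pm) P^m)$; the reduced flow is $\xi' = P^m$, $P' = \lambda^{-1} P^m (1 + o(1))$, whence $d\xi/dP = \lambda + o(1)$ and integration gives $\xi - A_\pm = \lambda P + \mathcal{O}(P^{\min\{m, 2\}})$. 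Inverting produces (\ref{center-manifold-zero-asymptotics}); the error exponent arises in the perturbative expansion from a nonlinear-diffusion contribution of size $\eta^m$ competing with a residual advection/source mismatch of size $\eta^2$. In the complementary case $\pm A_\pm < 0$ (so $\lambda < 0$), the direction $-\lambda$ is positive and $E_\pm$ possesses a one-dimensional unstable manifold tangent to $(1, -\lambda, 0)$. The unique orbit on it with $P > 0$ satisfies $H^{m-1}\, dH/d\xi \to -\lambda$, and integrating $\tfrac{d}{d\xi} H^m = -m\lambda (1 + o(1))$ produces (\ref{stable-manifold-zero-asymptotics}). Orbits on the centre manifold in this second regime approach $E_\pm$ only as $\tau \to +\infty$ and therefore do not compete for uniqueness.

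Uniqueness in each regime then follows once one verifies that every positive solution of (\ref{ode}) with $H(A_\pm) = 0$ corresponds to such a backward-entering orbit of the unfolded system: in the first regime this orbit must lie on the centre manifold, and the constraint $\xi \to A_\pm$ selects a single trajectory; in the second it must lie on the one-dimensional unstable manifold. I expect the main technical hurdle to be the regularity of the nonlinearity $P^m$ when $m > 1$ is not an integer, since $P \mapsto P^m$ is only $C^{\lfloor m \rfloor}$ at the origin: for $1 < m < 2$ one must invoke a $C^1$ invariant-manifold theorem, or equivalently perform a direct Lyapunov--Perron fixed-point construction on a suitable weighted H\"older space, in place of the classical smooth version. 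The familiar non-uniqueness of the centre manifold itself is not an obstruction, because only its asymptotic expansion --- which is unique --- is required to extract the asymptotic profile of the selected orbit.
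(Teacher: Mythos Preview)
Your proposal is correct and follows essentially the same approach as the paper: the same variables (your $P,Q$ are the paper's $u,w$), the same desingularising time change $d\xi/d\tau = H^m$, the same three-dimensional autonomous system with equilibria $(A_\pm,0,0)$, and the same dichotomy between a centre-manifold orbit when $\pm A_\pm>0$ and an unstable-manifold orbit when $\pm A_\pm<0$. Your remark about the limited regularity of $P\mapsto P^m$ for non-integer $m\in(1,2)$ is apt and is handled in the paper by invoking a $C^1$ invariant-manifold theorem (Chicone, Theorem~4.1), exactly as you anticipate.
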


\begin{theorem}
\label{theorem-center-manifold-infinity}
There exists a one-parameter family of solutions of the differential equation (\ref{ode}) for the lower sign such that
$H_-(\xi) \to +\infty$ as $\xi \to +\infty$, and this family has the following asymptotic behaviour
\begin{equation}
\label{center-manifold-infinity-asymptotics}
H_-(\xi) = \left( \frac{\xi}{x_0} \right)^{\frac{2}{m+1}} \left( 1 + \mathcal{O}(\xi^{-1}) \right), \quad
\mbox{\rm as} \quad \xi \to +\infty,
\end{equation}
where $x_0 > 0$ is an arbitrary parameter. There exists a two-parameter family of solutions
of the differential equation  (\ref{ode}) for the upper sign such that
$H_+(\xi) \to +\infty$ as $\xi \to +\infty$, and this family has the
same asymptotic behaviour (\ref{center-manifold-infinity-asymptotics}) for some $x_0 > 0$.
\end{theorem}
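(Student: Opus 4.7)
The plan is to blow up the singularity at $\xi=+\infty$ via a change of variables that converts the second-order equation (\ref{ode}) into a regular autonomous system, then read off the stated one- and two-parameter families from the normally hyperbolic structure of an invariant slow manifold at infinity. First I would set $\tau=\log\xi$ and introduce the blown-up coordinates
\[
U=\xi^{-2/(m+1)}H_\pm,\qquad V=\xi H_\pm'/H_\pm,\qquad T=\xi^{-2/(m+1)},
\]
chosen so that the anticipated leading behaviour $H_\pm\sim(\xi/x_0)^{2/(m+1)}$ corresponds to $U\to x_0^{-2/(m+1)}$, $V\to 2/(m+1)$, $T\to 0$. A direct calculation transforms (\ref{ode}) into
\begin{align*}
\dot U &= U\bigl(V-\tfrac{2}{m+1}\bigr),\\
\dot T &= -\tfrac{2}{m+1}\,T,\\
\dot V &= V-(m+1)V^2+U^{-m-1}\pm\frac{U^{-m}\bigl(1-\frac{m+1}{2}V\bigr)}{T},
\end{align*}
where the dot denotes $d/d\tau$. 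The apparent singularity at $T=0$ is removed by the further substitution $V=\frac{2}{m+1}+TW$, after which $1-\frac{m+1}{2}V=-\frac{m+1}{2}TW$ cancels the factor of $1/T$; a time rescaling by $T$ then produces a smooth vector field in $(U,W,T)$ up to the invariant plane $\{T=0\}$.

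On this plane, the reduced system has a one-dimensional curve of equilibria $\{(U,W_\ast(U),0):U>0\}$, where
\[
W_\ast(U)=\pm\frac{2}{m+1}\left(U^{-1}-\frac{2U^m}{m+1}\right)
\]
solves the quasi-stationary balance obtained from setting $\dot W|_{T=0}=0$. Linearising at any such equilibrium $(U_0,W_\ast(U_0),0)$ gives two zero eigenvalues---one tangent to the curve of equilibria, one along $T$---and a single nonzero eigenvalue $\mp\frac{m+1}{2}U_0^{-m}$, which is \emph{negative} for the upper sign and \emph{positive} for the lower sign. By Fenichel's theorem, the curve of equilibria extends uniquely to a smooth, two-dimensional, normally hyperbolic slow invariant manifold for small $T>0$, which is attracting for the upper sign and repelling for the lower sign.

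The parameter count then gives the dimensions of the families in the theorem. Because (\ref{ode}) is second order, the full set of solutions forms a two-parameter family on any slice $\{T=T_0\}$ of $(U,W,T)$-space. For the upper sign the slow manifold is attracting, so an open (hence two-parameter) neighbourhood of trajectories is drawn to it, and the slow flow $\dot U=UTW_\ast(U)$ together with $\dot T=-\frac{2}{m+1}T$ forces $U\to U_\infty>0$ as $\tau\to+\infty$ with $x_0=U_\infty^{-(m+1)/2}$ free; this yields the asserted two-parameter family with asymptotic (\ref{center-manifold-infinity-asymptotics}). For the lower sign the slow manifold is repelling transversally, so only its one-dimensional intersection with the slice $\{T=T_0\}$ is asymptotic to the equilibrium curve, producing the one-parameter family parameterised by $x_0$. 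The relative error $\mathcal{O}(\xi^{-1})$ is then recovered by expanding the slow manifold to next order in $T$ and substituting $T=\xi^{-2/(m+1)}$ into the slow flow for $U$.

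The main obstacle will be applying Fenichel's theorem in the non-standard setting of a one-dimensional \emph{line} of equilibria with a two-dimensional centre, rather than an isolated hyperbolic fixed point. This requires separately verifying normal hyperbolicity transverse to the curve of equilibria, constructing the smooth extension of the slow manifold to $T>0$ with estimates uniform in $U$ on compact subsets, and---in the repelling case---ruling out exceptional trajectories that might approach the equilibrium curve off the slow manifold, so that the family is genuinely one-dimensional.
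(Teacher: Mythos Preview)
Your approach is correct and follows essentially the same strategy as the paper: blow up the singularity at infinity, obtain a smooth autonomous system with a one-dimensional curve of equilibria, and read off the one- versus two-parameter count from the sign of the single nonzero transverse eigenvalue $\mp\tfrac{m+1}{2}(\cdot)$. The differences are in packaging rather than substance. The paper uses the variables $x=\xi H^{-(m+1)/2}$, $y=1/H$, $z=H^{-(m+3)/2}H^m H'$ together with a time rescaling $d\tau/ds=y^{(m+1)/2}$, which yields a \emph{quadratic} vector field (their system~(\ref{infinity-dynamics})) with equilibria $(x_0,0,0)$; it then applies the classical center/stable manifold theorem pointwise at each $(x_0,0,0)$ (Propositions~\ref{proposition-center-manifold-infinity} and~\ref{proposition-stable-manifold-infinity}). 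Your coordinates $(U,W,T)$ instead produce a slow--fast system to which you apply Fenichel's theorem for the whole normally hyperbolic curve. The paper's route buys a polynomial vector field and avoids the non-standard Fenichel issue you flag (line of equilibria with two center directions); your route makes the slow--fast structure explicit and handles all $x_0$ at once. One small point: your sketch of the $\mathcal{O}(\xi^{-1})$ remainder via ``next order in $T$'' actually gives $\mathcal{O}(T)=\mathcal{O}(\xi^{-2/(m+1)})$, so you would need to be more careful there to match the stated rate.
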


\vspace{0.25cm}

The main problem is to connect the two asymptotic behaviours of the differential equations
(\ref{ode}) which are defined for small and large values of $H_{\pm}$ by Theorems
\ref{theorem-center-manifold-zero} and \ref{theorem-center-manifold-infinity}.
We know from \cite{Foster} that there exists an exact solution of the connection problem
if $A_{\pm} = 0$. This exact solution is given by
\begin{equation}
\label{exact-solution}
H_{\pm}(\xi) = \left( \frac{m+1}{2} \xi^2 \right)^{\frac{1}{m+1}}, \quad \xi \in (0,\infty).
\end{equation}
However, the exact profile (\ref{exact-solution}) corresponds to a solution to
the slow diffusion equation (\ref{heat}) with an interface that remains stationary
for all time, thus we do not examine it further here.

Some connection results for nonzero values of $A_{\pm}$ are available for the differential equation (\ref{ode})
in Lemmas \ref{lemma-connection} and \ref{lemma-continuation} below. Although
these results are not sufficient for an
analytical solution of the connection problem, we can set up a numerical method,
which detects connections of the two solutions described
in Theorems \ref{theorem-center-manifold-zero} and \ref{theorem-center-manifold-infinity}.

The remainder of the paper is organized as follows. The unfolding and invariant manifolds
for the differential equations (\ref{ode}) near small values of $H_{\pm}$ are described in \S2.
The corresponding results near large values of $H_{\pm}$
are reported in \S3. The connection problem between
the invariant manifolds near small and large values
of $H_{\pm}$ is considered in \S4. The relevant numerical technique
is implemented in \S5, where the main findings are discussed and compared with
the previous results from \cite{Foster}. The paper is concluded in \S6
with a discussion of the relevance of the self-similar solutions to the dynamics of (\ref{heat}).

\section{Invariant manifolds for small values of $H_{\pm}$}

We shall rewrite the scalar equations (\ref{ode}) as vector systems
for variables $u = H_{\pm}$ and $w = H_{\pm}^m \frac{d H_{\pm}}{d \xi}$. In the interests of simplicity of notation, we drop the plus and minus subscripts in the definitions of the variables $u$ and $w$. The non-autonomous vector system for $u$ and $w$ is as follows:
\begin{equation}
\label{system}
\left\{ \begin{array}{l}
 \frac{du}{d\xi} = \frac{w}{u^m}, \\
 \frac{dw}{d\xi} = 1 \pm u \mp \frac{m+1}{2} \frac{\xi w}{u^m}. \end{array} \right.
\end{equation}
If $m$ is a non-integer, we require the constraint $u \geq 0$. In either case,
only positive solutions for $u$ are needed to be considered.

The interface, in self-similar variables, is assumed to be located at
$\xi = A \in \mathbb{R}$, where $u = 0$  --- a requirement of the condition on the continuity of $h(x,t)$ at $x=\ell(t)$.
Since the value $u = 0$ is singular in the non-autonomous system (\ref{system}),
we shall unfold the singularity by introducing a convenient parametrization
of solutions with a new time variable $\tau$ defined by
$$
\frac{d \xi}{d \tau} = u^m, \quad u \geq 0.
$$
The map $\tau \mapsto \xi$ is increasing and if $\xi \to A$ as $\tau \to -\infty$, then
$\xi > A$ for finite values of $\tau$.

With the parametrization $\tau \mapsto \xi$, we obtain
the autonomous dynamical system in $\mathbb{R}^3$,
\begin{equation}
\label{zero-dynamics}
\left\{ \begin{array}{l}
\dot{\xi} = u^m, \\
\dot{u} = w, \\
\dot{w} = u^m (1 \pm u) \mp \frac{m+1}{2} \xi w,\end{array} \right.
\end{equation}
where the dots stand for the derivatives of $(\xi,u,w)$ in $\tau$. In what follows,
we assume that $m \geq 1$, so that the vector field
of the dynamical system (\ref{zero-dynamics}) is continuously differentiable
near zero values of $u$. Again, $u \geq 0$ has to be enforced if $m$ is a non-integer.

The family of equilibrium points for the system (\ref{zero-dynamics})
is given by $(\xi,u,w) = (A,0,0)$, where $A \in \mathbb{R}$ is an arbitrary parameter.
If $m > 1$, each equilibrium point is associated with the Jacobian matrix
$$
\left[ \begin{array}{ccc} 0 & 0 & 0 \\ 0 & 0 & 1 \\ 0 & 0 & \mp \frac{m+1}{2} A \end{array} \right].
$$
This Jacobian matrix has a double zero eigenvalue (with two linearly independent eigenvectors)
and a simple nonzero eigenvalue $\mp \frac{m+1}{2} A$.
Therefore, the linearization of the dynamical system (\ref{zero-dynamics})
at the equilibrium point $(A,0,0)$ with $\pm A > 0$ has
a two-dimensional center manifold and a one-dimensional stable manifold, whereas
the linearized system with $\pm A < 0$ has a two-dimensional center manifold
and a one-dimensional unstable manifold. Since the dynamical system (\ref{zero-dynamics}) is $C^1$
smooth, a straightforward application of the invariant manifold theorems \cite{ode-text,Wiggins}
asserts that the equilibrium point $(A,0,0)$ with $A \neq 0$ is located at the intersection
of the two invariant manifolds, which are tangential to the invariant manifolds of the
linearized system. We formulate these results in the two following Propositions,
namely Propositions \ref{proposition-center-manifold-zero} and \ref{proposition-stable-manifold-zero}.
The relevant conclusions on the behaviour of solutions
of the differential equations (\ref{ode}) for small values of $H_{\pm}$,
expressed in Theorem \ref{theorem-center-manifold-zero}, follow from these two Propositions.\\

\begin{proposition}
\label{proposition-center-manifold-zero}
For every $m > 1$ and $A \neq 0$, there exists a two-dimensional center manifold
of the dynamical system (\ref{zero-dynamics}) near the equilibrium point
$(A,0,0)$, which can be parameterized as follows:
\begin{equation}
\label{center-manifold-zero}
W_c(A,0,0) = \left\{ w = \pm \frac{2 u^m}{(m+1) A} \left[ 1 + \mathcal{O}(\xi-A,u^{\min\{1,m-1\}}) \right],
\;\; u \in (0,\delta), \;\; \xi \in (A,A+\delta), \right\},
\end{equation}
where $\delta > 0$ is small. Dynamics of the system (\ref{zero-dynamics}) on
the center manifold $W_c(A,0,0)$ is topologically
equivalent to the dynamics at the truncated normal form
\begin{equation}
\label{center-manifold-zero-dynamics}
\left\{ \begin{array}{l}
\dot{\xi} = u^m, \\
\dot{u} = \pm \frac{2 u^m}{(m+1) A}.\end{array} \right.
\end{equation}
In particular, for every $A \neq 0$, there exists exactly one trajectory on $W_c(A,0,0)$,
which approaches the equilibrium point $(A,0,0)$ as $\tau \to -\infty$ if $\pm A > 0$ and
$\tau \to +\infty$ if $\pm A < 0$.
\end{proposition}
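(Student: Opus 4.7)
The plan is to invoke the $C^1$ center manifold theorem to obtain $W_c(A,0,0)$ as a graph $w=h(\xi,u)$, determine the leading asymptotics of $h$ from the invariance equation, and then read off the orbit structure from the resulting two-dimensional reduced flow.

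First I would note that the Jacobian's kernel is spanned by the $\xi$- and $u$-directions while its nonzero eigenvector $(0,1,\mp\tfrac{m+1}{2}A)$ is transverse to this plane. Because $m>1$ makes the vector field in (\ref{zero-dynamics}) at least $C^1$ near $u=0$, the standard center manifold theorem as cited in \cite{ode-text,Wiggins} furnishes a locally invariant two-dimensional $C^1$ manifold $W_c$ tangent to the center subspace. Consequently $W_c$ admits a graph representation $w=h(\xi,u)$ with $h(A,0)=0$ and $h_\xi(A,0)=h_u(A,0)=0$.

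Next I would substitute $w=h(\xi,u)$ into the $w$-equation of (\ref{zero-dynamics}) and use $\dot\xi=u^m$, $\dot u=h$ to derive the invariance identity
$$u^m(1\pm u)\mp\tfrac{m+1}{2}\xi\,h(\xi,u)=u^m\,h_\xi(\xi,u)+h(\xi,u)\,h_u(\xi,u).$$
Inserting the ansatz $h=c(\xi)u^m+\text{higher order}$ and balancing the $u^m$ coefficients yields $c(\xi)=\pm 2/[(m+1)\xi]$. Taylor-expanding $1/\xi$ about $A$ accounts for the $\mathcal{O}(\xi-A)$ correction, while the next-order contributions come from the $\pm u\cdot u^m$ term on the left (injecting an $\mathcal{O}(u^{m+1})$ piece) and from the $h\,h_u$ term on the right (injecting an $\mathcal{O}(u^{2m-1})$ piece); after the overall $u^m$ factor is pulled out, these collapse to the stated $\mathcal{O}(u^{\min\{1,m-1\}})$ relative error, giving (\ref{center-manifold-zero}).

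Finally, on $W_c$ the flow becomes $\dot\xi=u^m$ and $\dot u=\pm\tfrac{2u^m}{(m+1)A}\bigl[1+\mathcal{O}(\xi-A,u^{\min\{1,m-1\}})\bigr]$. Introducing the reparametrization $d\tau'=u^m\,d\tau$ (legitimate for $u>0$) desingularizes the common $u^m$ factor and turns the reduced flow into a regular planar system whose principal part is precisely (\ref{center-manifold-zero-dynamics}) after the same time change. Since this principal part is a constant vector field with unit $\xi$-speed and nonzero $u$-speed $\pm 2/[(m+1)A]$, a standard perturbation argument for regular planar flows yields topological equivalence in a neighborhood of $(A,0)$, and the unique orbit through the equilibrium in $\{u>0\}$ is identified as the image of the line $u=\pm\tfrac{2}{(m+1)A}(\xi-A)$. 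Sign analysis then shows this orbit lies in $\{\xi>A\}$ when $\pm A>0$ and is approached as $\tau\to-\infty$ (because $\dot\xi>0$), while for $\pm A<0$ it lies in $\{\xi<A\}$ and is approached as $\tau\to+\infty$. The main technical subtlety is the marginal smoothness of the vector field at $u=0$ for $m$ close to one: the center manifold is not unique as a set, but the leading asymptotic profile it carries is, which is why the conclusion is phrased via topological equivalence and an asymptotic expansion rather than a smooth conjugacy.
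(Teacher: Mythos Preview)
Your proposal is correct and follows essentially the same route as the paper: invoke the center manifold theorem to write $W_c$ as a graph $w=h(\xi,u)$, derive the invariance PDE, extract the leading $u^m$ coefficient, and analyze the reduced two-dimensional flow. The only cosmetic differences are that the paper factors $w=u^m\eta(\xi,u)$ from the outset (yielding the leading constant $\pm 2/[(m+1)A]$ directly rather than $\pm 2/[(m+1)\xi]$ followed by Taylor expansion) and argues the orbit structure directly from the sign of $\dot u$ without explicitly introducing the desingularizing time change $d\tau'=u^m\,d\tau$.
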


\vspace{0.25cm}

\begin{proof}
Existence of a two-dimensional center manifold $W_c(A,0,0)$, which is tangent to
that of the linearized system
$$
E_c(A,0,0) = \left\{ w = 0, \quad (\xi,u) \in \mathbb{R}^2 \right\},
$$
follows from Theorem 4.1 in \cite{Chiconi}. We develop an approximation
of $W_c(A,0,0)$ by writing
\begin{equation}\label{w-u-f}
w = u^m \eta(\xi,u),
\end{equation}
where $u^m \eta(\xi,u)$ is $C^1$ at
the point $(\xi,u) = (A,0)$ with zero partial derivatives.
Dynamics along $W_c(A,0,0)$ is given by the two-dimensional system
\begin{equation}
\label{zero-dynamics-center-dynamics}
\left\{ \begin{array}{l} \dot{\xi} = u^m, \\ \dot{u} = u^m \eta(\xi,u).\end{array} \right.
\end{equation}
The function $\eta$ is to be found by substituting (\ref{w-u-f})
to the three-dimensional system (\ref{zero-dynamics}) and using the two-dimensional
system (\ref{zero-dynamics-center-dynamics}). Then,
we obtain a partial differential equation
\begin{equation}
\label{zero-dynamics-center-manifold}
1 \mp \frac{m+1}{2} A \eta = \mp u \pm \frac{m+1}{2} (\xi - A) \eta + \eta \frac{\partial}{\partial u} (u^m \eta) + \frac{\partial}{\partial \xi} (u^m \eta).
\end{equation}
If $m > 1$ and $u^m \eta(\xi,u)$ is a $C^1$ function at $(\xi,u) = (A,0)$ with zero partial derivatives, then equation
(\ref{zero-dynamics-center-manifold}) has a solution such that
\begin{equation}
\label{approximation-center-manifold-zero}
\eta(\xi,u) = \pm \frac{2}{(m+1) A} + \mathcal{O}(\xi-A,u^{\min\{1,m-1\}}).
\end{equation}
The representation (\ref{w-u-f}) and (\ref{approximation-center-manifold-zero}) is equivalent to
(\ref{center-manifold-zero}). Substituting (\ref{approximation-center-manifold-zero})
to (\ref{zero-dynamics-center-dynamics}) and truncating
the remainder term, we obtain the truncated normal form (\ref{center-manifold-zero-dynamics}).

From the second equation of the system (\ref{center-manifold-zero-dynamics}), it follows that
if $\pm A > 0$, then $\dot{u} > 0$ such that
$u(\tau) \to 0$ as $\tau \to -\infty$, whereas if $\pm A < 0$, then $\dot{u} < 0$ such that $u(\tau) \to 0$ as $\tau \to +\infty$.
From the first equation of the system (\ref{center-manifold-zero-dynamics}), the constant of integration
for $\xi$ is arbitrary, so that $\xi(\tau) \to \tilde{A}$ in the same limit with $\tilde{A} \neq A$.
Hence, dynamics along the two-dimensional manifold $W_c(A,0,0)$ is decomposed between a curve of equilibrium
states with $u = 0$ and weakly unstable (if $\pm A > 0$) or weakly stable (if $\pm A < 0$) evolution
along a curve parameterized by small positive $u$.

Persistence of the dynamics on $W_c(A,0,0)$ with respect to the remainder terms in (\ref{approximation-center-manifold-zero})
follows from analysis of the system (\ref{zero-dynamics-center-dynamics}).
\end{proof}

\vspace{0.25cm}

\begin{proposition}
\label{proposition-stable-manifold-zero}
For every $m > 1$ and $\pm A < 0$, there exists a one-dimensional unstable manifold
of the dynamical system (\ref{zero-dynamics}) near the equilibrium point
$(A,0,0)$, which can be parameterized as follows:
\begin{equation}
\label{stable-manifold-zero}
W_{u}(A,0,0) = \left\{\xi = A + \mathcal{O}(u^m), \quad
w = \mp \frac{2 u}{(m+1)A} + \mathcal{O}(u^m), \quad u \in (0,\delta), \right\},
\end{equation}
where $\delta > 0$ is small. Dynamics of the system (\ref{zero-dynamics}) on
the unstable manifold $W_{u}(A,0,0)$ is topologically
equivalent to dynamics of the linear equation
\begin{equation}
\label{stable-manifold-zero-dynamics}
\dot{u} = \mp \frac{m+1}{2} A u.
\end{equation}
\end{proposition}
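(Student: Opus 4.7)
My plan is to mirror the structure of the proof of Proposition \ref{proposition-center-manifold-zero}, but now targeting the one-dimensional hyperbolic direction rather than the two-dimensional centre. At $(A,0,0)$ with $\pm A < 0$, the Jacobian displayed just before the proposition has a simple positive eigenvalue $\lambda := \mp\frac{m+1}{2}A$ with eigenvector $(0,1,\lambda)^\top$; the first component lies in the kernel of the Jacobian, while the second and third are linked by $v_3 = \lambda v_2$. This sets the stage for the standard unstable manifold theorem (e.g.\ Theorem 4.1 of \cite{Chiconi}) applied to the $C^1$ vector field of (\ref{zero-dynamics}), where the $C^1$ regularity is guaranteed by $m > 1$.

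The theorem produces a locally unique, one-dimensional, $C^1$ unstable manifold $W_u(A,0,0)$ tangent at $(A,0,0)$ to $\mathrm{span}\{(0,1,\lambda)^\top\}$. Since this tangent direction is transverse to the plane $u = 0$, I would parameterize the manifold as a graph over $u$, writing
\begin{equation*}
W_u(A,0,0) = \{(A + \phi(u),\, u,\, \psi(u)) : u \in (0,\delta)\},
\end{equation*}
with $\phi, \psi \in C^1$, $\phi(0) = \psi(0) = 0$, $\phi'(0) = 0$, and $\psi'(0) = \lambda$. Substituting this ansatz into (\ref{zero-dynamics}) and eliminating the time variable via the chain rule gives the invariance relations
\begin{equation*}
\psi(u)\,\phi'(u) = u^m, \qquad \psi(u)\,\psi'(u) = u^m(1 \pm u) \mp \tfrac{m+1}{2}(A + \phi(u))\,\psi(u).
\end{equation*}
A single bootstrap, starting from $\psi(u) = \lambda u + o(u)$, yields $\phi'(u) = u^{m-1}/\lambda + o(u^{m-1})$ and therefore $\phi(u) = \mathcal{O}(u^m)$, which matches the first clause of (\ref{stable-manifold-zero}); plugging this back into the second relation refines $\psi'(u) = \lambda + \mathcal{O}(u^{m-1})$ and so $\psi(u) = \lambda u + \mathcal{O}(u^m)$, delivering the stated form of $w$ on the manifold.

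Finally, the reduced dynamics $\dot{u} = \psi(u) = \lambda u + \mathcal{O}(u^m)$ is a scalar equation with a hyperbolic fixed point at $u = 0$, and is therefore topologically conjugate to its linearization $\dot{u} = \lambda u = \mp\tfrac{m+1}{2}A\,u$ on a neighbourhood of the origin; this establishes (\ref{stable-manifold-zero-dynamics}). The one subtlety worth flagging, rather than a serious obstacle, is the smoothness of the vector field: for non-integer $m \in (1,2)$ the map $u \mapsto u^m$ is only $C^1$ at $u = 0$ (with vanishing derivative), so one must restrict to the half-space $u \geq 0$ and invoke a $C^1$ version of the unstable manifold theorem. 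Because the splitting eigenvalue $\lambda$ is real, simple, and bounded away from zero, the hyperbolic part of the splitting is robust and the construction goes through without modification.
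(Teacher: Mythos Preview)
Your proof is correct and follows essentially the same route as the paper's: both invoke the unstable manifold theorem (Theorem~4.1 in \cite{Chiconi}), parameterize the one-dimensional manifold as a graph over $u$, derive the invariance equations, and read off the asymptotic orders before truncating to the linear dynamics. The only difference is bookkeeping---the paper builds the expected orders directly into the ansatz by writing $\xi = A + u^m\phi(u)$ and $w = \mp\tfrac{m+1}{2}Au + u^m\theta(u)$ and then solves for bounded coefficient functions $\phi,\theta$ (obtaining their leading constants explicitly), whereas you write a bare graph $(A+\phi(u),u,\psi(u))$ and recover the $\mathcal{O}(u^m)$ remainders by a single bootstrap; this is a stylistic rather than a substantive distinction.
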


\begin{proof}
Existence of a one-dimensional unstable manifold $W_u(A,0,0)$, which is tangent to
that of the linearized system
$$
E_u(A,0,0) = \left\{ \xi = A, \;\; w = \mp \frac{m+1}{2} A u, \;\; u \in \mathbb{R} \right\},
$$
follows from Theorem 4.1 in \cite{Chiconi}.  We develop an approximation
of $W_u(A,0,0)$ by writing
\begin{equation}
\label{xi-w-in-terms-of-u}
\left\{ \begin{array}{l}
\xi = A + u^m \phi(u), \\
w = \mp \frac{m+1}{2} A u + u^m \theta(u), \end{array} \right.
\end{equation}
where $u^m \phi(u)$ and $u^m \theta(u)$ are $C^1$ with the zero derivative at
$u = 0$. Dynamics along $W_u(A,0,0)$ is given by the one-dimensional system
\begin{equation}
\label{zero-dynamics-stable-dynamics}
\dot{u} = \mp \frac{m+1}{2} A u + u^m \theta(u).
\end{equation}
The functions $\phi$ and $\theta$ are to be found by substituting (\ref{xi-w-in-terms-of-u})
to the three-dimensional system (\ref{zero-dynamics}) and using the one-dimensional
system (\ref{zero-dynamics-stable-dynamics}). Then,
we obtain a system of differential equations
\begin{equation}
\label{zero-dynamics-stable-manifold-1}
\left( m \phi(u) + u \frac{d\phi}{du}\right) \left( \mp \frac{m+1}{2} A + u^{m-1} \theta(u) \right) = 1
\end{equation}
and
\begin{equation}
\label{zero-dynamics-stable-manifold-2}
\left( \mp \frac{m+1}{2} A + u^{m-1} \theta(u) \right) \left( m \theta(u) + u \frac{d\theta}{du}\right) = 1 \pm u
+ \frac{m+1}{2} \phi(u) \left( \frac{m+1}{2} A u \mp u^m \theta(u) \right).
\end{equation}
If $m > 1$ while $u^m \phi(u)$ and $u^m \theta(u)$ are $C^1$ with the zero derivative at
$u = 0$, then system (\ref{zero-dynamics-stable-manifold-1}) and (\ref{zero-dynamics-stable-manifold-2})
has a solution such that
\begin{equation}
\label{approximation-stable-manifold-zero}
\phi(u) = \mp \frac{2}{m (m+1) A} + \mathcal{O}(u^{m-1}), \quad \theta(u) =  \mp \frac{2}{m (m+1) A} + \mathcal{O}(u^{\min\{1,m-1\}}).
\end{equation}
The representation (\ref{xi-w-in-terms-of-u}) and (\ref{approximation-stable-manifold-zero})
is equivalent to (\ref{stable-manifold-zero}). Substituting (\ref{approximation-stable-manifold-zero})
for $\theta(u)$ to (\ref{zero-dynamics-stable-dynamics}) and truncating the remainder terms at
$\mathcal{O}(u^m)$, we obtain the linear equation (\ref{stable-manifold-zero-dynamics}). Persistence of the linear dynamics
on $W_u(A,0,0)$ with respect to the remainder terms in $\theta(u)$ follows from analysis of the
differential equation (\ref{zero-dynamics-stable-dynamics}).
\end{proof}

\vspace{0.25cm}

\begin{remark}
For every $m > 1$ and $\pm A > 0$, one can construct a one-dimensional stable manifold
$W_s(A,0,0)$ of the dynamical system (\ref{zero-dynamics}) near the equilibrium point
$(A,0,0)$, which exists for $\xi > A$, $u > 0$, and $w < 0$.
However, this stable manifold does not contain trajectories that approach
the equilibrium point $(A,0,0)$ as $\tau \to -\infty$.
\end{remark}

\vspace{0.25cm}

\begin{proof1}{\em of Theorem \ref{theorem-center-manifold-zero}.}
For every $m > 1$ and $\pm A \neq 0$, Proposition \ref{proposition-center-manifold-zero}
states that the equilibrium state $(A,0,0)$ is connected by the trajectories
of the dynamical system (\ref{zero-dynamics}) with $u(\tau) > 0$ as $\tau \to -\infty$
if and only if $\pm A > 0$. In this case, there exists exactly one trajectory with $u > 0$
such that $u(\tau) \to 0$ as $\tau \to -\infty$. This trajectory belongs to
the center manifold $W_c(A,0,0)$, whose dynamics satisfy the system (\ref{zero-dynamics-center-dynamics}).
From this system, we obtain a first-order non-autonomous equation
\begin{equation}
\label{1st-non-auto}
\frac{du}{d \xi} = \eta(\xi,u) = \pm \frac{2}{(m+1) A} + \mathcal{O}(\xi-A,u^{\min\{1,m-1\}}), \quad
\mbox{\rm as} \quad \xi \to A, \quad u \to 0.
\end{equation}
Integrating (\ref{1st-non-auto})
near $\xi = A$, we recover the asymptotic behaviour (\ref{center-manifold-zero-asymptotics}).

For every $m > 1$ and $\pm A < 0$, Proposition \ref{proposition-stable-manifold-zero}
states that the equilibrium state $(A,0,0)$ is connected by exactly one trajectory
of the dynamical system (\ref{zero-dynamics}) with $u > 0$ and $u(\tau) \to 0$ as $\tau \to -\infty$.
This trajectory belongs to the unstable manifold $W_u(A,0,0)$ with the dynamics satisfying
equation (\ref{zero-dynamics-stable-dynamics}). From (\ref{xi-w-in-terms-of-u})
and (\ref{approximation-stable-manifold-zero}), we obtain
\begin{equation}
\label{2nd-non-auto}
\xi = A + u^m \left[ \mp \frac{2}{m (m+1) A} + \mathcal{O}(u^{m-1}) \right] \quad \mbox{\rm as} \quad u \to 0.
\end{equation}
Inverting this nonlinear equation near $\xi = A$, we recover the asymptotic behaviour
(\ref{stable-manifold-zero-asymptotics}).
\end{proof1}

\section{Invariant manifolds for large values of $H_{\pm}$}

The trajectories departing from the equilibrium point $(A,0,0)$ of the dynamical system
(\ref{zero-dynamics}) is expected to arrive at infinite values for $\xi$ and $u$.
In order to study the behaviour of trajectories near infinite values for $\xi$ and $u$,
we shall define $y = 1/u$, which maps an infinite value for $u$ to a zero value for $y$.
The other variables $\xi$ and $w$ must be adjusted accordingly for small values of $y$.
Let us consider the following scaling transformation,
\begin{equation}
\label{scaling}
\xi = \frac{x}{y^p}, \quad u = \frac{1}{y}, \quad w = \frac{z}{y^q},
\end{equation}
where $(x,y,w)$ is the set of new variables, and the positive parameters $p$ and $q$ are to be chosen
below. Substituting the transformation (\ref{scaling}) to the dynamical system (\ref{zero-dynamics}),
we obtain the following autonomous system in $\mathbb{R}^3$:
\begin{equation}
\label{scaling-dynamics}
\left\{ \begin{array}{l}
 \dot{x} = y^{p-m} - p x z y^{1-q}, \\
 \dot{y} = - z y^{2-q}, \\
 \dot{z} = y^{q-m-1} (\pm 1 + y) \mp \frac{m+1}{2} x z y^{-p} - q z^2 y^{1-q},\end{array} \right.
\end{equation}
where a dot still denotes a derivative with respect to the time variable $\tau$.

The system (\ref{scaling-dynamics}) is singular at $y = 0$, no matter what positive values of $p$ and $q$ are
chosen. To unfold the singularity, we can now introduce a convenient parametrization
of solutions with the time variable $s$ instead of the variable $\tau$ such that
$$
\frac{d \tau}{d s} = y^p, \quad y \geq 0.
$$
The map $s \mapsto \tau$ is increasing and we can consider solutions
parameterized by the new time variable $s$ such that $y \to 0$ as $s \to +\infty$
(which could correspond to a finite value for the old time variable $\tau$).

With the parametrization $s \mapsto \tau$, the system (\ref{scaling-dynamics})
can be rewritten in the equivalent form
\begin{equation}
\label{scaling-dynamics-equivalent}
\left\{ \begin{array}{l}
 x' = y^{2p-m} - p x z y^{p+1-q}, \\
 y' = - z y^{p+2-q}, \\
 z' = y^{p+q-m-1} (\pm 1 + y) \mp \frac{m+1}{2} x z - q z^2 y^{p+1-q},\end{array} \right.
\end{equation}
where a prime now denotes a derivative with respect to the new
time variable $s$. A suitable choice of parameters $p$ and $q$ is given by
\begin{equation}
\label{choice}
\left\{ \begin{array}{l} p+1 = q, \\ p+q = m+2, \end{array} \right. \quad \Rightarrow \quad
\left\{ \begin{array}{l} p = \frac{m+1}{2}, \\ q = \frac{m+3}{2}, \end{array} \right.
\end{equation}
so that the explicit form of the transformations (\ref{scaling}) is given by
\begin{equation}
\label{scaling-explicit}
\xi = \frac{x}{y^{\frac{m+1}{2}}}, \quad u = \frac{1}{y}, \quad w = \frac{z}{y^{\frac{m+3}{2}}}.
\end{equation}
The choice (\ref{choice}) ensures that the system (\ref{scaling-dynamics-equivalent}) is rewritten in the
simplest non-singular form with a quadratic vector field:
\begin{equation}
\label{infinity-dynamics}
\left\{ \begin{array}{l}
 x' = y - \frac{m+1}{2} x z, \\
 y' = - z y, \\
 z' = y (\pm 1 + y) \mp \frac{m+1}{2} x z - \frac{m+3}{2} z^2.\end{array} \right.
\end{equation}

The family of equilibrium points for the system (\ref{infinity-dynamics})
is given by $(x,y,z) = (x_0,0,0)$, where $x_0 \in \mathbb{R}$ is an arbitrary parameter.
Each equilibrium point is associated with the Jacobian matrix
$$
\left[ \begin{array}{ccc} 0 & 1 & -\frac{m+1}{2} x_0 \\ 0 & 0 & 0 \\ 0 & \pm 1 & \mp \frac{m+1}{2} x_0 \end{array} \right].
$$
This Jacobian matrix has a double zero eigenvalue (with two linearly independent eigenvectors)
and a simple eigenvalue $\mp \frac{m+1}{2} x_0$.

In the context of reversing and anti-reversing interfaces, we are interested
in the behaviour of solutions for which $\xi \to +\infty$ and $u \to +\infty$.
It follows from the transformation (\ref{scaling-explicit}), that this
requirement restricts our consideration to the family of critical points $(x_0,0,0)$ with $x_0 > 0$.
If $x_0 > 0$, the linearization of the dynamical system (\ref{infinity-dynamics})
at the equilibrium point $(x_0,0,0)$ has
a two-dimensional center manifold and a one-dimensional stable (upper sign) or unstable
(lower sign) manifold. Since the vector field of the dynamical system (\ref{infinity-dynamics}) is analytic
(quadratic), another straightforward application of the invariant manifold theorems \cite{ode-text,Wiggins}
yields that the equilibrium point $(x_0,0,0)$ with $x_0 > 0$ is located at the intersection
of the two invariant manifolds, which are tangential to the invariant manifolds of the
linearized system. We formulate these results in the next two Propositions, namely Propositions \ref{proposition-center-manifold-infinity} and \ref{proposition-stable-manifold-infinity}.
The relevant conclusions on the behaviour of solutions
of the differential equations (\ref{ode}) for large values of $H_{\pm}$
expressed in Theorem \ref{theorem-center-manifold-infinity} follow as a corollary
from these two Propositions.\\

\begin{proposition}
\label{proposition-center-manifold-infinity}
For every $x_0 > 0$, there exists a two-dimensional center manifold
of the dynamical system (\ref{infinity-dynamics}) near the equilibrium point
$(x_0,0,0)$, which can be parameterized as follows:
\begin{equation}
\label{center-manifold-infinity}
W_c(x_0,0,0) = \left\{ y = \frac{m+1}{2} \left[ x z \pm \left( 1 - \frac{m+1}{2} x_0^2 \right) z^2 + \mathcal{O}(3) \right],
\quad \begin{array}{l} x \in (x_0-\delta,x_0+\delta), \\ z \in (-\delta,\delta), \end{array} \right\},
\end{equation}
where $\delta > 0$ is small and $\mathcal{O}(3)$ denotes cubic terms in $x - x_0$ and $z$.
The dynamics of the system (\ref{infinity-dynamics}) on the center manifold $W_c(x_0,0,0)$ is topologically
equivalent to the dynamics at the truncated normal form
\begin{equation}
\label{center-manifold-infinity-dynamics}
\left\{ \begin{array}{l}
x' = \pm \frac{m+1}{2} \left( 1 - \frac{m+1}{2} x_0^2\right) z^2, \\
z' = -z^2.\end{array} \right.
\end{equation}
In particular, there exists exactly one trajectory on $W_c(x_0,0,0)$,
which approaches the equilibrium point $(x_0,0,0)$ as $s \to +\infty$.
\end{proposition}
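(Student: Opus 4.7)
The plan is to mirror the strategy used for Proposition \ref{proposition-center-manifold-zero}: apply the center-manifold theorem to get existence, then compute the manifold and the reduced dynamics by substituting a truncated ansatz into the invariance equation, and finally read off the qualitative behaviour from the resulting normal form.

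First I would verify the spectral structure already given. The Jacobian at $(x_0,0,0)$ has a double zero eigenvalue (center eigenspace spanned by $(1,0,0)^T$ and a second vector in the $y$-$z$ plane) together with the simple eigenvalue $\mp \tfrac{m+1}{2} x_0 \neq 0$. Because the vector field of (\ref{infinity-dynamics}) is polynomial (hence $C^\infty$), Theorem 4.1 in \cite{Chiconi} supplies a two-dimensional center manifold $W_c(x_0,0,0)$ tangent to the center eigenspace, valid in a neighborhood of $(x_0,0,0)$. Since the center eigenspace is transverse to the $y$-direction, the manifold may be written locally as a graph $y = Y(x,z)$ with $Y(x,0) = 0$ (the line of equilibria) and $Y_x(x,0), Y_z(x_0,0)$ determined from the linearized invariance condition.

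Second, I would compute the expansion of $Y$ by inserting it into the invariance equation $Y_x\, x' + Y_z\, z' = y'$, namely
\begin{equation*}
Y_x\!\left(Y - \tfrac{m+1}{2} x z\right) + Y_z\!\left(Y(\pm 1 + Y) \mp \tfrac{m+1}{2} x z - \tfrac{m+3}{2} z^2\right) = -z Y.
\end{equation*}
Writing $Y = \tfrac{m+1}{2}\bigl(a(x) z + b(x) z^2\bigr) + \mathcal{O}(z^3)$ and matching the coefficient of $z$ forces $a(x) \equiv x$ (this is exactly the tangency to the center eigenspace), so the leading term of $Y$ is $\tfrac{m+1}{2} x z$. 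Then at order $z^2$ at $x=x_0$ the equation reduces to an algebraic condition that fixes the coefficient $b(x_0)$, and a short calculation yields $b(x_0) = \pm \bigl(1 - \tfrac{m+1}{2} x_0^2\bigr)$, reproducing the parameterization (\ref{center-manifold-infinity}). Higher-order terms are absorbed into $\mathcal{O}(3)$.

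Third, I would derive the truncated normal form (\ref{center-manifold-infinity-dynamics}) by substituting this expansion back into the first and third equations of (\ref{infinity-dynamics}). The relation $Y - \tfrac{m+1}{2} x z = \tfrac{m+1}{2} b(x_0) z^2 + \mathcal{O}(3)$ gives the equation for $x'$ immediately, while for $z'$ the terms $\pm Y$, $Y^2$, $\mp \tfrac{m+1}{2} x z$ and $-\tfrac{m+3}{2} z^2$ combine at $x=x_0$ into exactly $-z^2 + \mathcal{O}(3)$ thanks to the choice of $b(x_0)$ above; away from $x=x_0$ the corrections are of the form $\mathcal{O}((x-x_0)z^2)$ and hence higher order.

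Finally, to extract the uniqueness statement, I would analyze the planar system (\ref{center-manifold-infinity-dynamics}) directly: $z' = -z^2$ gives $z(s) = z_0/(1+z_0 s) \searrow 0$ as $s \to +\infty$, and the first equation implies $x + c\, z$ is conserved to leading order for $c = \pm\tfrac{m+1}{2}\bigl(1 - \tfrac{m+1}{2} x_0^2\bigr)$, so on the center manifold there is a unique orbit $x(s) = x_0 - c\, z(s)$ approaching the prescribed equilibrium $(x_0,0,0)$. Persistence of this picture under the $\mathcal{O}(3)$ corrections follows from a standard Gronwall/contraction argument on the reduced two-dimensional flow, exactly as in the end of the proof of Proposition \ref{proposition-center-manifold-zero}. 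The only genuinely delicate step is the order-$z^2$ calculation that produces the exact coefficient $b(x_0)$, because both $\pm Y$ and the quadratic terms $Y^2$, $\tfrac{m+3}{2} z^2$ must cancel against the $\mp \tfrac{m+1}{2} x_0 z$ contribution; once that is in hand, the reduced dynamics and the uniqueness of the approaching orbit are essentially immediate.
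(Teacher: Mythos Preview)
Your proposal is correct and follows essentially the same route as the paper: invoke the center-manifold theorem (Theorem~4.1 in \cite{Chiconi}) for existence, write the manifold as a graph $y = Y(x,z)$, solve the invariance equation to second order to extract the coefficient $\pm\bigl(1-\tfrac{m+1}{2}x_0^2\bigr)$, substitute back to obtain the truncated normal form, and read off the unique approaching orbit from $z'=-z^2$. The only cosmetic difference is that the paper packages the calculation via the explicit substitution $f(x,z)=\tfrac{m+1}{2}xz+z^2 g(x,z)$ and derives a PDE for $g$, whereas you expand $Y$ directly and match coefficients order by order; the computations are otherwise identical.
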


\vspace{0.25cm}

\begin{proof}
Existence of a two-dimensional center manifold $W_c(x_0,0,0)$, which is tangent to
that of the linearized system
$$
E_c(x_0,0,0) = \left\{ y  = \frac{m+1}{2} x_0 z, \quad (x,z) \in \mathbb{R}^2 \right\},
$$
follows from Theorem 4.1 in \cite{Chiconi}. We develop an approximation
of $W_c(x_0,0,0)$ by writing $y = f(x,z)$ and expanding $f$ using a Taylor series in 
small values of both $x-x_0$ and $z$. Dynamics along $W_c(x_0,0,0)$ is given by the two-dimensional system
\begin{equation}
\label{infinity-dynamics-center-dynamics}
\left\{ \begin{array}{l} x' = f(x,z) - \frac{m+1}{2} xz, \\
z' = f(x,z) (\pm 1 + f(x,z)) \mp \frac{m+1}{2} xz - \frac{m+3}{2} z^2.\end{array} \right.
\end{equation}
The function $f$ is to be found from the partial differential equation
\begin{equation}
\label{infinity-dynamics-center-manifold}
\frac{\partial f}{\partial x} \left[ f - \frac{m+1}{2} xz \right] +
\frac{\partial f}{\partial z} \left[ f (\pm 1 + f) \mp \frac{m+1}{2} xz - \frac{m+3}{2} z^2 \right]
+ z f = 0.
\end{equation}
Equations (\ref{infinity-dynamics-center-dynamics}) and (\ref{infinity-dynamics-center-manifold})
suggest the following near-identity transformation of the function $f$ given by
\begin{equation}
\label{near-identity}
f(x,z) = \frac{m+1}{2} xz + z^2 g(x,z)
\end{equation}
After the near-identity transformation (\ref{near-identity}), the
dynamics along $W_c(x_0,0,0)$ are given by the two-dimensional system
\begin{equation}
\label{infinity-dynamics-center-dynamics-eq}
\left\{ \begin{array}{l} x' = z^2 g(x,z), \\
z' = z^2 \left[\pm g(x,z) - \frac{m+3}{2} + \left( \frac{m+1}{2} + z g(x,z) \right)^2 \right].\end{array} \right.
\end{equation}
The function $g$ is now to be found from the partial differential equation
\begin{equation*}
z g \left( \frac{m+3}{2} + z \frac{\partial g}{\partial x} \right)
+ \left( \frac{m+1}{2} x + 2 z g + z^2 \frac{\partial g}{\partial z} \right)
\left( \pm g + \left( \frac{m+1}{2} x + z g \right)^2 - \frac{m+3}{2} \right)
+ \frac{m+1}{2} x = 0,
\end{equation*}
which has a solution such that
\begin{equation}
\label{f-1}
g(x,z) =  \pm \frac{m+1}{2} \left( 1 - \frac{m+1}{2} x_0^2 \right) + \mathcal{O}(x-x_0,z).
\end{equation}
Expansions (\ref{near-identity}) and (\ref{f-1}) yield (\ref{center-manifold-infinity}). Substituting (\ref{f-1}) into
(\ref{infinity-dynamics-center-dynamics-eq}) and truncating at the cubic terms,
we obtain the truncated normal form (\ref{center-manifold-infinity-dynamics}).

From the second equation of the system (\ref{center-manifold-infinity-dynamics}), it follows that
there exists a unique solution such that $z(s) \to 0$ as $s \to +\infty$ with $z > 0$.
From the first equation of the system (\ref{center-manifold-infinity-dynamics}), the constant of integration
for $x$ is arbitrary, so that $x(s) \to \tilde{x}_0$ as $s \to +\infty$ with $\tilde{x}_0 \neq x_0$.
Hence, dynamics along the two-dimensional manifold $W_c(x_0,0,0)$ is decomposed between a curve of equilibrium
states with $z = 0$ and weakly stable evolution along a curve parameterized by small positive $z$.
Persistence of the dynamics on $W_c(x_0,0,0)$ with respect to the remainder terms in (\ref{f-1}) follows from
analysis of the system (\ref{infinity-dynamics-center-dynamics-eq}).
\end{proof}

\vspace{0.25cm}

\begin{proposition}
\label{proposition-stable-manifold-infinity}
For every $x_0 > 0$, there exists a one-dimensional stable (upper sign) or unstable
(lower sign) manifold of the dynamical system (\ref{infinity-dynamics}) near the equilibrium point
$(x_0,0,0)$, which can be expressed explicitly:
\begin{equation}
\label{stable-manifold-infinity}
W_{s/u}(x_0,0,0) = \left\{ y = 0, \quad z = \mp \frac{m+1}{2} x \left[1 - \left(\frac{x}{x_0}\right)^{\frac{2}{m+1}} \right],
\quad x \in (x_0-\delta,x_0+\delta), \right\},
\end{equation}
where $\delta > 0$ is small.
The dynamics of the system (\ref{infinity-dynamics}) on the manifold $W_{s/u}(x_0,0,0)$ is topologically
equivalent to the dynamics of the linear equation
\begin{equation}
\label{stable-manifold-infinity-dynamics}
x' = \mp \frac{m+1}{2} x_0 (x - x_0).
\end{equation}
\end{proposition}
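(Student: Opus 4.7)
The plan is to exploit the special geometric feature that the plane $\{y = 0\}$ is invariant under the flow of system (\ref{infinity-dynamics}): the second equation reads $y' = -zy$, which vanishes identically when $y = 0$. A short calculation shows that the eigenvector of the Jacobian at $(x_0,0,0)$ associated with the simple nonzero eigenvalue $\mp \tfrac{m+1}{2} x_0$ is proportional to $(1, 0, \pm 1)^T$, whose $y$-component vanishes. Thus the hyperbolic eigendirection lies in the invariant plane, and by the uniqueness assertion in the stable/unstable manifold theorem (Theorem 4.1 in \cite{Chiconi}) the sought one-dimensional manifold must be contained entirely in $\{y = 0\}$.

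The problem then reduces to finding an invariant curve through $(x_0, 0)$ of the planar restriction
\begin{equation*}
x' = -\tfrac{m+1}{2} xz, \qquad z' = \mp \tfrac{m+1}{2} xz - \tfrac{m+3}{2} z^2.
\end{equation*}
Seeking this curve as a graph $z = z(x)$ and dividing the two equations yields the linear first-order ODE
\begin{equation*}
\frac{dz}{dx} = \pm 1 + \frac{m+3}{m+1} \cdot \frac{z}{x},
\end{equation*}
which I would solve using the integrating factor $x^{-(m+3)/(m+1)}$. The general solution is $z(x) = \mp \tfrac{m+1}{2} x + C\, x^{(m+3)/(m+1)}$, and the boundary condition $z(x_0) = 0$ pins down $C = \pm \tfrac{m+1}{2}\, x_0^{-2/(m+1)}$, producing the explicit formula asserted in (\ref{stable-manifold-infinity}).

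For the topological equivalence with (\ref{stable-manifold-infinity-dynamics}), I would substitute the explicit expression for $z$ back into $x' = -\tfrac{m+1}{2} xz$ to obtain the scalar dynamics on $W_{s/u}(x_0,0,0)$,
\begin{equation*}
x' = \pm \left(\tfrac{m+1}{2}\right)^2 x^2 \left[1 - \left(\tfrac{x}{x_0}\right)^{2/(m+1)}\right],
\end{equation*}
and expand the bracket about $x = x_0$ as $1 - (x/x_0)^{2/(m+1)} = -\tfrac{2}{(m+1) x_0}(x - x_0) + \mathcal{O}((x-x_0)^2)$, recovering $x' = \mp \tfrac{m+1}{2} x_0 (x - x_0) + \mathcal{O}((x-x_0)^2)$. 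Since $\mp \tfrac{m+1}{2} x_0 \neq 0$, the Hartman--Grobman theorem supplies the local $C^0$ conjugacy to the linear equation (\ref{stable-manifold-infinity-dynamics}). No substantial obstacle is anticipated: the only real insight needed is the identification of $\{y = 0\}$ as an invariant plane containing the hyperbolic eigendirection, after which the entire proof reduces to solving an elementary first-order linear ODE and a standard hyperbolic linearization argument.
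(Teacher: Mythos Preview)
Your proposal is correct and follows essentially the same route as the paper's own proof: both observe that $\{y=0\}$ is invariant, reduce to the planar system, write the invariant curve as a graph $z=\psi(x)$, solve the same linear first-order ODE $x\,d\psi/dx = \tfrac{m+3}{m+1}\psi \pm x$ with $\psi(x_0)=0$, and then expand the resulting scalar dynamics about $x=x_0$ to recover the linear equation. The only cosmetic differences are that you explicitly compute the hyperbolic eigenvector and name the integrating factor and Hartman--Grobman, while the paper simply imposes the tangency condition $\psi'(x_0)=\pm 1$ and phrases the final step as ``persistence of the linear dynamics.''
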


\begin{proof}
The linearized system has the stable/unstable manifold:
$$
E_{s/u}(x_0,0,0) = \left\{ y = 0, \quad z = \pm (x-x_0), \quad x \in \mathbb{R}\right\}.
$$
Existence of a one-dimensional manifold $W_{s/u}(x_0,0,0)$ that is tangent to
$E_{s/u}(x_0,0,0)$ follows from Theorem 4.1 in \cite{Chiconi}.
We notice from (\ref{infinity-dynamics}) that $y = 0$ is an invariant reduction
of the three-dimensional system. Therefore, we can set $y = 0$ and seek
a parametrization of $W_{s/u}(x_0,0,0)$ by working with $z = \psi(x)$,
where $\psi(x_0) = 0$ and $\psi'(x_0) = \pm 1$. Dynamics along $W_{s/u}(x_0,0,0)$ are given by the one-dimensional system
\begin{equation}
\label{infinity-dynamics-stable-dynamics}
x' = - \frac{m+1}{2} x \psi(x).
\end{equation}
From the three-dimensional system (\ref{infinity-dynamics}), we obtain
a linear differential equation for nonzero $\psi$:
\begin{equation}
\label{infinity-dynamics-stable-manifold}
x \frac{d\psi}{dx} = \frac{m+3}{m+1} \psi(x) \pm x.
\end{equation}
This equation completed with the initial condition $\psi(x_0) = 0$ admits a unique solution
\begin{equation}
\label{approximation-stable-manifold-infinity}
\psi(x) = \mp \frac{m+1}{2} x \left[1 - \left(\frac{x}{x_0}\right)^{\frac{2}{m+1}} \right].
\end{equation}
Note that $\psi'(x_0) = \pm 1$, because $W_{s/u}(x_0,0,0)$ is tangent to $E_{s/u}(x_0,0,0)$.
Substituting 
\begin{equation}
\label{expansion-h}
\psi(x) = \pm (x-x_0) + \mathcal{O}((x-x_0)^2)
\end{equation}
into the differential equation (\ref{infinity-dynamics-stable-dynamics})
and truncating at the quadratic remainder term,
we obtain the linear equation (\ref{stable-manifold-infinity-dynamics}). Persistence of the linear dynamics
on $W_{s/u}(x_0,0,0)$ with respect to the remainder term follows from analysis of the
system (\ref{infinity-dynamics-stable-dynamics}) with the expansion (\ref{expansion-h}).
\end{proof}

\vspace{0.25cm}

\begin{proof1}{\em of Theorem \ref{theorem-center-manifold-infinity}.}
For every $x_0 > 0$, Proposition \ref{proposition-center-manifold-infinity}
states that there exists exactly one trajectory with $y > 0$ such that
$y(s) \to 0$ as $s \to +\infty$.  This trajectory belongs to
the center manifold $W_c(x_0,0,0)$, whose dynamics satisfy
the system (\ref{infinity-dynamics-center-dynamics}). 
We recover the asymptotic behaviour (\ref{center-manifold-infinity-asymptotics}) 
by eliminating $y$ from the transformation (\ref{scaling-explicit}) and setting $x = x_0$.

For every $x_0 > 0$, Proposition \ref{proposition-stable-manifold-infinity}
states that the point $(x_0,0,0)$ is an intersection of the
two-dimensional center manifold $W_c(x_0,0,0)$
and the one-dimensional stable/unstable manifold $W_{s/u}(x_0,0,0)$
for the upper/lower sign. Therefore, the point can be reached
in the direction $s \to +\infty$ along $W_s(x_0,0,0)$ but can not
be reached along $W_u(x_0,0,0)$. This guarantees that the
trajectory in Proposition \ref{proposition-center-manifold-infinity}
with $y > 0$ such that $y(s) \to 0$ as $s \to +\infty$ is unique for the lower
sign.  Therefore, for every $x_0 > 0$, there exists a one-dimensional
set of solutions of the differential equation (\ref{ode}) for the lower sign
such that $H_-(\xi) \to +\infty$ as $\xi \to +\infty$.

On the other hand, the span of the trajectory in Proposition
\ref{proposition-center-manifold-infinity} and the trajectory
in Proposition \ref{proposition-stable-manifold-infinity}
is a two-dimensional set that hosts all trajectories with $y > 0$ such that
$y(s) \to 0$ as $s \to +\infty$. Therefore, for every $x_0 > 0$, 
there exists a two-dimensional
set of solutions of the differential equation (\ref{ode}) for the upper sign
such that $H_+(\xi) \to +\infty$ as $\xi \to +\infty$.
The rate of change along $W_s(x_0,0,0)$ is exponential in $s$
and the rate of change along $W_c(x_0,0,0)$ is algebraic in $s$.
Therefore, solutions along the two-dimensional set still obey
the asymptotic behaviour (\ref{center-manifold-infinity-asymptotics}). 
\end{proof1}

\section{Connection of invariant manifolds}

Let us summarize the results of the previous two sections on the existence of
solutions $H_{\pm}$ to the differential equations (\ref{ode}) on the semi-infinite
line $(A_{\pm},\infty)$ that satisfy the properties (\ref{(i)})-(\ref{(iii)}). Such solutions are related to the trajectories of the dynamical systems (\ref{zero-dynamics})
and (\ref{infinity-dynamics}), which depart from the equilibrium points where $H_{\pm}$ is zero
and arrive to the equilibrium point where $H_{\pm}$ is infinite. In what follows,
we will consider separately the two different systems for $H_+$ and $H_-$.

\subsection{\label{H-plus-connect}The system for $H_+$ ($t > 0$)}

By Propositions \ref{proposition-center-manifold-zero} and \ref{proposition-stable-manifold-zero},
for every nonzero $A \equiv A_+$, there is a unique trajectory
of the dynamical system (\ref{zero-dynamics}) in variables $(\xi,u,w)$ that departs from
the equilibrium point $(A_+,0,0)$ as $\tau \to -\infty$ and belongs to the domain $u > 0$.
This trajectory is contained in the center manifold $W_c(A_+,0,0)$ if $A_+ > 0$ and
in the unstable manifold $W_u(A_+,0,0)$ if $A_+ < 0$.

By Propositions \ref{proposition-center-manifold-infinity} and \ref{proposition-stable-manifold-infinity},
for every $x_0 > 0$, there is a two-dimensional set of trajectories
of the dynamical system (\ref{infinity-dynamics}) in variables $(x,y,z)$
that reaches the equilibrium point $(x_0,0,0)$ as $s \to +\infty$ and belongs
to the domain $y > 0$. This trajectory is contained in the intersection
between the center $W_c(x_0,0,0)$ and stable $W_s(x_0,0,0)$ manifolds.

We shall now establish that the same trajectory departing from
the equilibrium point $(A_+,0,0)$ in (\ref{zero-dynamics})
arrives to the  equilibrium point $(x_0,0,0)$ in (\ref{infinity-dynamics}).
This trajectory determines a unique solution $H_+$ of the differential equation
(\ref{ode}) with the upper sign satisfying properties (\ref{(i)})--(\ref{(iii)}).\\

\begin{lemma}
\label{lemma-connection}
Fix $A_+ \in \mathbb{R} \backslash \{0\}$ and
consider a one-parameter trajectory of the dynamical system (\ref{zero-dynamics}) for
the upper sign such that $(\xi,u,w) \to (A_+,0,0)$ as $\tau \to -\infty$ and $u > 0$. Then, there exists
a $\tau_0 \in \mathbb{R}$ (or $\tau_0 = +\infty$)
such that $\xi(\tau) \to +\infty$ and $u(\tau) \to +\infty$ as $\tau \to \tau_0$.
\end{lemma}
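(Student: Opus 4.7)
The plan is to establish three properties of the forward orbit in turn: monotonicity ($w>0$ and $u$ strictly increasing), unboundedness of $u$, and unboundedness of $\xi$. For monotonicity, the orbit departs from $(A_+,0,0)$ with $w>0$ for $\tau$ sufficiently large negative, by the explicit local descriptions in Propositions \ref{proposition-center-manifold-zero} (on the center manifold when $A_+>0$) and \ref{proposition-stable-manifold-zero} (on the unstable manifold when $A_+<0$); in both cases the leading-order expressions for $w$ are positive along the orbit. Suppose $w$ first vanishes at some finite $\tau_1$. Since $\dot u=w>0$ on $(-\infty,\tau_1)$ and $u\to 0$ as $\tau\to-\infty$, one has $u(\tau_1)>0$. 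The $\dot w$ equation of (\ref{zero-dynamics}) with the upper sign then gives $\dot w(\tau_1)=u(\tau_1)^m(1+u(\tau_1))>0$. But $w>0$ on $(-\infty,\tau_1)$ together with $w(\tau_1)=0$ force $\dot w(\tau_1)\le 0$, a contradiction. Hence $w>0$ throughout the forward orbit, and $u$ is strictly increasing.

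To exclude boundedness of $u$, suppose $u(\tau)\le M$ on the maximal forward interval $[\tau_*,\tau_0)$. If $\tau_0<\infty$, then $\dot\xi=u^m\le M^m$ keeps $\xi$ bounded; the bound $|\dot w|\le C(1+|w|)$ combined with Gronwall keeps $w$ bounded; and boundedness of $(\xi,u,w)$ contradicts maximality of $\tau_0$. If $\tau_0=+\infty$, then $u\nearrow u_*\in(0,M]$ and $\dot\xi\to u_*^m>0$ force $\xi\sim u_*^m\tau\to+\infty$. An integrating-factor argument applied to $\dot w+\tfrac{m+1}{2}\xi w=u^m(1+u)$ then yields $w(\tau)\ge c/\tau$ for large $\tau$ (the factor $\exp(\int\tfrac{m+1}{2}\xi\,d\tau)$ grows like $\exp(c'\tau^2)$, and Laplace's method delivers this lower bound). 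Integrating $\dot u=w$ produces $u(\tau)\to+\infty$, contradicting $u\to u_*$. Therefore $u(\tau)\to+\infty$ as $\tau\to\tau_0$.

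Finally, for unboundedness of $\xi$, the case $\tau_0=+\infty$ is immediate since $u\ge 1$ eventually forces $\dot\xi\ge 1$, giving $\xi\to+\infty$ upon integration. In the case $\tau_0<\infty$, I would argue by contradiction: assume $\xi(\tau)\le K$ on the interval. The energy $E=\tfrac12 w^2-\tfrac{1}{m+1}u^{m+1}-\tfrac{1}{m+2}u^{m+2}$ satisfies $\dot E=-\tfrac{m+1}{2}\xi w^2$, which is non-positive once $\xi\ge 0$ (automatic for $A_+>0$; eventually true for $A_+<0$ after $\xi$ crosses zero by monotonicity). This yields the upper bound $w\le Cu^{(m+2)/2}$ for $u$ large. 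Substituting into the $\dot w$ equation then gives $\ddot u\ge\tfrac12 u^{m+1}$ for $u$ large, and comparison with the blow-up ODE $\ddot v=\tfrac12 v^{m+1}$ delivers the lower bound $u(\tau)\ge c(\tau_0-\tau)^{-2/m}$. Consequently $\dot\xi=u^m\ge c^m(\tau_0-\tau)^{-2}$ is not integrable near $\tau_0$, contradicting boundedness of $\xi$. The main obstacle is precisely this blow-up comparison; an alternative would be to pass to the chart (\ref{infinity-dynamics}), where the vector field is smooth and quadratic, and to combine the local invariant-manifold structure near $(x_0,0,0)$ with positivity of $x=\xi u^{-(m+1)/2}$ (inherited from $\xi>0$ eventually along the orbit) to conclude $\xi\to+\infty$.
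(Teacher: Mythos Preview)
Your monotonicity argument (Step 1) is the same as the paper's, and your argument for unboundedness of $u$ (Step 2) is correct, though the paper takes a shorter route via the energy $E=\tfrac12 w^2-\tfrac{1}{m+1}u^{m+1}-\tfrac{1}{m+2}u^{m+2}$ and the relation $\dot E=-\tfrac{m+1}{2}\xi w^2$: once $\xi>0$ and $w>0$ are established, $E$ is strictly decreasing and the structure of $E$ forces $u\to\infty$. Your Laplace--integrating-factor argument for the case $\tau_0=+\infty$ is more explicit and works fine.

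The genuine gap is in Step 3, exactly where you flag it. The second-order inequality $\ddot u\ge\tfrac12 u^{m+1}$ does \emph{not} deliver the lower bound $u\ge c(\tau_0-\tau)^{-2/m}$; it delivers the opposite inequality. Multiplying by $\dot u>0$ and integrating gives $\dot u\ge c_1 u^{(m+2)/2}$, and separating variables from $\tau$ to $\tau_0$ yields $u(\tau)\le c_2(\tau_0-\tau)^{-2/m}$, an \emph{upper} bound. A comparison with the blow-up ODE $\ddot v=\tfrac12 v^{m+1}$ only shows $u\ge v$ with $v$ blowing up at some $\tau_v\ge\tau_0$, which again gives nothing near $\tau_0$. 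The fix is already in your hands: the energy bound you derived, $w=\dot u\le C u^{(m+2)/2}$, is the correct inequality to integrate. Separating variables in $\dot u\le C u^{(m+2)/2}$ from $\tau$ to $\tau_0$ (using $u(\tau_0^-)=+\infty$) gives precisely $u(\tau)\ge c(\tau_0-\tau)^{-2/m}$, and then $\dot\xi=u^m\ge c^m(\tau_0-\tau)^{-2}$ is non-integrable, contradicting $\xi\le K$.

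The paper avoids the finite/infinite $\tau_0$ split altogether by reparametrizing by $u$ once $w>0$ is known: from $\tfrac{d\xi}{du}=u^m/w$ and the same energy upper bound $w\le w_\infty u^{(m+2)/2}$ one gets $\tfrac{d\xi}{du}\ge w_\infty^{-1}u^{(m-2)/2}$, and integrating in $u\to\infty$ gives $\xi\ge C+\tfrac{2}{m w_\infty}u^{m/2}\to\infty$. This is shorter and makes your proposed detour through the chart (\ref{infinity-dynamics}) unnecessary.
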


\vspace{0.25cm}

\begin{proof}
We introduce the energy-like quantity for the dynamical system (\ref{zero-dynamics}) with
the upper sign:
\begin{equation}
\label{energy-like}
E(u,w) := \frac{1}{2} w^2 - \frac{1}{m+1} u^{m+1} - \frac{1}{m+2} u^{m+2}.
\end{equation}
Computing the derivative of $E$ in $\tau$ along a solution of system (\ref{zero-dynamics}), we obtain
\begin{equation}
\label{energy-like-derivative}
\frac{dE}{d\tau} = -\frac{m+1}{2} \xi w^2.
\end{equation}

If $A_+ > 0$, then $\xi(\tau) \geq A_+ > 0$, and $E$ is a strictly decreasing function of $\tau$ as
long as the solution to system (\ref{zero-dynamics}) exists and $w(\tau) \neq 0$.
By the representation (\ref{center-manifold-zero})
of $W_c(A_+,0,0)$ in Proposition \ref{proposition-center-manifold-zero},
we have $w(\tau) > 0$ for sufficiently large negative $\tau$. Now $w(\tau)$ cannot vanish
for any $\tau$, because if $w = 0$, then $\dot{w} = u^m (1+u) > 0$,
which contradicts positivity of $w$ before vanishing.
Therefore, $E(w,u)$ decreases to $-\infty$ in finite or infinite time $\tau$.
Because $E(0,u) \leq E(w,u)$, we have $u \to +\infty$ if $E(w,u) \to -\infty$.

If $A_+ < 0$, then $E$ is an increasing function of $\tau$ at least for sufficiently
large negative $\tau$. By the representation (\ref{stable-manifold-zero})
of $W_u(A_+,0,0)$ in Proposition \ref{proposition-stable-manifold-zero},
we still have $w(\tau) > 0$ for sufficiently large negative $\tau$.
Also, $w(\tau)$ cannot vanish for any $\tau$ by the same contradiction,
since if $w = 0$, then $\dot{w} = u^m (1+u) > 0$. Therefore, $\xi(\tau)$ and
$u(\tau)$ are still increasing functions, and there is a finite $\tau_0 \in \mathbb{R}$
such that $\xi(\tau_0) = 0$ and $\dot{\xi}(\tau_0) > 0$. For
$\tau > \tau_0$, the energy method described above proves again 
that $u(\tau) \to +\infty$ in finite or infinite time $\tau$.

We shall now prove that $\xi(\tau) \to +\infty$ as $\tau \to \tau_0$. Since $w > 0$
for all $\tau \in (-\infty,\tau_0)$, the map $\tau \to u$ is monotonically increasing,
so that we can parameterize both $\xi$ and $w$ by $u$ and consider the limit $u \to +\infty$.
From the last two equations of the system (\ref{zero-dynamics}), we obtain
\begin{equation}
\label{equation-w}
w \frac{dw}{d u} = u^m (1 + u ) - \frac{m+1}{2} \xi w.
\end{equation}
Since $\xi(\tau) > 0$ for $\tau$ close to $\tau_0$
and $w(\tau) > 0$ for all $\tau \in (-\infty,\tau_0)$, we estimate
$$
\frac{d}{du} \left( \frac{1}{2} w^2 \right) \leq u^m (1 + u).
$$
Integrating this differential inequality, we obtain
\begin{equation}
\label{bound-w}
w^2 \leq C + \frac{2}{m+1} u^{m+1} + \frac{2}{m+2} u^{m+2},
\end{equation}
where $C > 0$ is a constant of integration. Therefore, as $u \to \infty$,
there exists a constant $w_{\infty} > 0$ such that $w \leq w_{\infty} u^{\frac{m+2}{2}}$
for sufficiently large $u$. From the first two equations of the system (\ref{zero-dynamics}),
we obtain
\begin{equation}
\label{equation-xi}
\frac{d \xi}{d u} = \frac{u^m}{w} \geq  \frac{u^{\frac{m-2}{2}}}{w_{\infty}}.
\end{equation}
Integrating this differential inequality, we obtain a lower bound
for $\xi$ given by
\begin{equation}
\label{bound-xi}
\xi \geq C + \frac{2}{m w_{\infty}} u^{\frac{m}{2}}
\end{equation}
where $C > 0$ is another constant of integration. This lower bound
yields $\xi \to +\infty$ as $u \to +\infty$.
\end{proof}

\vspace{0.25cm}

\begin{corollary}
\label{coral-1}
The trajectory of Lemma \ref{lemma-connection} such that $\xi(\tau) \to +\infty$ and $u(\tau) \to +\infty$
corresponds to the trajectory of system (\ref{infinity-dynamics}) approaching the equilibrium state $(x_0,0,0)$
for some $x_0 \in [0,\infty)$. Consequently, one can define a piecewise $C^1$ map
\begin{equation}
\label{two-maps}
\mathbb{R} \backslash \{0\} \ni A_+ \mapsto x_0 \in \mathbb{R}^+.
\end{equation}
\end{corollary}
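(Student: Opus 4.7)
The plan is to transfer the trajectory produced by Lemma \ref{lemma-connection} through the coordinate change (\ref{scaling-explicit}) to the autonomous system (\ref{infinity-dynamics}), and then identify its limit point. First, since $u(\tau) \to +\infty$ as $\tau \to \tau_0^-$, the rescaled variable $y(s) = 1/u$ tends to $0^+$; the time rescaling $d\tau/ds = y^{(m+1)/2}$ sends $\tau \to \tau_0$ to $s \to +\infty$, since the integral $\int y^{-(m+1)/2}\, d\tau$ diverges as $y \to 0$. The upper bound $w^2 \leq C + \frac{2}{m+1} u^{m+1} + \frac{2}{m+2} u^{m+2}$ obtained in (\ref{bound-w}) immediately yields $z^2 = w^2/u^{m+3} = \mathcal{O}(1/u) \to 0$, so $z(s) \to 0$ along the trajectory.

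The main analytical step is to show that $x(s) = \xi/u^{(m+1)/2}$ converges to a single limit $x_0 \in [0,\infty)$. Assuming first that $x(s)$ remains bounded, its $\omega$-limit set in $(x,y,z)$-space is non-empty, lies in $\{y = 0\}$ by the first step and in $\{z = 0\}$ by the second, and is therefore a connected subset of the $x$-axis, which consists entirely of equilibria of (\ref{infinity-dynamics}). To single out one point $x_0$ and rule out an interval of limit points, I would appeal to Propositions \ref{proposition-center-manifold-infinity} and \ref{proposition-stable-manifold-infinity}: once the trajectory enters a small neighborhood of some $(x_0,0,0)$ with $x_0 > 0$, the union of the local centre manifold $W_c(x_0,0,0)$ and the stable manifold $W_s(x_0,0,0)$ attracts all nearby trajectories with $y > 0$ that satisfy $y \to 0$, which forces convergence to that single equilibrium. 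The degenerate case $x_0 = 0$ would be treated separately and accounts for the inclusion of $0$ in the range of the map.

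For the piecewise $C^1$ regularity of $A_+ \mapsto x_0$, I would combine three ingredients: (i) smooth dependence on $A_+$ of the departing trajectory, which follows from $C^k$-smoothness of the centre/unstable manifolds at $(A_+,0,0)$ established in Propositions \ref{proposition-center-manifold-zero} and \ref{proposition-stable-manifold-zero} together with standard smooth dependence of solutions on initial data; (ii) smoothness of the transformation (\ref{scaling-explicit}) on $\{y > 0\}$; (iii) smooth parameterization of the centre manifold at infinity by $x_0$, which allows the convergence $x(s) \to x_0$ to be read off smoothly from nearby trajectories. The discontinuity at $A_+ = 0$ is expected, since the trajectory emerges along a centre manifold when $A_+ > 0$ and along an unstable manifold when $A_+ < 0$, producing two distinct smooth branches. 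The chief obstacle I expect is the single-limit step for $x(s)$: the crude bound $w \leq C u^{(m+2)/2}$ is too loose to pin down $x_0$ by direct estimation, so one must either sharpen this using $x' = y - \tfrac{m+1}{2}xz$ together with the decay of $z$, or exploit the centre-manifold normal form near the $x$-axis to upgrade the qualitative statement $(y,z) \to 0$ into genuine convergence of $x$.
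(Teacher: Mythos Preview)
Your outline matches the paper's strategy in its first two steps: $y\to 0$ from $u\to\infty$, and $z\to 0$ from the energy bound (\ref{bound-w}). The gap is exactly where you flag it: you write ``Assuming first that $x(s)$ remains bounded,'' but never return to discharge this assumption. Everything downstream (the $\omega$-limit argument, the centre-manifold attraction) rests on it, and the crude lower bound (\ref{bound-xi}) gives only $x \geq C u^{-1/2}$, which says nothing about an upper bound.

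The paper closes this gap with a short, concrete estimate that you should know. Parameterizing by $u$ (legitimate since $w>0$), one derives from (\ref{zero-dynamics}) and (\ref{scaling-explicit}) the pair
\[
\frac{dx}{du} = \frac{1}{u}\Bigl(\frac{1}{uz} - \frac{m+1}{2}x\Bigr), \qquad
\frac{dz}{du} = \frac{dx}{du} + \frac{1}{u^2}\Bigl(\frac{1}{uz} - \frac{m+3}{2}\,uz\Bigr).
\]
Subtracting, the troublesome term $\frac{1}{uz}$ cancels in $\frac{d(z-x)}{du}$, leaving
\[
\frac{d(z-x)}{du} = \frac{1}{u^3 z} - \frac{m+3}{2}\,\frac{z}{u} \;\geq\; -\frac{m+3}{2}\,\frac{z}{u} \;\geq\; -C\,u^{-3/2},
\]
where the last inequality uses the already-established bound $z \leq w_\infty u^{-1/2}$. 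Since $u^{-3/2}$ is integrable at infinity, $z-x$ is bounded below; combined with $z\to 0$ this gives the missing upper bound on $x$. This is the one idea your proposal is missing, and it is precisely the ``sharpen this using $x' = y - \tfrac{m+1}{2}xz$ together with the decay of $z$'' option you mention without carrying out. Your $\omega$-limit discussion for extracting a single $x_0$ once boundedness is known is actually more careful than what the paper writes (the paper simply asserts convergence to an equilibrium after establishing boundedness), and your remarks on piecewise $C^1$ regularity go beyond what the paper proves explicitly.
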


\vspace{0.25cm}

\begin{proof}
We can use the transformation (\ref{scaling-explicit}). Since $u \to \infty$, then $y \to 0$.
From the bound (\ref{bound-w}), we obtain
\begin{equation}
\label{bound-z}
z = w u^{-\frac{m+3}{2}} \leq w_{\infty} u^{-\frac{1}{2}}.
\end{equation}
Therefore, $z \to 0$ as $u \to \infty$. On the other hand, from the bound (\ref{bound-xi}), we
only obtain 
$$
x = \xi u^{-\frac{m+1}{2}} \geq C u^{-\frac{1}{2}}, 
$$
which is not sufficient
to claim that $x$ remains bounded as $u \to \infty$.

Using the transformation (\ref{scaling-explicit}) and
equations (\ref{equation-w}) and (\ref{equation-xi}), we obtain
the following system for $x$ and $z$ in the variable $u$:
\begin{equation}
\label{equation-x}
\frac{dx}{du} = \frac{1}{u} \left[ \frac{1}{u z} - \frac{m+1}{2} x \right]
\end{equation}
and
\begin{equation}
\label{equation-z}
\frac{dz}{du} = \frac{dx}{du} + \frac{1}{u^2} \left[ \frac{1}{u z} - \frac{m+3}{2} u z \right].
\end{equation}
It follows from the bound (\ref{bound-z}) and equation (\ref{equation-z}) that
there is a positive constant $C$ such that
$$
\frac{d}{du} (z-x) \geq - C u^{-\frac{3}{2}},
$$
for sufficiently large $u$. Since $u^{-\frac{3}{2}}$ is integrable as $u \to +\infty$,
then $z - x$ is bounded from below by a negative constant. Since $z$ is bounded
and approaches to zero as $u \to \infty$, we finally obtain
that $x$ is bounded from above by a positive constant for all sufficiently large $u$.
Finally, $(x_0,0,0)$ is an equilibrium state of system (\ref{infinity-dynamics}), therefore,
the trajectory approaches $(x_0,0,0)$ for some $x_0 \in [0,\infty)$.
\end{proof}

\vspace{0.25cm}

\begin{remark}
Integrating equation (\ref{equation-x}), we obtain
$$
x = \frac{c}{u^{\frac{m+1}{2}}} + \frac{1}{u^{\frac{m+1}{2}}} \int \frac{u^{\frac{m-1}{2}} du}{u z}.
$$
where $c$ is an arbitrary constant. If we can prove that $\lim_{u \to \infty} u z(u) = a_{\infty} > 0$, then
$$
\lim_{u\to\infty} x(u) = \frac{2}{a_{\infty} (m+1)} \in (0,\infty). 
$$
This would indicate that
the bound (\ref{bound-z}) is not sharp. However, we only have numerical data supporting this claim
for every $A_+ \in \mathbb{R}$.
\end{remark}

\vspace{0.25cm}

We will show numerically in \S5 that both pieces of the map (\ref{two-maps})
are monotonically increasing for all $A_+ \in \mathbb{R}$ and intersecting
at $x_0 = x_Q := \sqrt{2/(m+1)}$ for $A_+ = 0$.
Therefore, the piecewise $C^1$ map $\mathbb{R} \backslash \{0\} \ni A_+ \mapsto x_0 \in \mathbb{R}^+$
is in fact continuous at $A_+ = 0$. The exact value $x_Q$ corresponds to the
exact solution (\ref{exact-solution}) of the scalar differential equations (\ref{ode}).
This exact solution is recovered with the following elementary result.

\vspace{0.25cm}

\begin{proposition}
\label{lemma-exact}
There exists an exact solution of the dynamical systems (\ref{zero-dynamics})
and (\ref{infinity-dynamics}) given by
\begin{equation}
\label{exact-solution-zero}
\left\{ \begin{array}{l} \xi(\tau) =  \left( \frac{2^m (m+1)}{(m-1)^{m+1}} \right)^{\frac{1}{m-1}}
\left( \frac{a}{1 - \tau a} \right)^{\frac{m+1}{m-1}}, \\ u(\tau) =  \left( \frac{2(m+1)}{(m-1)^2} \right)^{\frac{1}{m-1}}
\left( \frac{a}{1 - \tau a} \right)^{\frac{2}{m-1}}, \\
w(\tau) = \left( \frac{2^m (m+1)}{(m-1)^{m+1}} \right)^{\frac{1}{m-1}}
\left( \frac{a}{1 - \tau a} \right)^{\frac{m+1}{m-1}}, \end{array} \right. \quad \tau \in (-\infty,a^{-1}),
\end{equation}
and
\begin{equation}
\label{exact-solution-infinity}
\left\{ \begin{array}{l} x(s) = x_Q, \\ y(s) =  \frac{b x_Q^{-1}}{1+ s b},  \\
z(s) = \frac{b}{1+ s b}, \end{array} \right. \quad s \in (-b^{-1},\infty),
\end{equation}
respectively, where $x_Q = \sqrt{2/(m+1)}$, whereas $a$ and $b$ are arbitrary positive parameters.
\end{proposition}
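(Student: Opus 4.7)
The plan is to verify directly that the explicit formulas satisfy the respective autonomous systems. The conceptual unifying observation is that both (\ref{exact-solution-zero}) and (\ref{exact-solution-infinity}) are parametrizations of the single self-similar profile (\ref{exact-solution}) expressed in the two coordinate systems: in $(\xi,u,w)$ it traces the curve $\xi = w = x_Q u^{(m+1)/2}$, and under the blow-up (\ref{scaling-explicit}) this curve collapses to the one-point locus $\{x = x_Q, \; z = x_Q y\}$. This explains why $x(s)$ is constant in (\ref{exact-solution-infinity}) and why $\xi(\tau)$ and $w(\tau)$ share the same formula in (\ref{exact-solution-zero}).

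First I would dispatch (\ref{exact-solution-infinity}), since it is a rational calculation. The key algebraic identity is $\frac{m+1}{2} x_Q = x_Q^{-1}$, equivalent to $x_Q^2 = 2/(m+1)$. From the formulas one immediately has $y = x_Q^{-1} z$, so the $x$-equation of (\ref{infinity-dynamics}) reduces to $y - \frac{m+1}{2} x z = x_Q^{-1} z - x_Q^{-1} z = 0$, consistent with $x(s) \equiv x_Q$. The $y$-equation $y' = -yz$ is immediate from $y, z \propto (1+sb)^{-1}$. For the $z$-equation, the linear-in-$z$ terms $\pm y$ and $\mp \frac{m+1}{2} x z$ cancel by the same identity, leaving $y^2 - \frac{m+3}{2} z^2 = \frac{m+1}{2} z^2 - \frac{m+3}{2} z^2 = -z^2$, which matches $z' = -z^2$ as read off from $z = b/(1+sb)$.

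For (\ref{exact-solution-zero}), I would exploit $w = \xi$ and $u^{m+1} = \frac{m+1}{2} \xi^2$ coming from the self-similar profile. Under these relations, the first equation $\dot\xi = u^m$ together with $w = \xi$ collapses (\ref{zero-dynamics}) to the single separable ODE $\dot u = w = \xi = x_Q u^{(m+1)/2}$, whose solution with constant of integration $1/a$ yields the stated formula $u(\tau) = C_2 (a/(1-\tau a))^{2/(m-1)}$ with $C_2 = (2(m+1)/(m-1)^2)^{1/(m-1)}$; the expressions for $\xi(\tau)$ and $w(\tau)$ then follow algebraically. The third equation $\dot w = u^m(1 \pm u) \mp \frac{m+1}{2} \xi w$ becomes, after substituting $w = \xi$, the identity $\dot\xi = u^m \pm u^{m+1} \mp \frac{m+1}{2} \xi^2$, which reduces via $u^{m+1} = \frac{m+1}{2} \xi^2$ to $\dot\xi = u^m$, so the loop closes for both signs simultaneously. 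No genuine obstacle arises; the only care needed is algebraic bookkeeping for the fractional powers in the constants $C_1, C_2$, which are engineered precisely so that the ODE for $u$ integrates to the stated form.
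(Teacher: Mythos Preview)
Your proposal is correct and follows essentially the same route as the paper's proof: both arguments use the reductions $x = x_Q$, $y = x_Q^{-1}z$ for system (\ref{infinity-dynamics}) and $w = \xi$, $u^{m+1} = \frac{m+1}{2}\xi^2$ for system (\ref{zero-dynamics}), reducing each to a single separable scalar ODE. The only cosmetic differences are that the paper integrates $\dot\xi = \left(\frac{m+1}{2}\right)^{m/(m+1)}\xi^{2m/(m+1)}$ while you integrate the equivalent equation $\dot u = x_Q u^{(m+1)/2}$, and the paper presents the calculation as ``trying a reduction and checking compatibility'' whereas you frame it as direct verification; your added remark linking both formulas to the exact profile (\ref{exact-solution}) is a nice unifying observation not made explicit in the paper.
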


\vspace{0.25cm}

\begin{proof}
Consider the system (\ref{infinity-dynamics}) and try the reduction $x = x_0$, where $x_0$ is constant in $s$.
Then, on setting $y = \frac{m+1}{2} x_0 z$ we obtain the following differential equations:
$$
y' = - zy, \quad z' = y^2 - \frac{m+3}{2} z^2.
$$
This system is compatible with the constraint $y = \frac{m+1}{2} x_0 z$ if $z' = -z^2$
and $x_0^2 = \frac{2}{m+1} \equiv x_Q^2$, so that $y = x_Q^{-1} z$. The general solution of $z' = -z^2$ is
$z(s) = \frac{b}{1+ s b}$ for a positive parameter $b$. The solution is defined for $s > -b^{-1}$.

Consider the system (\ref{zero-dynamics}) and try the reduction $\xi = w = \dot{u}$. Then, the system
is compatible with the reduction if $\dot{\xi} = u^m$ and $u^{m+1} = \frac{m+1}{2} \xi^2$.
Therefore,
$$
\dot{\xi} = \left( \frac{m+1}{2} \right)^{\frac{m}{m+1}} \xi^{\frac{2m}{m+1}},
$$
which admits a general solution given by
$$
\xi(\tau) =  \left( \frac{2^m (m+1)}{(m-1)^{m+1}} \right)^{\frac{1}{m-1}}
\left( \frac{a}{1 - \tau a} \right)^{\frac{m+1}{m-1}},
$$
where $a$ is an arbitrary positive constant. The solution exists
for $\tau < a^{-1}$. Other components $u$ and $w$ are found from
the above relations.
\end{proof}

\vspace{0.25cm}

\begin{remark}
Note that in the exact solution (\ref{exact-solution-zero}), $\xi(\tau) \to 0 = A_+$ as $\tau \to -\infty$. Therefore,
the exact solution of Proposition \ref{lemma-exact} corresponds to the choice $A_+ = 0$ in Lemma
\ref{lemma-connection}.
\end{remark}

\subsection{\label{H-minus-system}The system for $H_-$ ($t < 0$)}

By Propositions \ref{proposition-center-manifold-zero} and \ref{proposition-stable-manifold-zero},
for every nonzero $A \equiv A_-$, there is a unique trajectory
of the dynamical system (\ref{zero-dynamics}) in variables $(\xi,u,w)$ that departs from
the equilibrium point $(A_-,0,0)$ as $\tau \to -\infty$ and belongs to the domain $u > 0$.
This trajectory is contained in the unstable manifold $W_u(A_-,0,0)$ if $A_- > 0$ and
in the center manifold $W_c(A_-,0,0)$ if $A_- < 0$.

By Propositions \ref{proposition-center-manifold-infinity} and \ref{proposition-stable-manifold-infinity},
for every $x_0 > 0$, there is a one-dimensional set of trajectories
of the dynamical system (\ref{infinity-dynamics}) in variables $(x,y,z)$
that reaches the equilibrium point $(x_0,0,0)$ as $s \to +\infty$ and belongs
to the domain $y > 0$. This trajectory is contained in the center manifold $W_c(x_0,0,0)$.

If we try an argument used in the proof of Lemma \ref{lemma-connection}, then it becomes
clear that most of the trajectories departing  from
the equilibrium point $(A_-,0,0)$ in system (\ref{zero-dynamics})
will not arrive to the  equilibrium point $(x_0,0,0)$ in system (\ref{infinity-dynamics})
but instead reach the value $u = 0$ in a finite $\tau \in \mathbb{R}$.
This indicates that, first, there are very few values of $A_-$, for which the
trajectories may reach infinite values for $u$, and second, the numerical method
should not be based on the trajectories departing from the equilibrium point $(A_-,0,0)$
(such a shooting method was used previously in \cite{Foster}). Instead, it may be better
to look for the one-dimensional trajectory departing the  equilibrium point $(x_0,0,0)$ in
system (\ref{infinity-dynamics}) in the negative direction of the time variable $s$.

To illustrate the previous point, we show on figure \ref{return-to-zero}
the trajectories of the system (\ref{zero-dynamics}) with $m = 3$ starting from the equilibrium
point $(A_{\pm},0,0)$ along either center (for $A_+ > 0$) or unstable (for $A_- > 0$) manifolds.
The trajectories of the system for $H_+$ extend from small to infinite values of $H_+$, see panel (a).
Contrastingly, the trajectories of the system for $H_-$ turn back and return to small values of $H_-$, see panel (b).
Note that the return time is significantly different between the first two and the last two trajectories.
This indicates the presence of a particular value of $A_-$, for which there exists a trajectory
that extends from small to infinite values of $H_-$, see (\ref{m-is-3-vals}) below.

\begin{figure}[h!]        \centering
\includegraphics[width=0.49\textwidth]{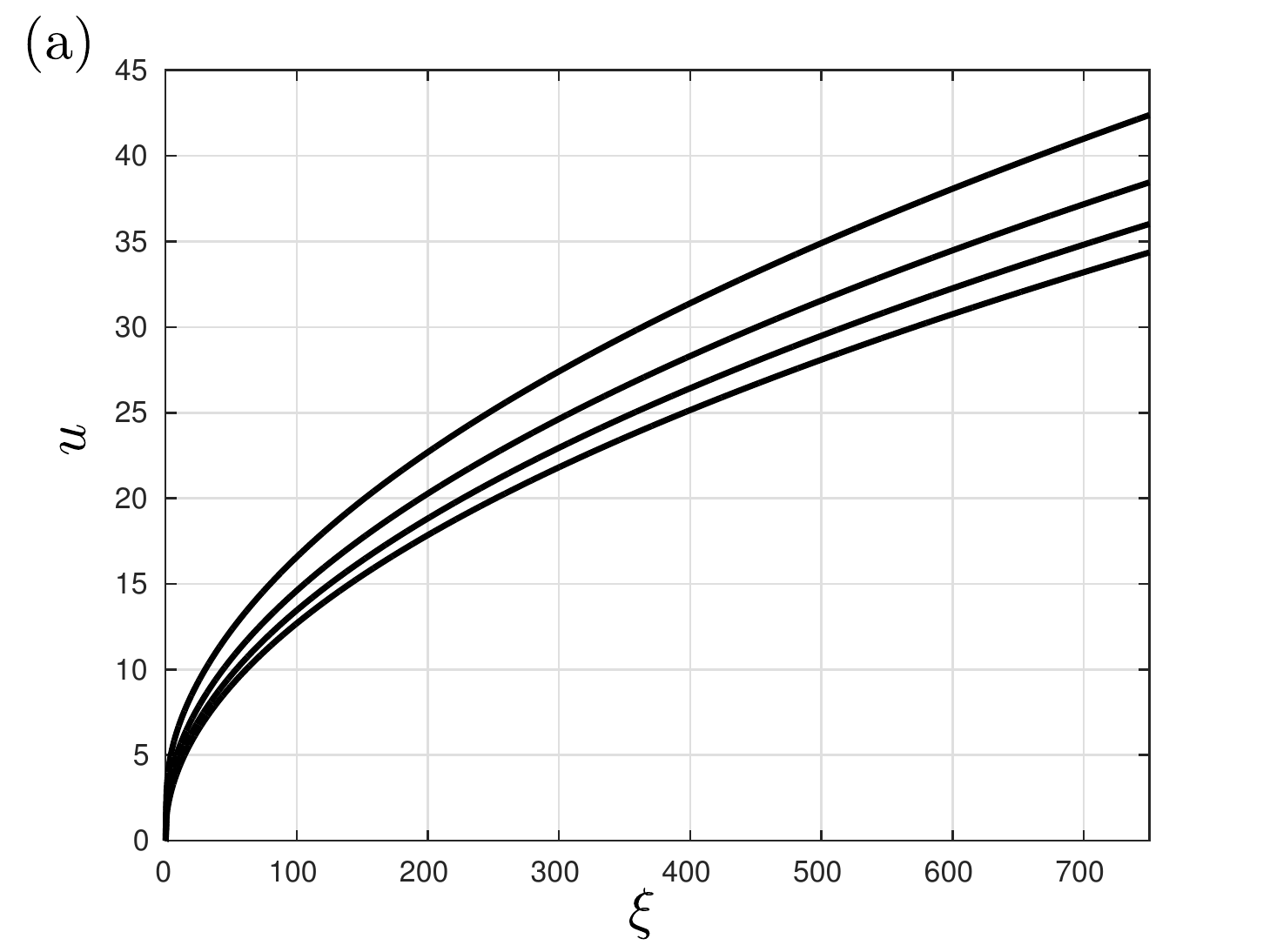}
\includegraphics[width=0.49\textwidth]{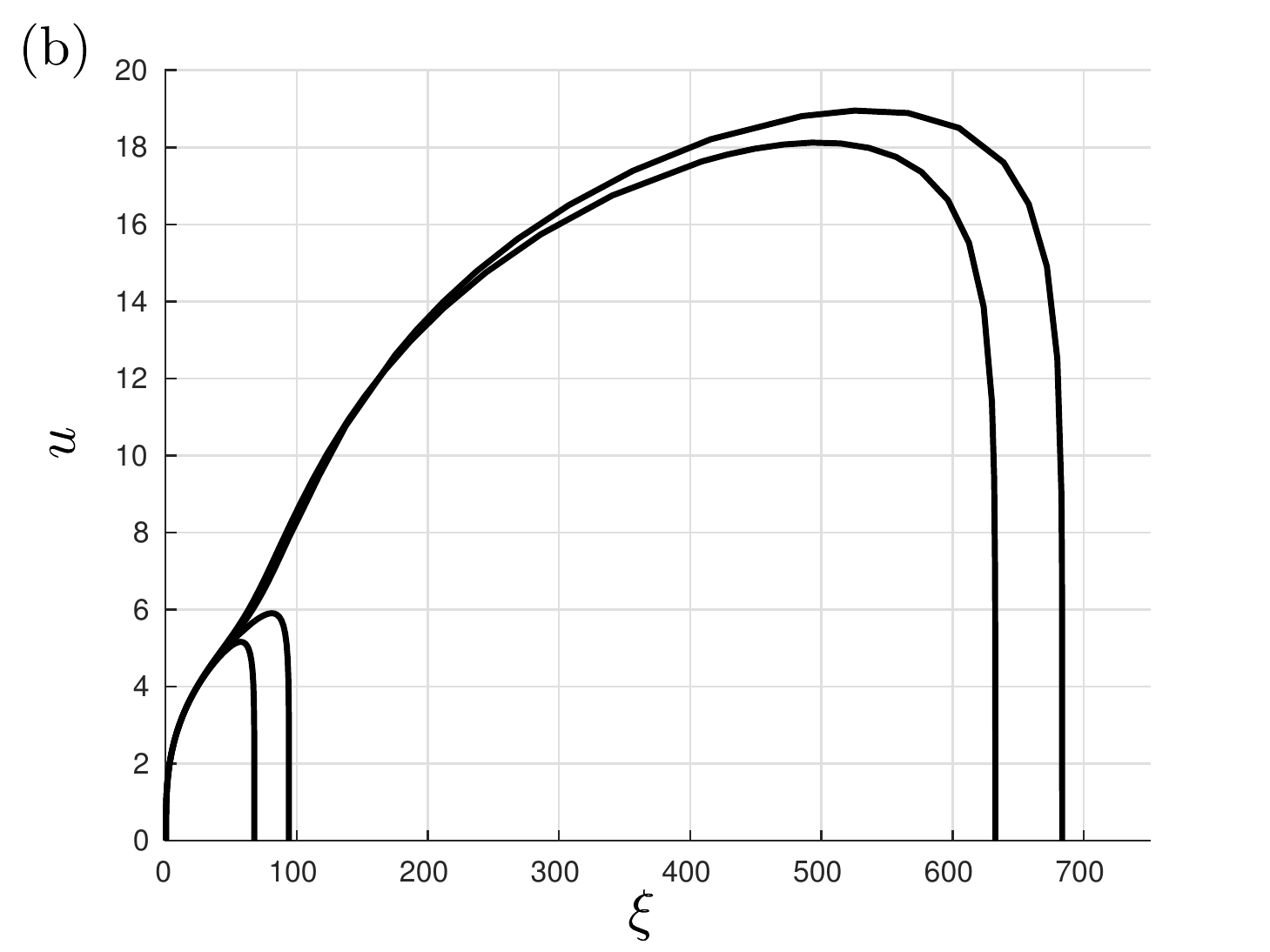}
\caption{Panels (a) and (b) show trajectories of the system (\ref{zero-dynamics}) with $m = 3$ for $H_+$ and $H_-$ respectively.
In both cases, the trajectories start from the equilibrium point $(A_{\pm},0,0)$ and depart along either the center
or the unstable manifolds.}
\label{return-to-zero}
\end{figure}

In order to justify our numerical scheme, we shall prove that the trajectory departing the equilibrium
point $(x_0,0,0)$ in system (\ref{infinity-dynamics}) in the negative $s$ direction either intersects
the plane $u = 0$ or the plane $w = 0$ of system (\ref{zero-dynamics}).\\

\begin{lemma}
\label{lemma-continuation}
Fix $x_0 > 0$ and consider a one-parameter trajectory of the dynamical system (\ref{infinity-dynamics}) for
the lower sign such that $(x,y,z) \to (x_0,0,0)$ as $s \to +\infty$ and $y > 0$. Then, there exists
an $s_0 \in \mathbb{R}$ (or $s_0 = -\infty$) such that
\begin{itemize}
\item[(i)] either $z(s_0) = 0$ and $y(s_0) \in (0,\infty]$,
\item[(ii)] or $y(s) \to +\infty$ as $s \to s_0$, whereas
$\lim\limits_{s \to s_0} \frac{z(s)}{y(s)^{\frac{m+3}{2}}} \in [0,\infty)$.
\end{itemize}
\end{lemma}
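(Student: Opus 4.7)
\begin{proof1}{\em proposal for Lemma \ref{lemma-continuation}.}
My plan is a dichotomy-continuation argument running the trajectory backwards in $s$ from the equilibrium $(x_0,0,0)$. The main ingredients are the sign-preservation identity $y'=-zy$, the center-manifold initial condition from Proposition \ref{proposition-center-manifold-infinity} (lower sign), and a conversion back to the $(\xi,u,w)$ picture of system (\ref{zero-dynamics}) whenever $y$ stays bounded along the backward flow.

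First, Proposition \ref{proposition-center-manifold-infinity} (lower sign) puts the trajectory on the center manifold near $s=+\infty$ with $y,z>0$ small, and since $y'=-zy$, $y>0$ is preserved on the entire maximal interval. As long as $z>0$, $y$ is monotonically nondecreasing as $s$ decreases, and $y(s)\geq y(s_*)$ for any reference $s_*>s_0$ and all $s<s_*$. Define $s_0\in[-\infty,+\infty)$ to be the infimum of $s$ such that the trajectory extends on $(s,+\infty)$ with $z>0$ throughout. If $z(s)\to 0$ as $s\to s_0^+$ with $s_0$ finite, then by monotone convergence $y(s_0):=\lim_{s\to s_0^+}y(s)\in(0,+\infty]$, which is case (i).

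Otherwise $z>0$ on all of $(s_0,+\infty)$ (possibly with $s_0=-\infty$ or with blow-up of some component at finite $s_0$), and I must prove (a) $y(s)\to+\infty$ and (b) the limit $\lim z/y^{(m+3)/2}$ exists in $[0,\infty)$. For (a), I reason by contradiction: assume $y\le M<\infty$. Then the transformation (\ref{scaling-explicit}) together with the time rescaling $d\tau/ds=y^{(m+1)/2}$ is a diffeomorphism for $y>0$, and the trajectory corresponds to a solution of system (\ref{zero-dynamics}) for the lower sign with $u=1/y\ge 1/M>0$ and $w=z/y^{(m+3)/2}>0$. On this trajectory $\dot u=w>0$ makes $u$ monotone, and the energy identity for the lower-sign system,
\[
\frac{d}{d\tau}\Bigl[\tfrac12 w^2-\tfrac{u^{m+1}}{m+1}+\tfrac{u^{m+2}}{m+2}\Bigr]=\tfrac{m+1}{2}\,\xi\, w^2,
\]
combined with $\dot\xi=u^m\ge M^{-m}>0$ (which forces $\xi$ to take a definite sign on the backward approach), should yield that the trajectory cannot stay in any bounded subset of $\{u\ge 1/M\}$: an omega-limit analysis rules this out since the only equilibria lie on $\{u=0\}$, which is excluded by $u\ge 1/M$.

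For (b), once $y\to+\infty$ is in hand, the $(\xi,u,w)$ picture has $u\to 0^+$, reached in a finite $\tau$-time $\tau_0$ since $\dot u=w$ is bounded below along the backward approach. I then integrate $\dot w=u^m(1-u)+\tfrac{m+1}{2}\xi w$ from a reference $\tau$ down to $\tau_0$ using a Gr\"onwall bound on the $\xi w$ term and the integrability of $u^m$ as $u\to 0^+$ (which follows from $du=w\, d\tau$ with $w$ bounded away from zero), to obtain a finite non-negative limit $w_0:=\lim w$, which translates back to the desired limit of $z/y^{(m+3)/2}$.

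The main obstacle is (a): rigorously excluding a bounded backward trajectory. The energy $\tilde E$ is not monotone because $\xi$ may change sign along the flow, so a direct energy contradiction is not immediately available. The plan is to first use the monotone growth $\dot\xi=u^m>0$ to pin the sign of $\xi$ on the backward approach, then combine this with the signed energy identity and the omega-limit argument to close the contradiction. A secondary subtlety is ensuring that the approach to $u=0$ in part (b) happens in finite $\tau$ so that Gr\"onwall's inequality produces an actual limit rather than only a $\limsup$ bound.
\end{proof1}
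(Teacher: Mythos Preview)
Your dichotomy set-up matches the paper's, and you are right to flag (a) --- excluding a bounded backward orbit with $z>0$ throughout --- as the delicate step; the paper in fact simply asserts this alternative (``either $z$ vanishes before $y$ diverges or $y$ diverges before $z$ vanishes'') without a careful argument, so your concern there is legitimate but not a deficiency relative to the published proof. Your proposed omega-limit fix is problematic, however, because $\dot\xi=u^m\ge M^{-m}>0$ forces $\xi$ to be unbounded along any such putative orbit, so the trajectory does not sit in a compact set and standard $\omega$-limit reasoning does not apply directly.

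The more serious issue is in (b), where your argument is circular. To get the finite-$\tau_0$ claim and the integrability of $u^m$ (via $d\tau=du/w$) you explicitly assume ``$w$ bounded away from zero''; but the lemma allows $\lim z/y^{(m+3)/2}=0$, so no positive lower bound on $w$ is available a priori. Without it the Gr\"onwall step, taken over a possibly infinite $\tau$-interval with $\xi$ possibly unbounded, gives no control on $w$.

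The paper avoids this by a different and cleaner device: rather than working in $\tau$, it reparameterises by the monotone variable $y$ and writes the closed pair
\[
\frac{d\xi}{dy}=-\frac{1}{y^{m+2}w},\qquad
\frac{dw}{dy}=\frac{1-y}{y^{m+3}w}-\frac{m+1}{2}\,\frac{\xi}{y^2}.
\]
It then argues by contradiction: if $w\to+\infty$, then $w\ge w_0>0$ eventually, so $|d\xi/dy|\le (y^{m+2}w_0)^{-1}$ is integrable at infinity and $|\xi|\le\xi_\infty$; feeding this back makes $|dw/dy|$ integrable as well, forcing $w$ bounded --- a contradiction. The only lower bound on $w$ ever invoked is the one supplied by the contradiction hypothesis itself, so nothing is assumed about the actual limit. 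That is the idea your argument for (b) is missing.
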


\vspace{0.25cm}

\begin{proof}
For convenience, let us reverse the time variable, by transforming $s \to -s$, and rewrite system
(\ref{infinity-dynamics}) with the lower sign in the negative direction of $s$:
\begin{equation}
\label{infinity-dynamics-negative-time}
\left\{ \begin{array}{l}
x' = \frac{m+1}{2} x z - y, \\
y' = z y, \\
z' = y (1 - y) - \frac{m+1}{2} x z + \frac{m+3}{2} z^2.\end{array} \right.
\end{equation}
By Proposition \ref{proposition-center-manifold-infinity}, we have $y > 0$ and $z > 0$
for the trajectory departing from the equilibrium point $(x_0,0,0)$ (in negative $s$)
along $W_c(x_0,0,0)$. From the second equation of system (\ref{infinity-dynamics-negative-time}),
$y$ remains an increasing function of negative $s$ as long as $z$ remains positive.
Therefore, we have an alternative: either $z$ vanishes before $y$ diverges or
$y$ diverges before $z$ vanishes.

The first choice of the alternative gives case (i). For the second choice,
let us consider variables $\xi$ and $w$ given by (\ref{scaling-explicit})
and parameterized by $y$ in the limit $y \to +\infty$
(the map $s \mapsto y$ is one-to-one and onto). By dividing the first and third equations
in system (\ref{infinity-dynamics-negative-time}) by the second equation, we obtain
$$
\frac{dx}{dy} = \frac{m+1}{2} \frac{x}{y} - \frac{1}{z}, \quad
\frac{dz}{dy} = \frac{1-y}{z} - \frac{m+1}{2} \frac{x}{y} + \frac{m+3}{2} \frac{z}{y}.
$$
By using variables $\xi$ and $w$ given by (\ref{scaling-explicit}), we obtain
\begin{equation}
\label{closed-infinity}
\frac{d\xi}{dy} = - \frac{1}{y^{m+2} w}, \quad
\frac{dw}{dy} = \frac{1-y}{y^{m+3} w} - \frac{m+1}{2} \frac{\xi}{y^2}.
\end{equation}
To show that the second choice of the alternative above gives case (ii), we will
prove that $w$ remains finite as $y \to +\infty$. This is done by a contradiction.
Assume that $w \to +\infty$ as $y \to +\infty$. Therefore, there exists $w_0 > 0$ such that
$w \geq w_0$ for all sufficiently large $y$. Then, from the first equation of system (\ref{closed-infinity}),
we have for sufficiently large $y$,
$$
\left| \frac{d\xi}{dy} \right| \leq  \frac{1}{y^{m+2} w_0}.
$$
Because $y^{-m-2}$ is integrable at infinity, there is a finite positive $\xi_{\infty}$ such that
$|\xi| \leq \xi_{\infty}$ for all sufficiently large $y$. Then, from the second equation of
system (\ref{closed-infinity}), we obtain for sufficiently large $y$,
$$
\left| \frac{dw}{dy} \right| \leq \frac{y-1}{y^{m+3} w_0} + \frac{m+1}{2} \frac{\xi_{\infty}}{y^2}.
$$
Since both $y^{-m-2}$ and $y^{-2}$ are integrable at infinity, there is a finite positive
$w_{\infty}$ such that $w \leq w_{\infty}$ for all sufficiently large $y$, contradicting the
assumption that $w \to +\infty$ as $y \to +\infty$. Therefore, the case (ii) is proved.
\end{proof}

\vspace{0.25cm}

\begin{corollary}
The trajectory of system (\ref{infinity-dynamics})
departing from the equilibrium point $(x_0,0,0)$ with $x_0 > 0$ in the negative direction of $s$
intersects either the half-plane $w = 0$ and $u \geq 0$ in system (\ref{zero-dynamics}) in case (i)
of Lemma \ref{lemma-continuation} or the half-plane $u = 0$, $w \geq 0$ in system (\ref{zero-dynamics})
in case (ii). Moreover, in the corresponding cases,
\begin{itemize}
\item[(i)] if $\lim\limits_{s \to s_0} u(s) > 0$, then $\lim\limits_{s \to s_0} |\xi(s)| < \infty$
\item[(ii)] if $\lim\limits_{s \to s_0} w(s) > 0$,
then $\lim\limits_{s \to s_0} |\xi(s)| < \infty$.
\end{itemize}
Consequently, one can define two piecewise $C^1$ maps
\begin{equation}
\label{maps-negative}
\mbox{\rm (i)} \quad  \mathbb{R}^+ \ni x_0 \mapsto (\xi,u) \in \mathbb{R} \times \mathbb{R}^+ \quad \mbox{\rm and} \quad
\mbox{\rm (ii)} \quad \mathbb{R}^+ \ni x_0 \mapsto (\xi,w) \in \mathbb{R} \times \mathbb{R}^+.
\end{equation}
\label{corollary-continuation}
\end{corollary}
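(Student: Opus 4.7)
The plan is to translate the two alternatives of Lemma \ref{lemma-continuation} back into the original coordinates $(\xi,u,w)$ via the diffeomorphism (\ref{scaling-explicit}), and then to invoke smooth dependence on initial conditions to conclude the piecewise $C^1$ regularity of the two maps in (\ref{maps-negative}). Both half-plane statements are essentially just the images of the two alternatives of the lemma under the transformation $u = 1/y$, $w = z/y^{(m+3)/2}$.

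First I would handle case (i) of Lemma \ref{lemma-continuation}, where $z(s_0)=0$ and $y(s_0)\in(0,\infty]$. Since $w = z/y^{(m+3)/2}$, we get $w(s_0)=0$ and $u(s_0) = 1/y(s_0) \geq 0$, so the trajectory meets the half-plane $\{w=0,\,u\geq 0\}$ in system (\ref{zero-dynamics}). Under the stronger hypothesis $\lim_{s \to s_0} u(s) > 0$, the value $y(s_0)$ is finite and strictly positive, so on the closed $s$-interval ending at $s_0$ the trajectory $(x,y,z)$ stays in a compact subset of $\mathbb{R}^3$ where $y$ is bounded away from both $0$ and $\infty$. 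The polynomial vector field of (\ref{infinity-dynamics-negative-time}) is Lipschitz on this compact set, so $x$ remains bounded and admits a finite limit at $s_0$, from which $\xi = x/y^{(m+1)/2}$ also has a finite limit.

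Next I would handle case (ii), where $y(s)\to +\infty$ and $z/y^{(m+3)/2}$ has a finite limit. This directly yields $u \to 0$ and $w$ bounded, placing the limit point in the half-plane $\{u=0,\,w\geq 0\}$. Under $\lim_{s\to s_0} w(s) > 0$, the key estimate already lives inside the proof of Lemma \ref{lemma-continuation}: from the first equation of (\ref{closed-infinity}), $|d\xi/dy| \leq 1/(y^{m+2} w_0)$ for sufficiently large $y$ with some $w_0 > 0$, and $y^{-m-2}$ is integrable at infinity, so $\xi$ converges to a finite limit. Finally, for the piecewise $C^1$ maps, the initial datum on $W_c(x_0,0,0)$ depends smoothly on $x_0$ by Proposition \ref{proposition-center-manifold-infinity}, and the flow of (\ref{infinity-dynamics-negative-time}) is smooth. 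In case (i), the surface $\{z=0\}$ is transverse to the flow at the hitting point (since $z' \neq 0$ there, as otherwise the trajectory would lie in $\{z=0\}$), so the implicit function theorem gives $C^1$ dependence of both the hitting time $s_0(x_0)$ and the image $(\xi(s_0),u(s_0))$. In case (ii), reparametrizing by $y$ via (\ref{closed-infinity}) and reading off the limits at $y = \infty$ again yields $C^1$ dependence on $x_0$.

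The main obstacle will be the dichotomy between cases (i) and (ii): the parameter space $\mathbb{R}^+$ of $x_0$ must be partitioned into open sets on which the same alternative applies, with a (presumably thin) exceptional set where the trajectory simultaneously reaches $u = 0$ and $w = 0$, i.e.\ the origin of system (\ref{zero-dynamics}). It is this partition, and the possibility that $s_0(x_0)$ fails to be smooth at the boundary between the two regimes, that forces the word \emph{piecewise} in the conclusion; away from the exceptional set, the transversality arguments sketched above give genuine $C^1$ regularity.
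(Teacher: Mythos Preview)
Your proposal is correct and follows essentially the same route as the paper: translate cases (i) and (ii) of Lemma~\ref{lemma-continuation} back to $(\xi,u,w)$ via (\ref{scaling-explicit}), and for case (ii) invoke the integrability estimate $|d\xi/dy|\le 1/(y^{m+2}w_0)$ already obtained in the proof of that lemma. The only minor difference is in case (i), where the paper argues directly from $\xi' = -y^{(1-m)/2}$ (derived from the first equation of (\ref{infinity-dynamics-negative-time}) rewritten in $\xi$) rather than your compactness argument on the $(x,y,z)$ trajectory, and the paper does not spell out the transversality/implicit-function-theorem justification of the piecewise $C^1$ regularity that you supply.
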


\begin{proof}
In case (i), it is trivial to see that $z(s_0) = 0$ and $y(s_0) \in (0,+\infty]$ corresponds to the half-plane $w = 0$
and $u \geq 0$. The first equation of system (\ref{infinity-dynamics-negative-time}) can be written for the variable $\xi$ as follows:
$$
\xi' = -y^{\frac{1-m}{2}},
$$
where the prime still denotes the derivative with respect to the time variable $s$ in the negative
direction of $s$. If $y$ remains finite as $s \to s_0$ (so that $u(s_0) > 0$),
then $\xi(s_0)$ is bounded.

In case (ii), it is also trivial to see that $\lim_{s \to s_0} y(s) = +\infty$ and
$\lim\limits_{s \to s_0} z(s) y(s)^{-\frac{m+3}{2}} \in [0,\infty)$ corresponds
to the half-plane $u = 0$ and $w \geq 0$. If $w$ remains nonzero in the limit $y \to +\infty$,
then, there exists $w_0 > 0$ such that $w \geq w_0$ for all sufficiently large $y$.
Then, the same analysis as in Lemma \ref{lemma-continuation} applies to the first equation of system (\ref{closed-infinity})
and shows that $\xi$ remains finite as $y \to +\infty$.
\end{proof}

\vspace{0.25cm}\emph{}

Unfortunately, we do not control the value of $\xi$ at the intersection of
the two piecewise $C^1$ maps (\ref{maps-negative}). However,
we will show numerically in \S\ref{num-scheme} that the piecewise $C^1$ maps (\ref{maps-negative})
are typically connected at the points where $u = w = 0$ and $\xi = A \in \mathbb{R}$. In this case,
a true solution $H_-$ of the differential equation
(\ref{ode}) with the lower sign satisfying properties (\ref{(i)})--(\ref{(iii)}) exist.

\section{\label{num-scheme}Numerical results}

Let us describe a new numerical approach, based on the results of Lemmas \ref{lemma-connection}
and \ref{lemma-continuation}, that will be used to furnish meaningful solutions to 
the  differential equations (\ref{ode}), for $H_-$ and $H_+$. In \S\ref{t-neg-shooting} and \S\ref{t-pos-solutions} we describe the numerical procedures for finding solutions for $H_-$ and $H_+$ respectively. Finally, in \S\ref{num-sum},
we summarize the results of the numerical experiments and compare them with the results
found in Foster \emph{et. al.} \cite{Foster}.

\subsection{\label{t-neg-shooting}Solutions for $H_-$ ($t<0$)}

As discussed in \S\ref{H-minus-system}, we wish to numerically construct a unique
trajectory from infinite to finite values of $H_-$. To do so, we integrate the system (\ref{infinity-dynamics}) from near the equilibrium point $(x_0,0,0)$
in the far-field towards the equilibrium point $(A_-,0,0)$ of the system (\ref{zero-dynamics})
in the near-field. The numerical procedure is carried out as follows:

\begin{remunerate}
\item Select a value of $x_0>0$. Ideally, one would like to begin by using this choice of $x_0$
to specify unique initial values for $(x,y,z)=(x_0,0,0)$, and then numerically integrating
the system (\ref{infinity-dynamics}) backward in the `time' variable $s$. Equivalently,
one could integrate the system (\ref{infinity-dynamics-negative-time}) forwards in time. However,
since $(x_0,0,0)$ is an equilibrium point, it is not possible to escape $(x_0,0,0)$ in a finite time.
Thus, in order to ensure that any numerical integration scheme can depart from near
the equilibrium point along the center manifold, $W_c(x_0,0,0)$, it is necessary to take
a `small step', say $\delta \ll 1$, away from $(x_0,0,0)$ using the relevant asymptotic behaviour.
Using (\ref{center-manifold-infinity}) and (\ref{center-manifold-infinity-dynamics}) we find
that a trajectory on the center manifold $W_c(x_0,0,0)$ has the local behaviour
\begin{equation}
\label{numeric-begin-t-negative}
\left\{ \begin{array}{l}
x = x_0 + \left( \frac{m+1}{2} - \left( \frac{m+1}{2} \right)^2 x_0^2 \right) \delta + \mathcal{O}(\delta^2), \\
y = \frac{m+1}{2} x_0 \delta + \mathcal{O}(\delta^2), \\
z = \delta, \end{array} \right.
\end{equation}
for small positive values of $\delta$.
Having selected values for both $x_0$ and $\delta$, the behaviours (\ref{numeric-begin-t-negative})
may be used to specify unique (pseudo-)initial values for $(x,y,z)$ and to begin the numerical
integration of the system (\ref{infinity-dynamics}) in the direction of decreasing $s$. In this study,
numerical integration of the system (\ref{infinity-dynamics}) was carried out using the {\tt ode45} routine
in MATLAB with the default settings, except {\tt AbsTol} and {\tt RelTol} which were both set to have
a value of $10^{-10}$. Selecting an appropriate value of $\delta$ is a somewhat ad-hoc procedure:
there is trade-off between taking $\delta$ too small, which renders it difficult for the numerical
integration to escape the neighbourhood of the equilibrium point (leading to poor accuracy of the integration), and taking $\delta$ too large
which could result in low accuracy of the asymptotic expansion (\ref{numeric-begin-t-negative}).
However, we found that choosing $\delta \in (10^{-3},10^{-2})$  gave good results over the ranges
of parameters we studied. Robustness of the results with respect to changes in: (i) the choice of $\delta$, and; (ii) the number of terms in the asymptotic expansion (\ref{numeric-begin-t-negative}) were verified.

\item The result of Corollary \ref{corollary-continuation} asserts that all such trajectories will ultimately
--- in either a finite or infinite time --- intersect with either the plane $w=0$ or $u=0$, see
the maps (\ref{maps-negative}). To ensure accurate numerical integration of the system in the
near-field, in variables $(\xi,u,w)$, it is necessary to `switch' from integrating the
far-field system (\ref{infinity-dynamics}) to the near-field system (\ref{zero-dynamics}) backward in time.
The choice of conditions under which this switch should occur is, again, somewhat arbitrary. However,
as long as the values of $(x,y,z)$ --- and hence the values of $(\xi,u,w)$ --- are all finite and non-zero,
this switching is valid at any point. In this study, we chose to switch from integrating
(\ref{infinity-dynamics}) to (\ref{zero-dynamics}) when $x y^{-(m+1)/2} = 20$ (or equivalently
when $\xi = 20$). However, we verified that our results were robust to changes in the choice
of switching conditions. This switching procedure can be readily automated within MATLAB using
the {\tt Events} function to autonomously: (i) stop the integration of the system (\ref{infinity-dynamics})
when specified the conditions are satisfied; (ii) read-off the final values of $(x,y,z)$;
(iii) transform these to corresponding values for $(\xi,u,w)$ using the change of variables
(\ref{scaling-explicit}), and; (iv) begin the integration of (\ref{zero-dynamics}) backwards
in time from the appropriate initial data.

\item The integration of the system (\ref{zero-dynamics}) is then continued backward
in the time variable $\tau$ until either $w=0$ or $u=0$. Again, we used the {\tt Events}
function to autonomously detect when either of these events occurred and to stop the
integration. It is noteworthy that we found it helpful to use {\tt ode15s}
to integrate the near-field system --- again, the default settings were used
with the exception of {\tt AbsTol} and {\tt RelTol} which we both set to $10^{-10}$.
Although {\tt ode15s} is typically slower than {\tt ode45}, it is considerably
more appropriate to deal with integrating systems of equations that exhibit apparent stiffness.
This apparent stiffness, manifested as rapid changes in the direction of the trajectory
in variables $(\xi,u,w)$, arises as an artifact of the infinite time required to
reach the equilibrium point $(A_-,0,0)$. Thus, if a trajectory approaches very
close to the equilibrium point it appears to be rapidly rejected from that neighbourhood.
When the integration is terminated, we record the following pieces of data: 
(i) the selected value of $x_0$; (ii) whether the trajectory reached $u=0$ or $w=0$, and; 
(iii) the value of either $(\xi,u)$ or $(\xi,w)$ at the termination point. 
These data define a point on one of the two piecewise $C^1$ maps defined in (\ref{maps-negative}). 
It is by computing a large number of trajectories, each emanating from different values of $x_0$, 
that we are able to trace out the forms of these maps.
\end{remunerate}

The non-trivial solutions for $H_-$ that we seek correspond to trajectories emanating from particular
equilibrium points in the far-field, say $(x_0^*,0,0)$, that reach the near-field
equilibrium point, $(A_-,0,0)$ for some finite $A_- \neq 0$. In addition to
these non-trivial solutions, we recall that for all values of $m>1$ there exists a trivial solution
given by (\ref{exact-solution-zero}) and (\ref{exact-solution-infinity}) that emanates from
$x_0^* = x_Q = \sqrt{2/(m+1)}$ and reaches $A_- =0$, as discussed in Proposition \ref{lemma-exact}.
Some representative results for $m=2,3,4$ and $5$ are shown in figure \ref{individual-maps}. In these plots,
a suitable non-trivial solution for $H_-$ is found by identifying a value of $x_0=x_0^*$ for
which the value of $(\xi,u)=(A_-,0)$ --- or $(\xi,w)=(A_-,0)$ --- at the termination point.

In figure \ref{example-shooting}, we show some representative computations
highlighting the differences between the near-field behaviour of trajectories
local to a true solution with $A_-<0$ and $A_->0$. As is evidenced by figure \ref{individual-maps}, close to a trajectory with $A_->0$
the piecewise continuous $C^1$ maps defined in (\ref{maps-negative}) are smooth,
whereas close to a trajectory with $A_-<0$ the maps exhibit rapid changes and (vertical) cusp-like features. As a result, determining negative value(s) of $A_-$ is more challenging ---
despite resolving $x_0$ to machine precision (approximately $10^{-14}$ in the standard IEEE double precision),
it is not possible to approximate the value of $x_0^*$ sufficiently well that an accurate
estimate of $A_-(<0)$ can be determined. In these cases, we therefore found it necessary
to implement one additional stage in the numerical scheme as follows.

Once $x_0^*$ had been determined up to machine precision, and two `limiting'
trajectories had been identified (one emanating from $x_0^* \pm \hat{\delta}$ and
terminating at $u=0$, and the other emanating from $x_0^* \mp \hat{\delta}$ and
terminating at $w=0$, where $\hat{\delta} \ll 1$) the expected near-field
linear asymptotic behaviour of $u(\xi)$, according to the expansion (\ref{center-manifold-zero-asymptotics})
for the solution $H_-$ in Theorem \ref{theorem-center-manifold-zero}, is clearly visible.
This linear behaviour can then be extrapolated, in the direction of decreasing $\xi$,
until it intersects the $\xi$-axis. This intersection point is, to a good approximation,
the value of $A_-(<0)$ corresponding to the trajectory emanating from $x_0^*$.
Panel (a) of figure \ref{example-shooting} gives an example of the linear extrapolation procedure described above.

Using the procedure described above, for all values of $m>1$, we recover the trivial solution discussed in Proposition \ref{lemma-exact}. In addition, we see that in the case $m=3$ there is only one suitable non-trivial solution with the following data:
\begin{equation}
m = 3 : \quad x_0^* \approx 0.767, \quad A_- \approx 0.129. \nonumber
\end{equation}
For $m=4$ and $m=5$, similar results are observed with one trivial and only one non-trivial solution as follows:
\begin{equation}
m=4 : \quad x_0^* \approx 1.165, \quad A_- \approx 0.386, \nonumber
\end{equation}
and
\begin{equation}
m=5 : \quad x_0^* \approx 1.666, \quad A_- \approx 0.501. \nonumber
\end{equation}
Contrastingly, in the case $m=2$ we find that three suitable non-trivial solutions exist with the data:
\begin{equation} \nonumber
m=2: \quad \left\{ \begin{array}{l}
 x_0^* \approx 0.338, \quad A_- \approx -2.804,  \\*[1mm]
 x_0^* \approx 0.137, \quad A_- \approx -0.932, \\*[1mm]
 x_0^* \approx 0.0592, \quad A_- \approx -0.546. \end{array}  \right.
\end{equation}
Notably all values of $A_-$ for $m=2$ are negative, whereas for $m=3,4$ and $5$ they are positive.

In addition to the detailed results for $m=2,\, 3, \, 4$ and $5$ shown in figure \ref{individual-maps},
we also show in figure \ref{all-sols-negative-t} the values of $x_0^*$ determined
for all values of $m$ from $m = 1$ to $m = 8$. For the primary red and blue branches, emanating from $m=3$
along the black branch, we show the corresponding values of $A_-$ in figure \ref{negative-t-A-minus}.
Intriguingly, the numerical results indicate that in addition to the exact solution -- which is valid
for all $m>1$ --- there are a whole host of additional solutions, some with $A_->0$, and others
with $A_-<0$. In particular, there is at least one additional trajectory corresponding to a
suitable solution for $H_-$ with $A_->0$ for all values of $m \gtrsim 2.978$. Further,
for all $m<3$ there exists at least one additional solution for $H_-$, although, in this case
for a value of $A_-<0$. Another noteworthy feature of the plots shown in panels (a)-(d) of
figure \ref{all-sols-negative-t} is that at each value of $m=(2N-1)$ for $N \in {\mathbb N}$
additional branches of solutions depart from the branch along which $x_0 = x_Q$ and $A_-=0$.
The underlying reason for this structure is as yet not understood.

\begin{figure}          \centering
\includegraphics[width=0.49\textwidth]{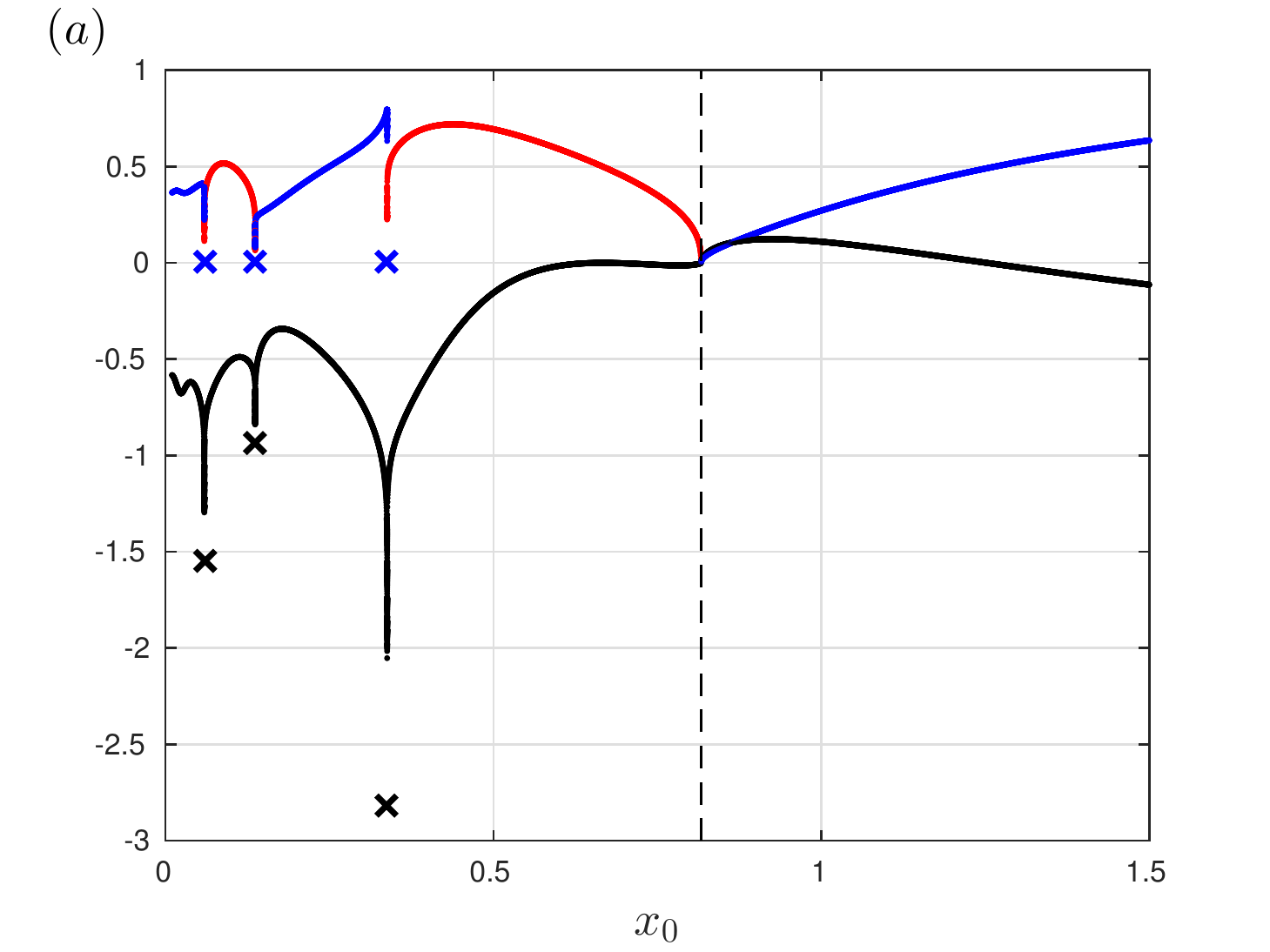}
\includegraphics[width=0.49\textwidth]{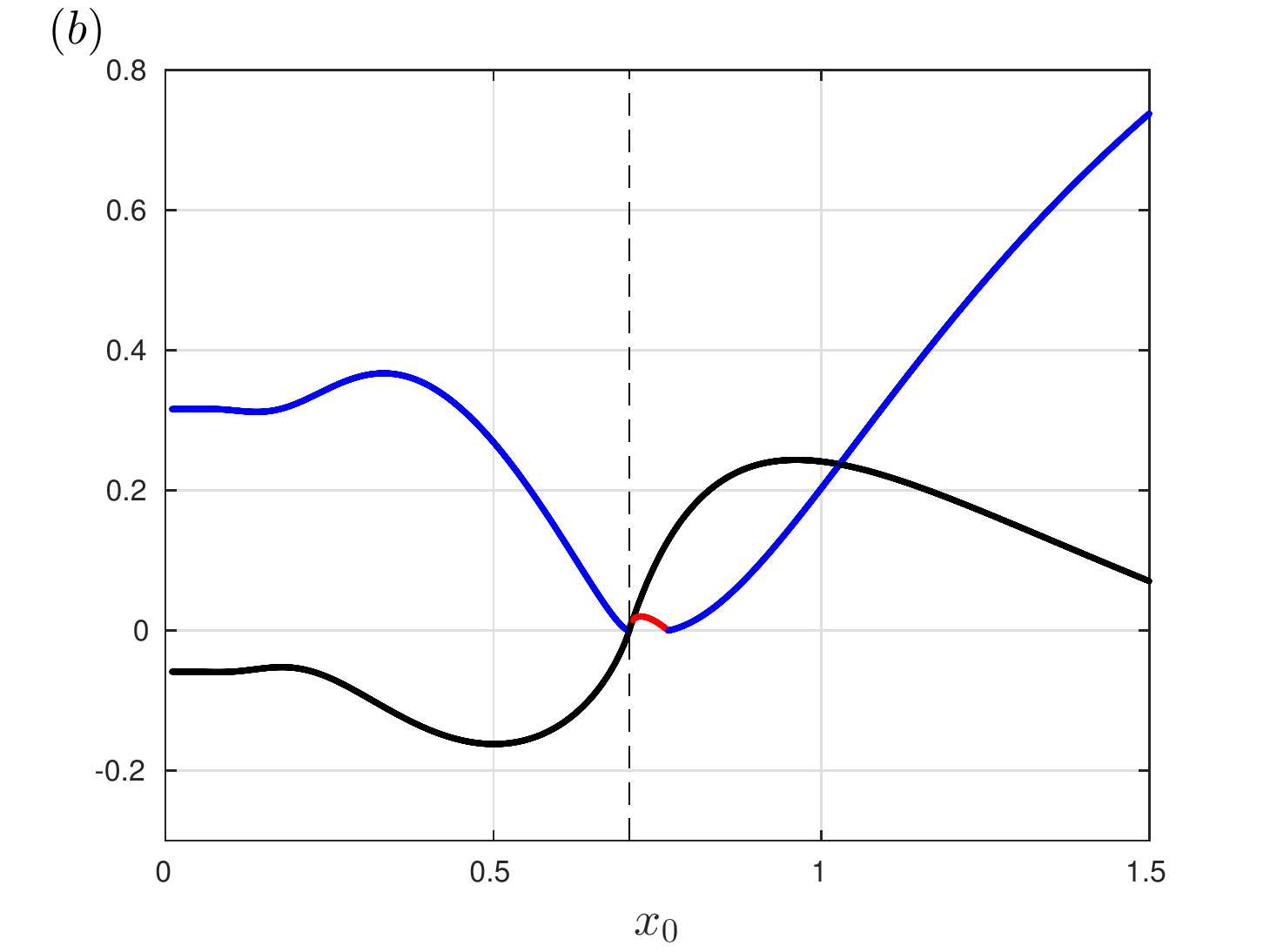}
\includegraphics[width=0.49\textwidth]{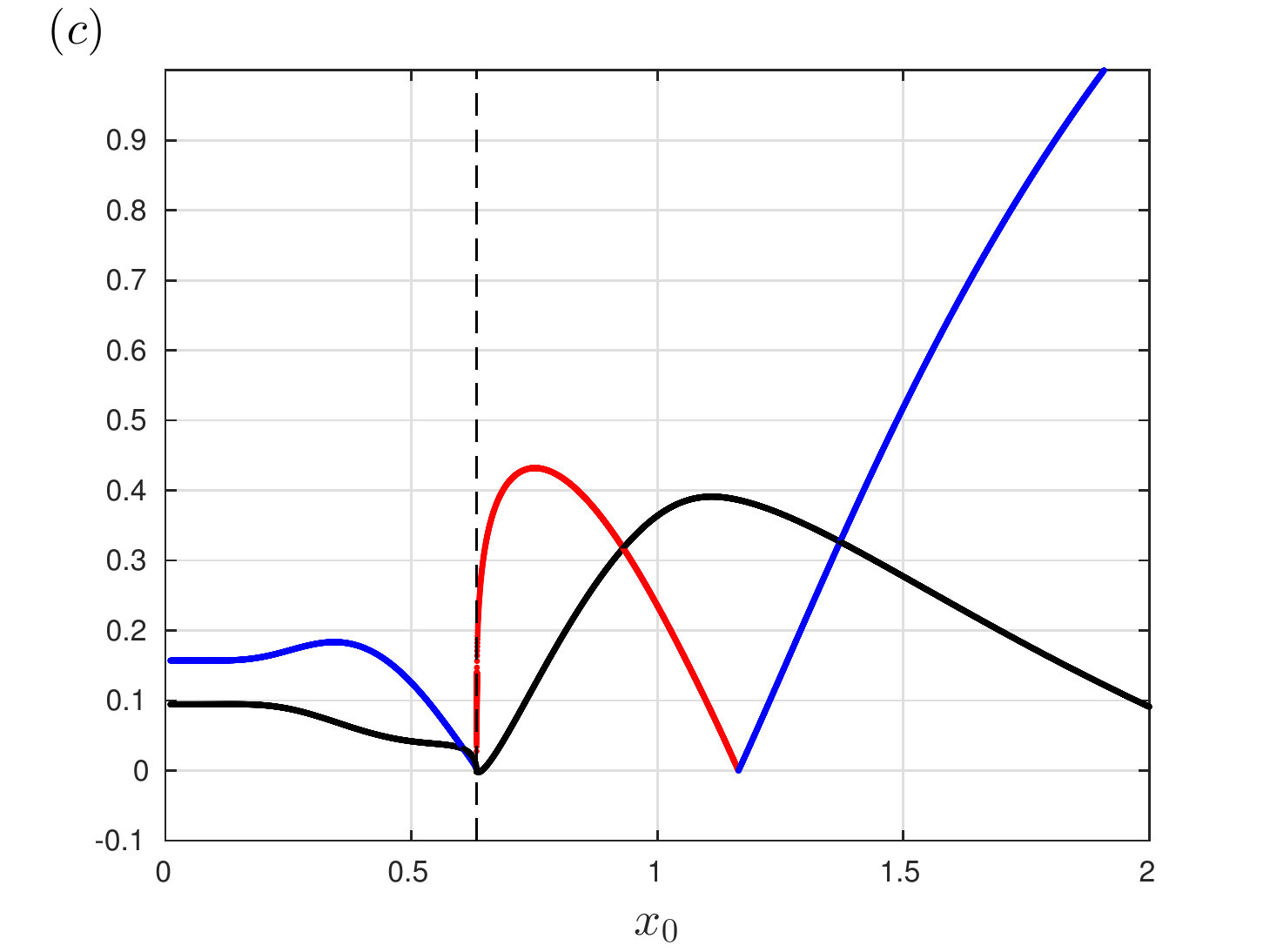}
\includegraphics[width=0.49\textwidth]{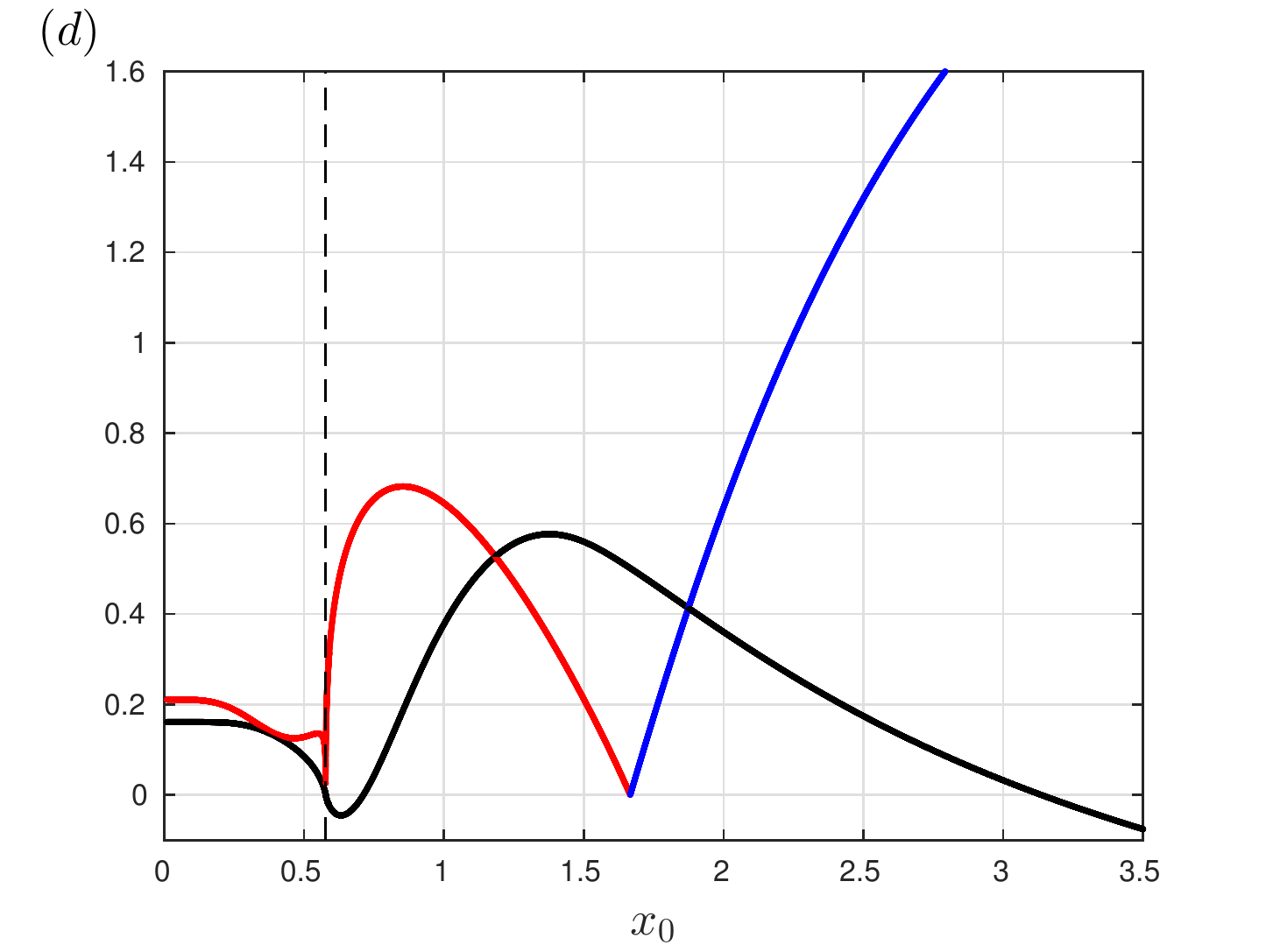}
\caption{Panels (a)-(d) show plots of the piecewise $C^1$ maps defined in (\ref{maps-negative})
for $m=2,3,4$ and $5$ respectively. In all cases the blue, red and black curves show the value
of $w$ at $u=0$, the value of $u$ at $w=0$ and the value of $\xi$ at the termination point
respectively. The dashed vertical line indicates the value of $x_0=x_Q$ corresponding to
the exact solution (\ref{exact-solution}). The crosses on panel (a) mark the data points
extracted using the extrapolation procedure discussed in the text.}
\label{individual-maps}
\end{figure}

\begin{figure}          \centering
\includegraphics[width=0.49\textwidth]{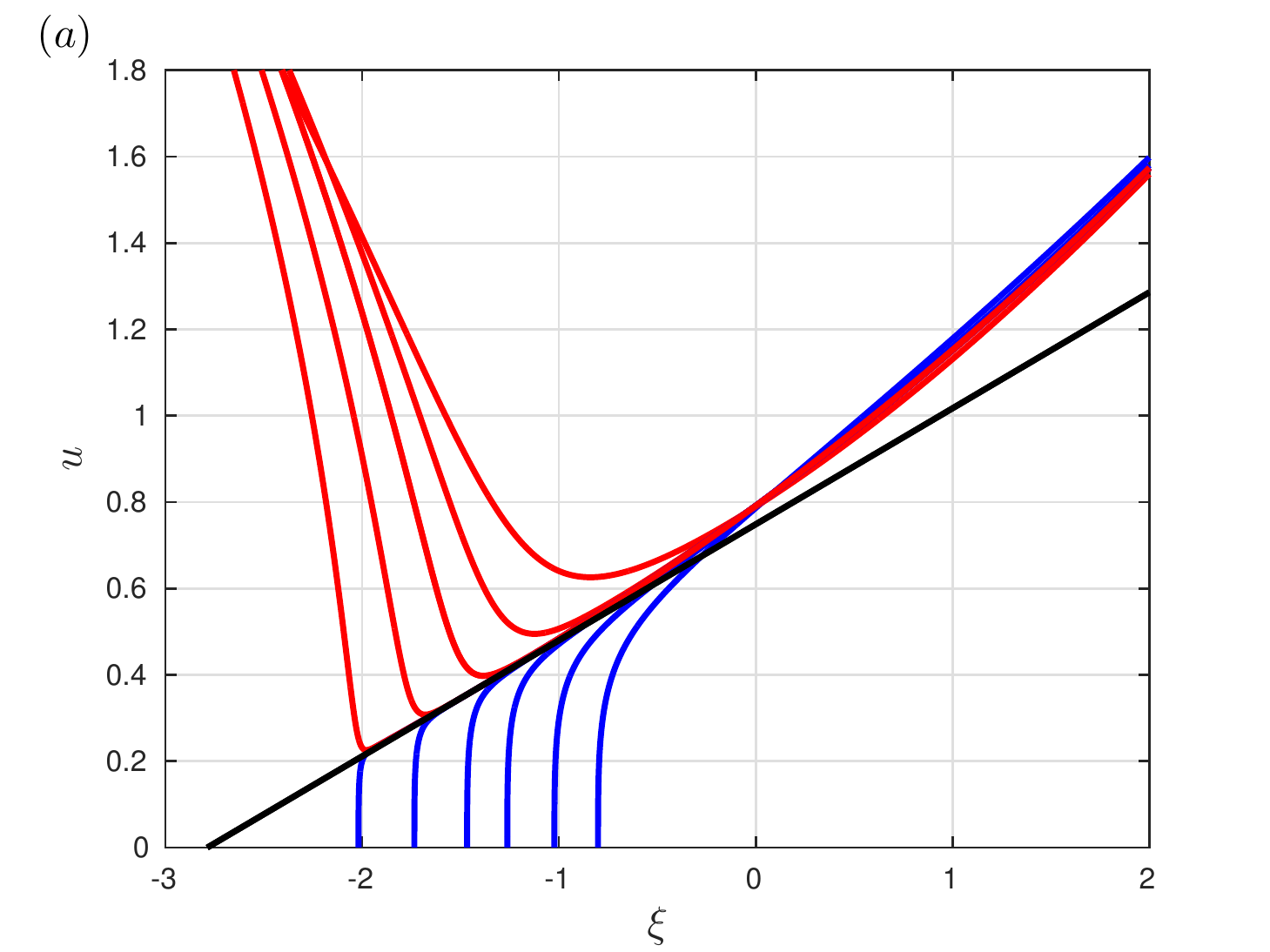}
\includegraphics[width=0.49\textwidth]{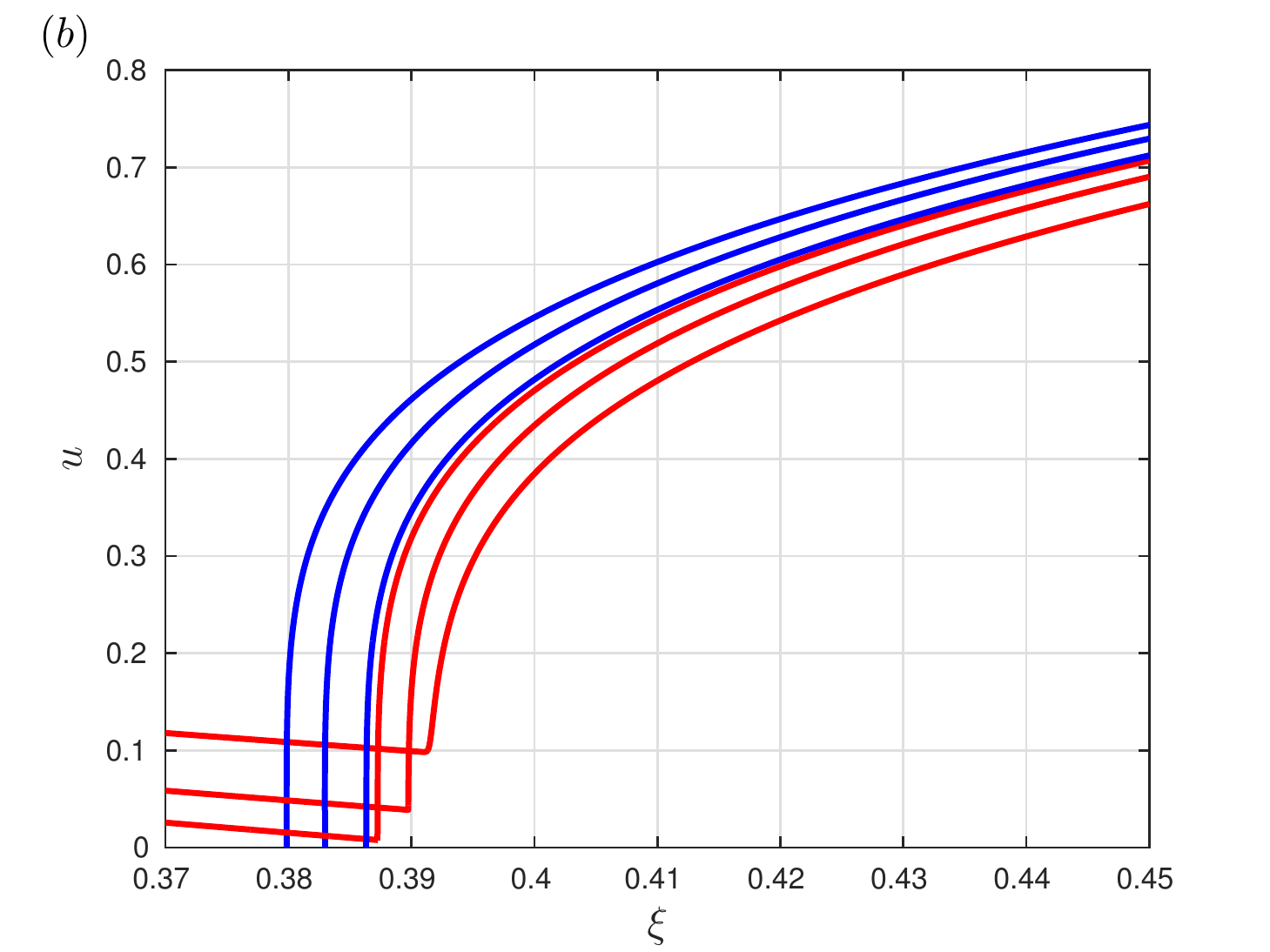}
\caption{Panel (a) shows some representative trajectories emanating from $x_0^* \approx 0.338$ for $m=2$.
More precisely the red and blue trajectories begin at $x_0^* \pm \hat{\delta}$ for
$\hat{\delta} = 10^{-2}, 10^{-4}, 10^{-6}, \, 10^{-8}, \, 10^{-11}$ and $10^{-14}$.
The black line shows the artificially extrapolated linear behaviour. Panel (b)
shows some representative trajectories emanating from $x_0^* \approx 1.165$ for $m=4$.
In this case, the trajectories begin at $x_0^* \pm \hat{\delta}$ for $\hat{\delta} = 10^{-1}, \, 10^{-2}$
and $10^{-3}$. Notably, in the latter case, despite only resolving $x_0^*$ to 3 significant digits,
a good estimate of $A_-$ has already been obtained.}
\label{example-shooting}
\end{figure}

\begin{figure}          \centering
\includegraphics[width=\textwidth]{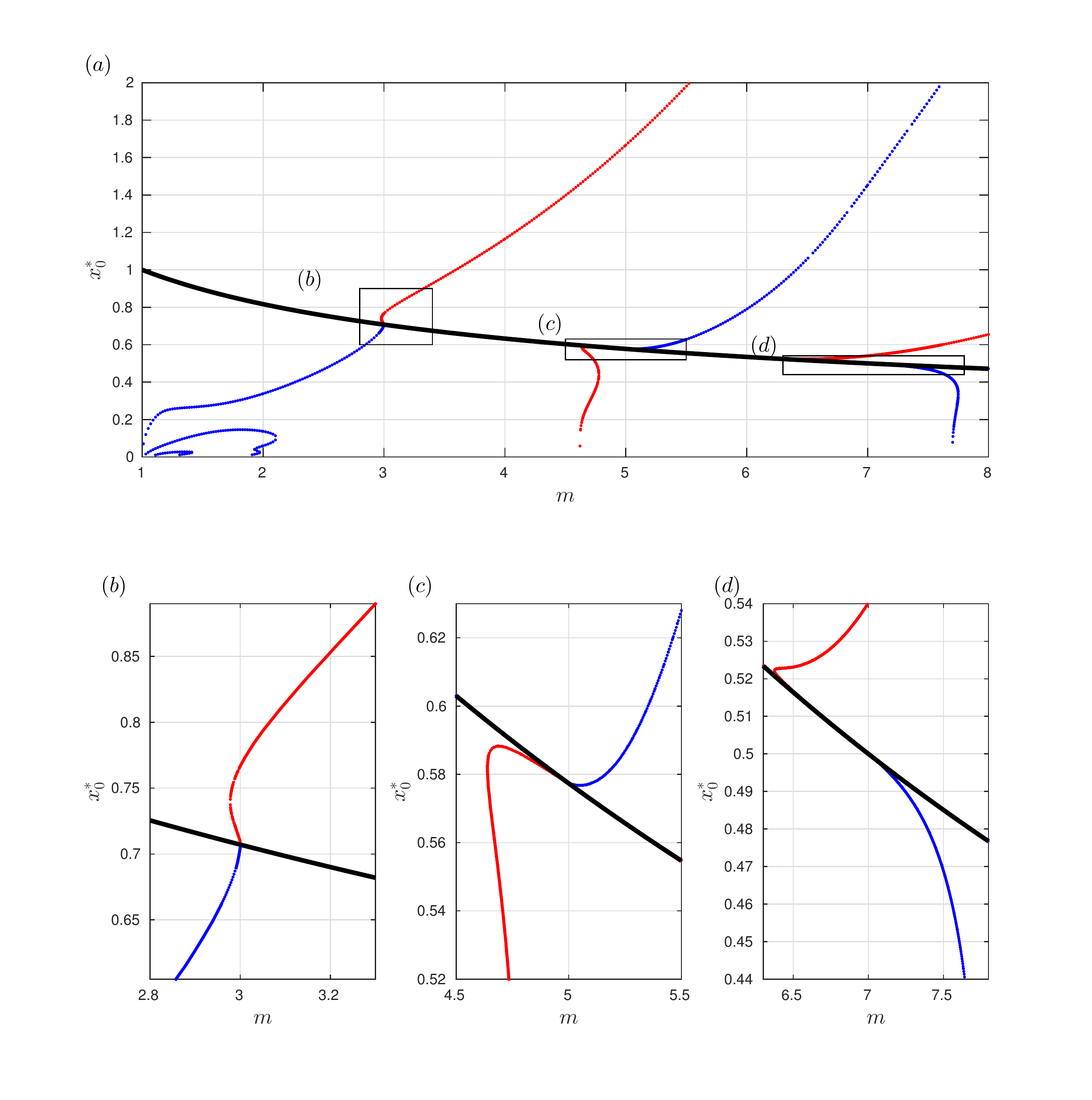}\\
\caption{The variation of $x_0^*$ versus $m$. The red, blue and black curves indicates values of $x_0^*$ that define trajectories terminating at the near-field equilibrium point with $A_- > 0$, $A_- < 0$, and $A_- = 0$ respectively. Panels (b)-(d) show zoomed-in regions from panel (a) near $m = 3$, $m = 5$, and $m = 7$ respectively.}
\label{all-sols-negative-t}
\end{figure}

\begin{figure}          \centering
\includegraphics[width=0.49\textwidth]{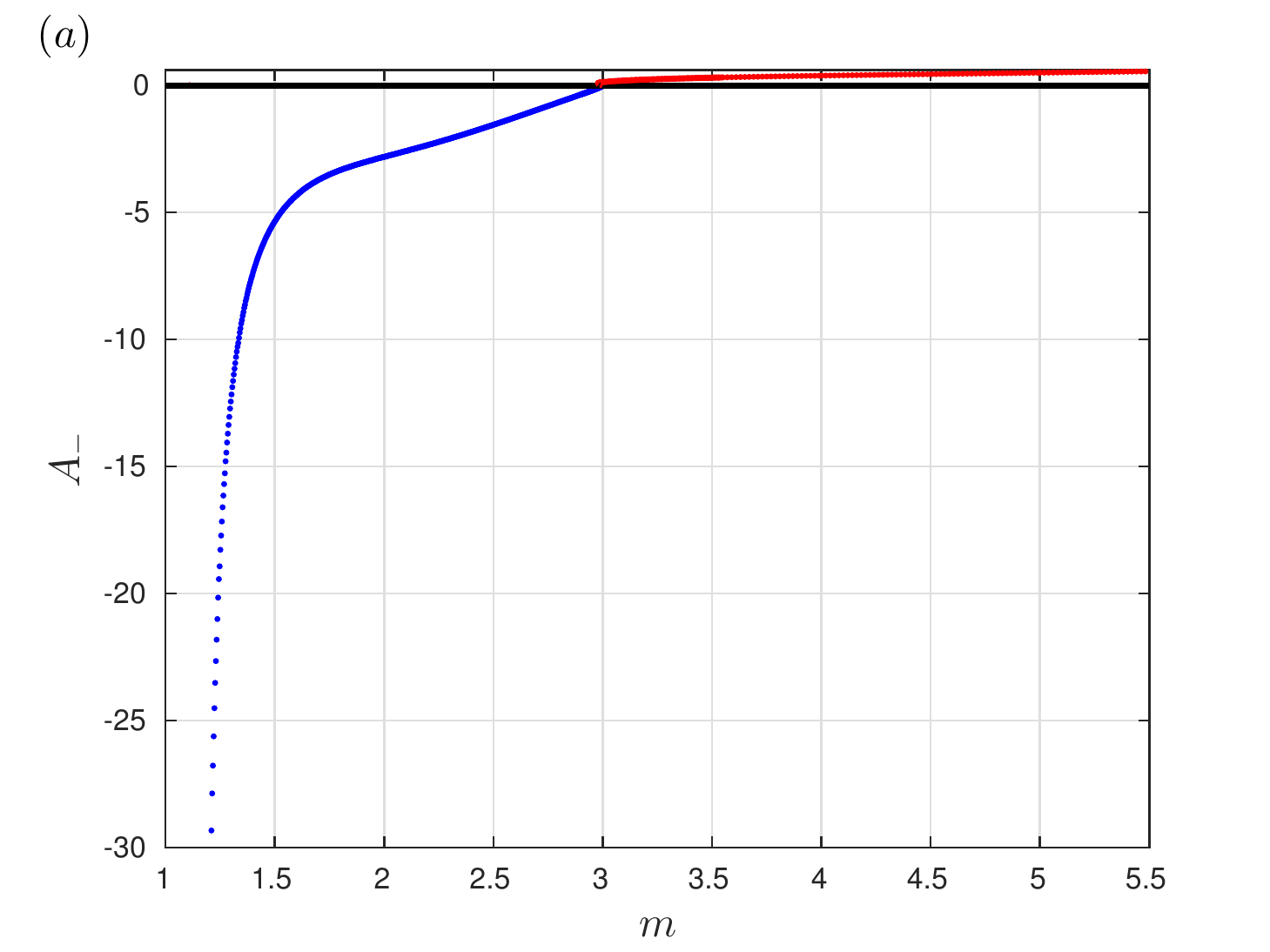}
\includegraphics[width=0.49\textwidth]{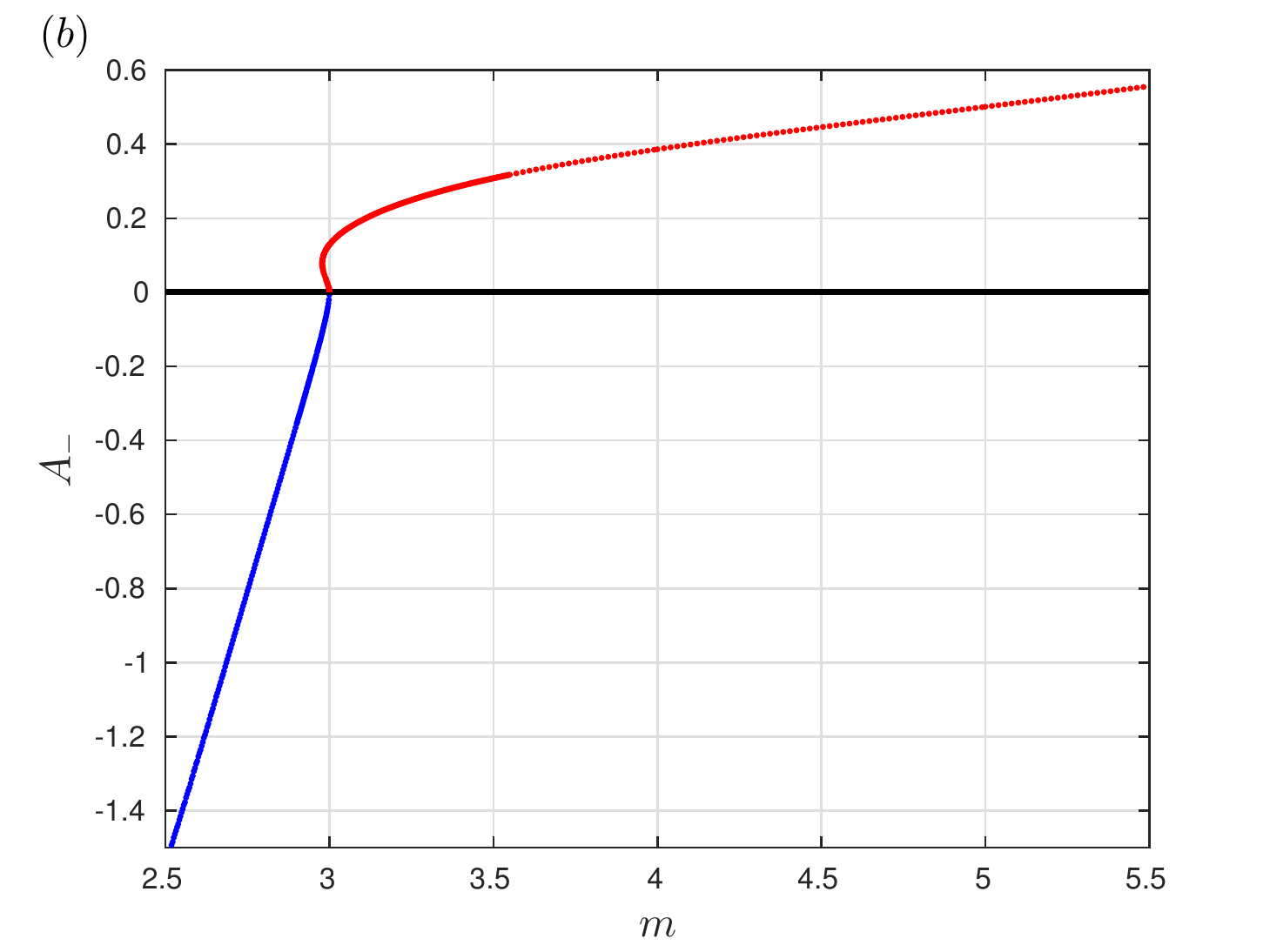}
\caption{Panel (a) shows the variation of $A_-$ along the red and blue curves emanating from the black curve near $m=3$. Panel (b) shows the same plot zoomed in on positive values of $A_-$.}
\label{negative-t-A-minus}
\end{figure}

\begin{figure}[h!]        \centering
\includegraphics[width=0.49\textwidth]{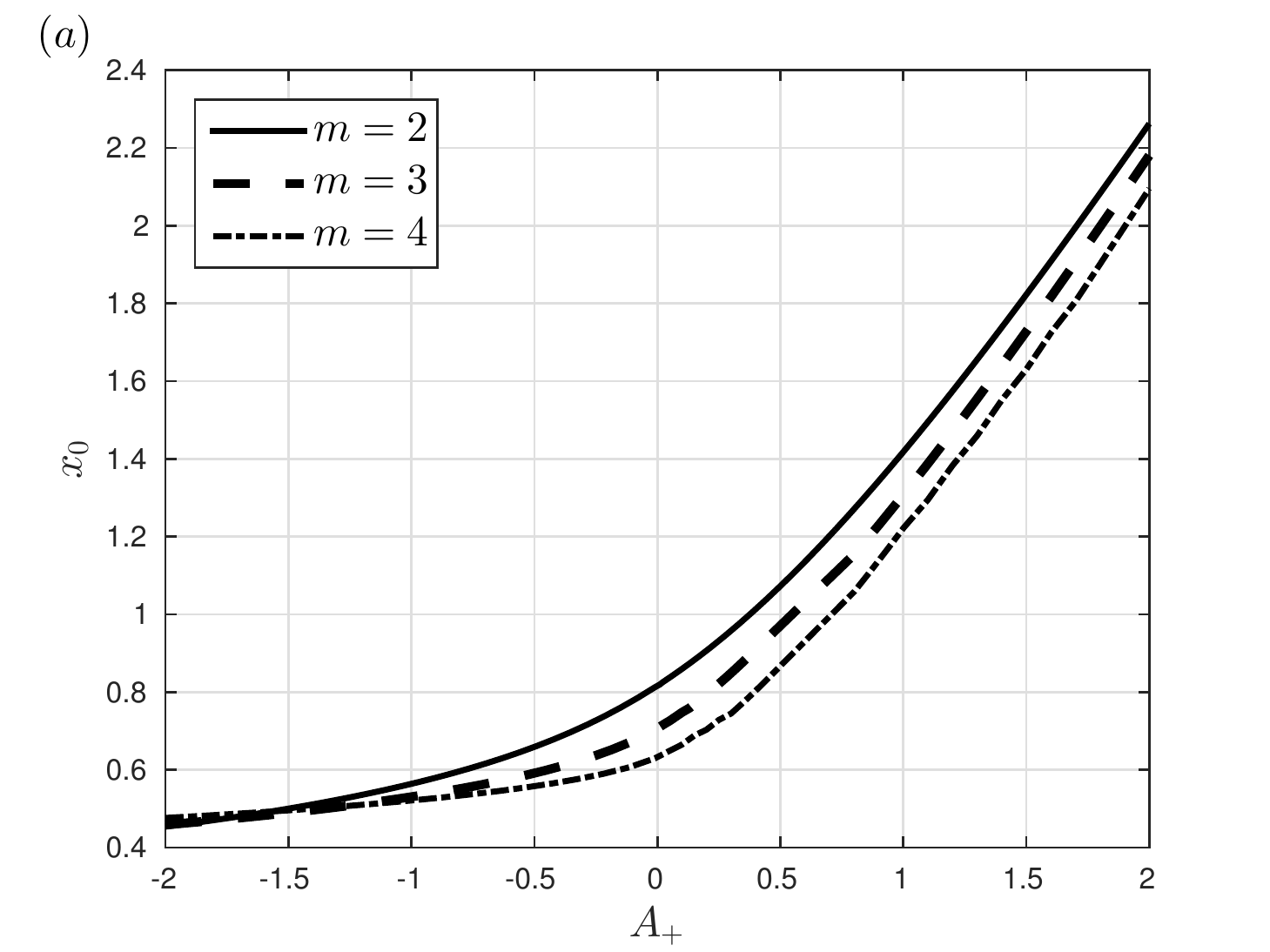}
\includegraphics[width=0.49\textwidth]{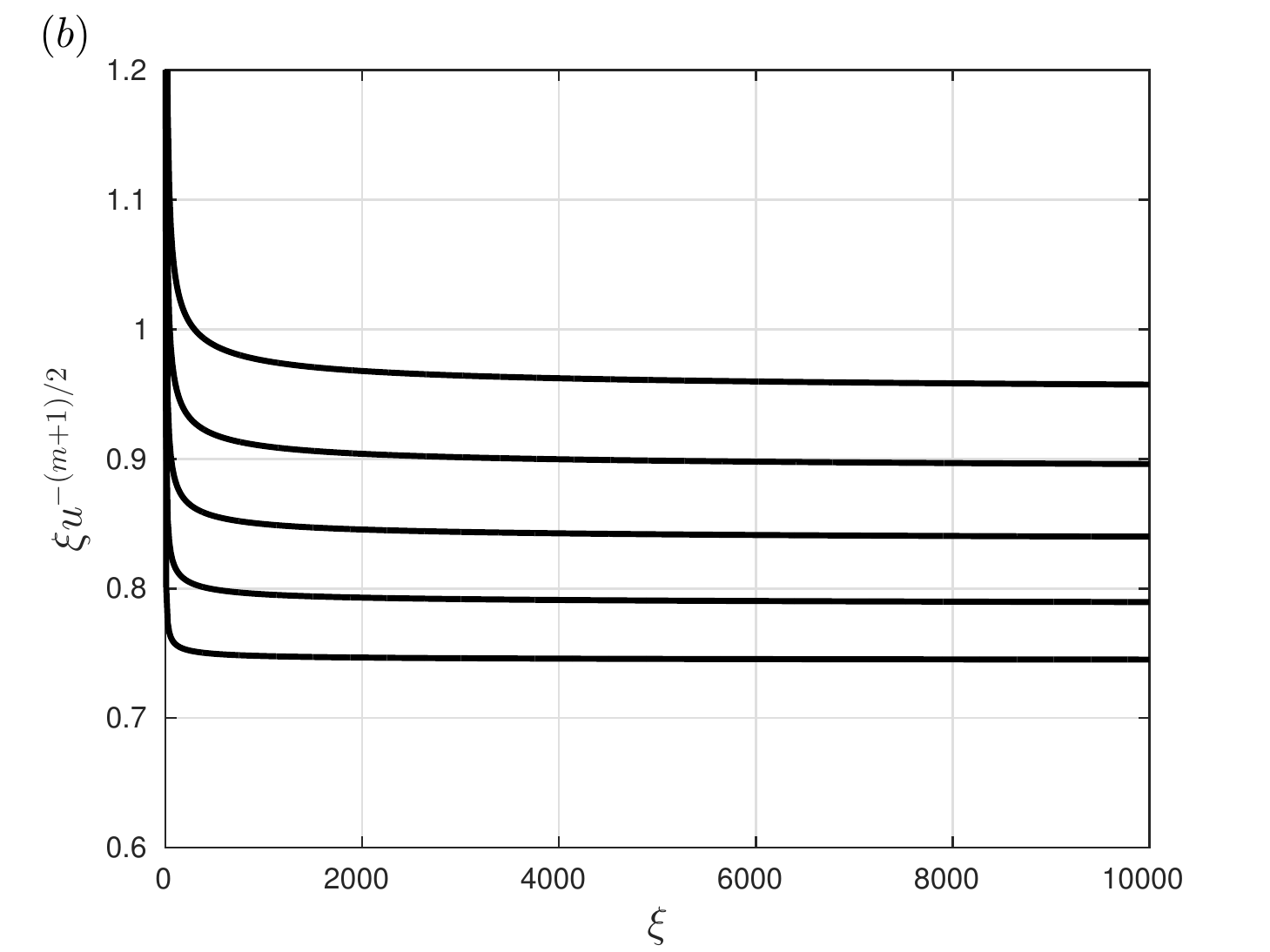}
\caption{Panel (a):  Plots of the variation of $x_0$ with $A_+$ for various different values of $m=2,\, 3$ and $4$.
Panel (b): Plots of the trajectories emanating from $A_+ = 0.1,\, 0.2, \, 0.3, \, 0.4$ and $0.5$ for $m=3$.
The constant to which these trajectories tend in the far-field is selected to be the corresponding value of $x_0$. }
\label{t-positive-map}
\end{figure}

\subsection{\label{t-pos-solutions}Solutions for $H_+$ ($t>0$)}

Having successfully found suitable solutions for $H_-$, we now proceed to
compute suitable solutions for $H_+$. As discussed in \S\ref{H-plus-connect},
we can numerically construct a unique trajectory from small to infinite values of $H_+$.
To do so, we integrate the system (\ref{zero-dynamics}) from the equilibrium point $(A_+,0,0)$
in the near-field towards the equilibrium point $(x_0,0,0)$ of the system (\ref{infinity-dynamics})
in the far-field.  The numerical procedure is carried out as follows:

\begin{remunerate}
\item Select a value of $A_+ \in {\mathbb R} \backslash \{0\}$. Since $(A_+,0,0)$ is an equilibrium point,
it is not possible to escape $(A_+,0,0)$ in finite time. We therefore begin integration of the system (\ref{zero-dynamics})
by taking a small step, say $\epsilon$, away from $(A_+,0,0)$ using the relevant asymptotic behaviour.
Using (\ref{center-manifold-zero}) and (\ref{center-manifold-zero-dynamics}) for $A_+ > 0$, we find that
a trajectory exiting the equilibrium point along the center manifold, $W_c(A_+,0,0)$, has the local asymptotic behaviour
\begin{equation}
\label{numeric-begin-t-positive-positive}
\left\{ \begin{array}{l}
\xi = A_+ + \epsilon, \\
u = \left( \frac{m+1}{2} A_+ \right)^{-1} \epsilon + \mathcal{O}(\epsilon^2), \\
w = \left( \frac{m+1}{2} A_+ \right)^{-(m+1)} \epsilon^m + \mathcal{O}(\epsilon^{\min\{m+1,2m-1\}}),\end{array} \right.
\quad \mbox{for} \quad A_+>0,
\end{equation}
for a small positive value of $\epsilon$. By contrast, using (\ref{stable-manifold-zero}) and (\ref{stable-manifold-zero-dynamics}) for $A_+<0$,
we find that a trajectory along the unstable manifold $W_u(A_+,0,0)$ has the local asymptotic behaviour
\begin{equation}
\label{numeric-begin-t-positive-negative}
\left\{ \begin{array}{l}
\xi = -|A_+| + \epsilon, \\
u = \left( \frac{m+1}{2} |A_+| m \right)^{\frac{1}{m}} \epsilon^{\frac{1}{m}} + \mathcal{O}(\epsilon), \\
w = \frac{1}{m} \left( \frac{m+1}{2} |A_+| m \right)^{\frac{m+1}{m}} \epsilon^{\frac{1}{m}} + \mathcal{O}(\epsilon),\end{array} \right.
\quad \mbox{for} \quad A_+<0,
\end{equation}
for a small positive value of $\epsilon$.
Having selected values for both $A_+$ and $\epsilon$, either (\ref{numeric-begin-t-positive-positive})
or (\ref{numeric-begin-t-positive-negative}) define unique (pseudo-)initial conditions to begin integrating
the system (\ref{zero-dynamics}) in the direction of increasing time $\tau$ towards the far-field.

\item We proved in Corollary \ref{coral-1}, that the ultimate fate of all such trajectories,
in variables $(x,y,z)$, is approaching the equilibrium state $(x_0,0,0)$ for some $x_0 \in [0,\infty)$.
Thus, by continuing integration of the system (\ref{zero-dynamics}) to some large value of $\tau$,
denoted by say $\tau_\infty$, and reading off the value $\xi u^{-(m+1)/2} \approx x_0$ at $\tau = \tau_\infty$,
we can obtain an arbitrarily accurate approximation of the corresponding value of $x_0$ that is obtained
in the far-field --- a higher degree of accuracy can be achieved by simply increasing the value of
$\tau_\infty$. For this purpose we found {\tt ode45} with the majority of the default setting
to be sufficiently robust. To ensure high numerical accuracy, at the cost of a relatively small
increase in computation time, both {\tt AbsTol} and {\tt RelTol} were decreased to $10^{-10}$.
In contrast to the case for solutions $H_-$, we found it unnecessary to `switch' from integrating
the near-field system (\ref{zero-dynamics}) to the far-field system (\ref{infinity-dynamics}).
Typically, we found that taking  $\tau_\infty \in (10^4,10^5)$ gave an approximation of $x_0$ correct to 8 significant digits.
\end{remunerate}

Carrying out this procedure for a variety of choices of $A_+$ we are able to trace out
the form of the piecewise $C^1$ map between $A_+$ and $x_0$ defined earlier in (\ref{two-maps}).
In figure \ref{t-positive-map}, we show this map for $m=2, \, 3$ and $4$ (see panel (a)), as well as
some representative trajectories of the system (\ref{zero-dynamics}) for $m=3$ (see panel (b)).
In addition to the results shown, other computations for different values of $m$ were also
carried out and it appears generic that $x_0$ is a monotonically increasing function of $A_+$.
Crucially, it appears that range of the map (\ref{two-maps}) is the entire semi-axis $\mathbb{R}^+$ for $x_0$.

\subsection{\label{num-sum}Summary of numerical results}

We have demonstrated that: (i) for each value of $m>1$ there exists
\emph{at least} one value of $x_0=x_0^*$ (different from the trivial case $x_0=x_Q$)
that defines a trajectory emanating from $(x_0^*,0,0)$ and terminating at a $(A_-,0,0)$,
and thus a suitable solution for $H_-$, and; (ii) for every value of $A_+ \in {\mathbb R}$
there exists a unique corresponding value of $x_0$, thereby defining an infinite family of
suitable solutions for $H_+$. The one remaining step is therefore to invoke the matching
condition (\ref{far-field-matching}). This condition is equivalent to requiring that the
far-field behaviour of $H_+$ is characterized by $x_0=x_0^*$. Thus, given a solution for
$H_-$, the matching condition (\ref{far-field-matching}) specifies a unique choice of
$x_0=x_0^*$, a unique $A_+$, and thus a unique solution for $H_+$, thereby closing the problem.

The solutions found here for $m=3$ and $4$ show a qualitative,
although not quantitative, agreement with those reported in \cite{Foster}. Here, we found that
\begin{equation} \label{m-is-3-vals}
m = 3 : \quad A_- \approx 0.129, \quad A_+ \approx 0.154 \quad \mbox{and} \quad x_0^* \approx 0.767
\end{equation}
and
\begin{equation} \label{m-is-4-vals}
m = 4 : \quad A_- \approx 0.386, \quad A_+ \approx 0.794 \quad \mbox{and} \quad x_0^* \approx 1.165.
\end{equation}
In Foster \emph{et. al.} \cite{Foster}, they claimed that for $m=3$: $A_- \approx 0.144$,
$A_+ \approx 0.0958$, and $x_0^* \approx 0.765$, whereas for $m = 4$: $A_- \approx 0.386$,
$A_+ \approx 0.341$, and $x_0^* \approx 0.980$. Additionally, they claim another suitable solution for $m=2$: $A_- \approx 0.00135$, $A_+ \approx 0.0102$, and $x_0^* \approx 0.817$. In contrast, here we found that no such solution with $A_->0$ exists. Notably this value of $x_0^*$ reported in \cite{Foster} for $m = 2$ is very close
to the value of $x_Q = \sqrt{2/(m+1)}$. For $m=2$, using our numerical approach we have been able to identify three
other solutions with $A_-<0$, namely:
\begin{equation} \label{m-is-2-vals}
m=2: \quad \left\{ \begin{array}{l}
 A_- \approx -2.804, \quad A_+ \approx -4.322, \quad x_0^* \approx 0.338,  \\*[1mm]
 A_- \approx -0.932  , \quad  A_+ \approx -30.625,  \quad x_0^* \approx 0.137, \\*[1mm]
 A_- \approx -0.546, \quad A_+ \approx -166.623, \quad x_0^* \approx 0.0592. \end{array}  \right.
\end{equation}
We believe that the origin of these discrepancies is due to the low accuracy of the numerical scheme
used in \cite{Foster}. Indeed, in \cite{Foster}, the solutions for $H_-$ were computed by identifying
the value of $A_-$ which characterizes solutions in the near-field that extend into far-field with
the requisite behaviour, as in panel (a) of figure \ref{return-to-zero}. Solutions for $H_+$ were
computed by finding the value of $A_+$ inferred (via shooting from the far-field toward the near-field)
by invoking the matching condition in the far-field as in panel (b) on figure \ref{t-positive-map}.
Both numerical methods used in \cite{Foster} are ill-posed.
Here, we pose the numerical problem as a shooting scheme for uniquely defined
piecewise $C^1$ scalar functions, \emph{i.e.} the maps defined in (\ref{two-maps}) and (\ref{maps-negative}).
We therefore believe that the results obtained here are more reliable than those in \cite{Foster}.

\section{\label{conc}Conclusion}

\begin{figure}[h!]        \centering
\includegraphics[width=0.49\textwidth]{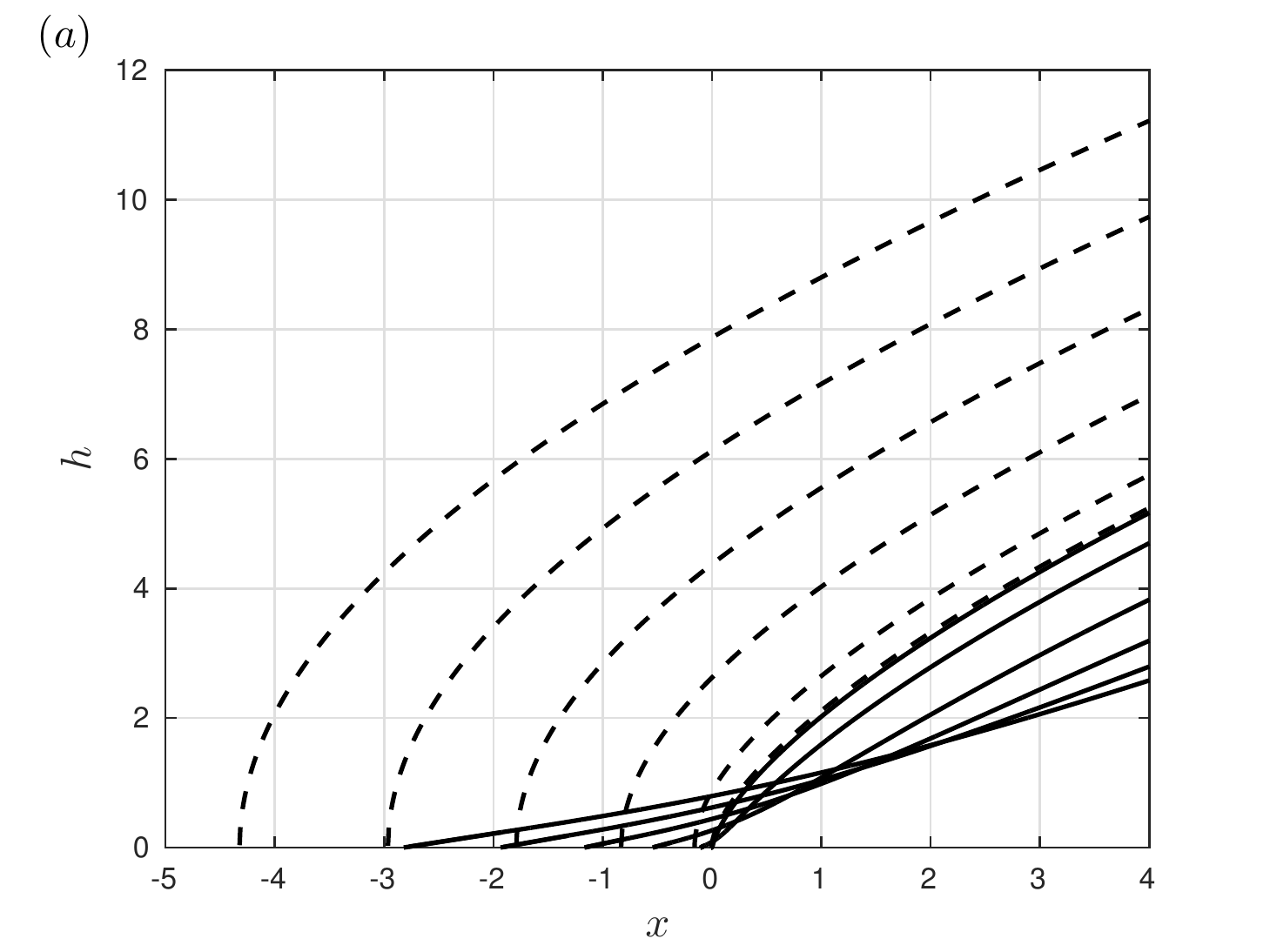}
\includegraphics[width=0.49\textwidth]{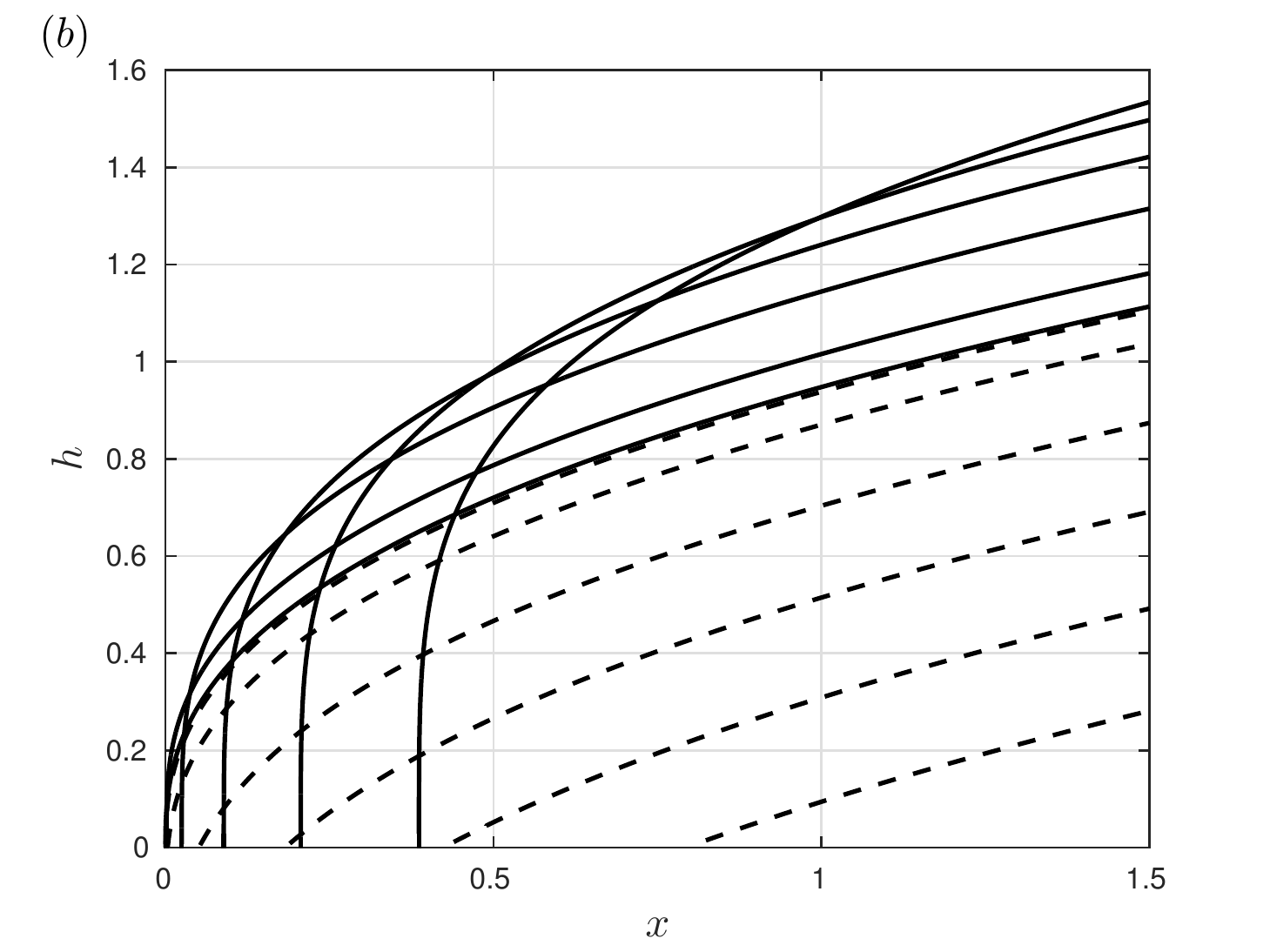}
\caption{Representative plots of $h(x,t)$ at 10 equally spaced values of $t$ between $-1$ and $1$, 
and for $t = \pm 10^{-3}$ to demonstrate the continuity of $h(x,t)$ across $t=0$. Solid and dashed 
curves show the solution for $t<0$ and $t>0$ respectively. Panel (a) shows the anti-reversing dynamics 
for $m=2$, $A_- \approx -2.804$, $A_+ \approx -4.322$ and $x_0^* \approx 0.338$. Panel (b) shows the reversing 
dynamics for $m=4$, $A_- \approx 0.386$, $A_+ \approx 0.794$ and $x_0^* \approx 1.165$.}
\label{generic-figure-name}
\end{figure}

This work has focused on constructing local (in both space and time) self-similar reversing
and anti-reversing solutions to the nonlinear diffusion equation (\ref{heat}) with $m>1$ and $n=0$. We have
demonstrated how the dynamical theory combined with the numerical scheme can be used to furnish suitable solutions to the
differential equations (\ref{ode}) for $H_-$ and $H_+$. Via the self-similar reductions (\ref{reduction}),
the solutions to these differential equations can be transformed into physically meaningful solutions
for $h(x,t)$ to the nonlinear diffusion equation (\ref{heat}) with $m>1$ and $n=0$, which is completed 
with the no flux boundary condition (\ref{zero-flux2}) and the condition the $h|_{x=\ell(t)}=0$. 
In this final section we shall discuss the connection
between the self-similar solutions found here and the dynamics of the model (\ref{heat}).

It is well-known, and can be readily verified, that the nonlinear diffusion equation with absorption (\ref{heat})
admits two different travelling wave solutions. Seeking such solutions near a left interface
$x=\ell(t)$ that have the form $h(x,t) \sim A(t) (x-\ell(t))^\alpha$, for some function $A(t)$ and constant $\alpha$, we find that
\begin{eqnarray}
\label{non-lin-nose} h \sim (-m \dot{\ell})^{1/m} (x-\ell(t))^{1/m}  \quad \mbox{as} \quad x \searrow \ell(t),
\quad \mbox{for} \quad \dot{\ell} < 0,\\*[3mm]
\label{lin-nose} h \sim (\dot{\ell})^{-1} (x-\ell(t)) \quad \mbox{as} \quad x \searrow \ell(t),
\quad \mbox{for} \quad \dot{\ell} > 0.
\end{eqnarray}
The former, is an advancing wave local to a left interface whose motion is driven by
diffusion, whereas the latter is a receding wave driven by absorption. This study has
therefore elucidated the process by which the wave (\ref{non-lin-nose}) \emph{becomes} (\ref{lin-nose}), giving rise to a reversing interface, or vice versa, giving rise to an anti-reversing interface.

In \S\ref{t-neg-shooting} we used the result of Lemma \ref{lemma-continuation}
to numerically construct suitable solutions for $H_-$. For each value of $m>1$ we
identified \emph{at least} one suitable solution, defined by a pair of values of $A_-$ and $x_0^*$,
in addition to the exact solution (\ref{exact-solution}) --- in the original time and space variables,
this exact solution corresponds to a steady solution for $h(x,t)$ and thus does not constitute a
reversing nor an anti-reversing solution. In \S\ref{t-pos-solutions}, we used Lemma
\ref{lemma-connection} to formulate a numerical scheme for constructing solutions for $H_+$
defined by pairs of values of $A_+$ and $x_0^*$. We showed that the map (\ref{t-positive-map})
is one-to-one and its range is the entire semi-axis $\mathbb{R}^+$ for $x_0$. Importantly, for
each value of $m>1$, we found that: (i) if $A_+<0$ then $x_0<x_Q$, and (ii) if $A_+>0$ then
$x_0>x_Q$ where $x_Q=\sqrt{2/(m+1)}$. The final stage in constructing solutions for $h(x,t)$
is to invoke the matching condition (\ref{far-field-matching}) that ensures continuity of
$h(x,t)$ across $t=0$. Owing to the aforementioned properties of (\ref{t-positive-map})
we are forced to reject any solution for $H_+$ that is defined by a trajectory with $A_-<0$
and $x_0^*>x_Q$ or $A_->0$ and $x_0^*<x_Q$ on the basis that it necessarily cannot match to
solution for $H_-$. In summary we have found that for $1<m<3$ up to 5 different solutions
are available with $A_-,A_+<0$. For $m \gtrsim 2.97$ there is at least one solution with
$A_-,A_+>0$. For $7<m<7.75$ an additional branch of solutions with $A_-,A_+<0$ emerges, 
whereas for $m \gtrsim 6.42$, there is another branch of solutions with $A_-,A_+ > 0$.
For $m>8$ it seems quite possible that yet more branches of solutions will emerge.

There is a distinct difference between the interpretation of solutions with $A_-,A_+>0$
in terms of the original model (\ref{heat}) compared to those with $A_-,A_+<0$.
The former, correspond to a reversing solution where the left interface
advances for $t<0$, with the behaviour (\ref{non-lin-nose}), and then subsequently
recedes for $t>0$ with the behaviour (\ref{lin-nose}). Contrastingly, the latter
corresponds to an anti-reversing solution where the interface recedes for $t<0$,
with the form (\ref{lin-nose}), and then advances according to the form (\ref{non-lin-nose}) for $t>0$. 
One representative local solution for $h(x,t)$ for both types of behaviour --- 
one reversing and one anti-reversing --- are shown in figure \ref{generic-figure-name}.

Some natural open questions raised by this study are: (i) whether any self-similar solutions with non-monotone profiles (in $\xi$) exist --- \emph{i.e.} solutions that do not satisfy (\ref{(ii)}); (ii) whether the self-similar
solutions identified here are stable in the context of the model (\ref{heat}), and; 
(iii) if more than one reversing or anti-reversing solution is stable for
a particular value of $m$, what is the mechanism for selecting the appropriate
self-similar solution at a particular reversing or anti-reversing event.

\vspace{0.25cm}\emph{}

\noindent{\bf Acknowledgements.}
D.P. thanks M. Chugunova and R. Taranets, while J.F. thanks J. R. King and A. D. Fitt for
useful discussions regarding this project. J.F. is supported by a postdoctoral fellowship at McMaster University.
He thanks B. Protas for hospitality and many useful discussions. 
A part of this work was completed during the visit of D.P. to Claremont Graduate University.
The work of D.P. is supported by the Ministry of Education
and Science of Russian Federation (the base part of the state task No. 2014/133, project No. 2839).

\end{document}